\newtheorem{thm}{Theorem}[section]
\newtheorem{cor}[thm]{Corollary}
\newtheorem{lem}[thm]{Lemma}
\newtheorem{prop}[thm]{Proposition}
\theoremstyle{definition}
\newtheorem{defn}[thm]{Definition}
\newtheorem{ass}[thm]{Assumption}
\newtheorem{tass}[thm]{Temporary Assumption}
\theoremstyle{remark}
\newtheorem{rem}[thm]{Remark}
\numberwithin{equation}{section}
\newcommand{\filt}{\mathbb{F}}
\newcommand{\filtg}{\mathbb{G}}
\newcommand{\prob}{\mathbb{P}}
\newcommand{\tprob}{\widetilde{\mathbb{P}}}
\newcommand{\qprob}{\mathbb{Q}}
\newcommand{\reals}{\mathbb R}
\newcommand{\A}{\mathcal{A}}
\newcommand{\B}{\mathcal{B}}
\newcommand{\E}{\mathcal{E}}
\newcommand{\F}{\mathcal{F}}
\newcommand{\G}{\mathcal{G}}
\newcommand{\Hcal}{\mathcal{H}}
\newcommand{\OO}{\mathcal{O}}
\newcommand{\mcp}{\mathcal{P}}
\newcommand{\We}{\mathcal{W}}
\newcommand{\Y}{\mathcal{Y}}
\newcommand{\cz}{\check{Z}}
\newcommand{\eps}{\varepsilon}
\newcommand{\such}{\ | \ }
\newcommand{\probtriple}{(\Omega, \mathcal{F}, \mathbb{P})}
\newcommand{\basis}{(\Omega, \mathcal{F}, \mathbb{F}, \mathbb{P})}
\newcommand{\dfn}{\, := \,}
\newcommand{\rdfn}{\, =: \,}
\newcommand{\indep}{\perp\!\!\!\perp}
\newcommand{\util}{\mathcal{U}}
\newcommand{\imutil}{\mathcal{I}}
\newcommand{\dualutil}{\mathcal{V}}
\newcommand{\expv}[3]{\mathbb{E}^{#1}_{#2}\left[#3\right]}
\newcommand{\condexpv}[4]{\mathbb{E}^{#1}_{#2}\left[#3\big| #4\right]}
\newcommand{\expvs}[1]{\mathbb{E}\left[#1\right]}
\newcommand{\condexpvs}[2]{\mathbb{E}\left[#1\big| #2\right]}
\newcommand{\condprobs}[2]{\mathbb{P}\left[#1\big| #2\right]}
\newcommand{\nada}[1]{}
\newcommand{\wt}[1]{\widetilde{#1}}
\newcommand{\wh}[1]{\widehat{#1}}
\newcommand{\bra}[1]{\left[#1\right]}
\newcommand{\cbra}[1]{\left\{#1\right\}}
\newcommand{\ul}[1]{\underline{#1}}
\newcommand{\prodlog}[1]{\mathrm{PL}\left(#1\right)}
\def\jd{\textcolor{blue}}
\begin{document}
\title[vNM $U$]{Dynamic Equilibrium with Insider Information and General Uninformed Agent Utility}

\author{Jerome Detemple}
\author{Scott Robertson}
\address{Questrom School of Business, Boston, University, Boston MA, 02215, USA}
\email{detemple@bu.edu; scottrob@bu.edu}

\date{\today }

\begin{abstract}
    We study a continuous time economy where agents have asymmetric information.  The informed agent (``$I$''), at time zero, receives a private signal about the risky assets' terminal payoff $\Psi(X_T)$, while the uninformed agent (``$U$'') has no private signal.  $\Psi$ is an arbitrary payoff function, and $X$ follows a time-homogeneous diffusion. Crucially, we allow $U$ to have von Neumann-Morgenstern preferences with a general utility function on $(0,\infty)$  satisfying the standard conditions. This extends previous constructions of equilibria with asymmetric information used when all agents have exponential utilities and  enables us to study the impact of $U$'s initial share endowment on equilibrium. To allow for $U$ to have general preferences, we introduce a new method to prove existence of a partial communication equilibrium (PCE), where at time $0$, $U$ receives a less-informative signal than $I$. In the single asset case, this signal is  recoverable by viewing the equilibrium price process over an arbitrarily short period of time, and hence the PCE is a dynamic noisy rational expectations equilibrium. Lastly, when $U$ has power (constant relative risk aversion) utility, we identify the equilibrium price in the small and large risk aversion limits.
\end{abstract}

\maketitle

\section{Introduction}
The study of economies with asymmetric information has historically focused on the static Exponential-Gaussian framework with exogenous noise trading. In this model uninformed agents ($U$) and informed agents ($I$) have exponential (constant absolute risk aversion or ``CARA'') utility, and the private signal, asset terminal payoff and noise trader demand are jointly normally distributed. Recent advances extend results beyond Gaussian terminal payoffs (\cite{breon2015existence}) and to dynamic, diffusion-based models where noise traders display  endogenous behavior (\cite{MR4192554}). However, the CARA utility structure is retained. Presently, we generalize \cite{MR4192554} to allow the uninformed agent to have von Neumann-Morgenstern (vNM) preferences with a general utility function on $(0,\infty)$ satisfying standard conditions. In this setting, we examine the existence and properties of equilibria.

The information conveyed by prices plays a fundamental role in economics. It determines the investment opportunities of agents in markets and informs managerial decision-making within firms. The degree of informational efficiency of prices, i.e., the extent to which prices reveal information disseminated in the economy, is widely thought to be a critical ingredient for smooth market and economic operations. Yet, studies of economies with asymmetric information, even those providing recent advances, remain limited to the CARA utility model, thus raising questions about the broad applicability of conclusions. Generalizations along the preference dimension are therefore desirable in order to assess the robustness and limits of prior results. This paper takes a step in this direction. While it maintains the CARA behavior of informed agents $I$ and noise traders $N$, it allows for uninformed agents $U$ with vNM preferences where the utility function satisfies standard monotonicity, concavity, Inada and asymptotic elasticity conditions.

The paper contributes on three fronts. First, we prove the existence of a partial communication equilibrium (PCE) in which a noisy version of the informed agent's signal is released at the outset. The PCE (as opposed to an equilibrium where all agents have identical information) is of interest as it preserves the informed agent's informational advantage. This result is valid under mild conditions upon the underlying diffusion $X$, terminal payoff function $\Psi$, uninformed agent utility function, and initial share endowments, see Assumptions \ref{A:X_SDE}, \ref{A:Psi}, \ref{A:U}, and \ref{A:init_pos} below. An important feature of this equilibrium, and one that does not hold under  CARA utility, is that when the uninformed agent has a general utility function, his policy depends on his initial wealth, and hence on the initial price vector, as agents are endowed with exogenous initial share endowments, rather than exogenous initial wealth. It follows that the equilibrium price, as well as its volatility and the market price of risk, at any date, depend on the initial endowment as well. This feedback effect leads to a non-trivial fixed point problem for the stochastic discount factor, that we resolve. In addition, the lack of an explicit optimal portfolio formula for the uninformed renders ineffective previous approaches for the construction of equilibria with asymmetric information. We introduce a new method to overcome this difficulty. Therefore, our results enable the study of endogenous endowment effects on equilibrium quantities, something that is not possible in the CARA setting.  Identifying endowment effects is our primary motivation for extending beyond CARA preferences, and as such, our results are not simply a technical extension of the CARA setting.

Second, focusing on the single asset case, we show the PCE is in fact a dynamic noisy rational expectations equilibrium (DNREE). What this means is that the filtration used by $U$ (the market filtration) coincides with the right filtration generated by the underlying economic factor process (diffusion $X$) and the equilibrium price process. This in turn, provides a mechanism for information communication through the endogenously determined equilibrium price. That the endogenous price instantaneously reveals the public signal holds because the price path over any arbitrarily small interval $[0,\eps]$ cannot be the same for different realizations of the signal.



Third, we derive a quasi-explicit (in terms of the Lambert function) solution when $U$ has constant relative risk aversion (CRRA), i.e., when the economy is comprised of a mixture of CRRA and CARA agents. In this context, we identify the equilibrium price in the limit as relative risk aversion either explodes or vanishes. When $U$'s risk aversion goes to infinity, the agent seeks certainty, implying equilibrium is driven by the behavior of the two CARA agents. The limit equilibrium price is then the same as in an economy with just the two CARA agents. When $U$'s risk aversion vanishes, the limit equilibrium price does not correspond to the risk neutral price, as might have been intuitively expected, and is the case when all agents have CARA utilities. The reason it departs from it is because CRRA utility enforces a non-negative wealth constraint, which does not vanish in the small risk aversion limit.

Methodologically, we contribute by relating an equilibrium where agents have differing information sets, to one where agents have identical information but differ in their beliefs and/or random endowments.  Heuristically, this is because for two random variables $X,G$ such that $\condprobs{G\in dg}{X} \sim \prob\bra{G \in dg}$ almost surely with density $p^g(X)$,  for a given function $\phi$ one has $\condexpvs{\phi(X,G)}{G}  = f(G;\phi)$ where $f(g;\phi) = \expvs{\phi(X,g)p^g(X)}$. Thus, if one views $\phi$ as a control lying in a set $\A(G)$ of controls which may depend upon $G$, it is natural to solve the problem $\sup_{\phi\in \A(G)}\condexpvs{\phi(X,G)}{G} = \sup_{\phi\in \A(G)} f(G;\phi)$ by first solving the problem $\sup_{\phi\in \A(g)} f(g;\phi)$ for fixed $g$, even if $\prob\bra{G=g}= 0$ for all $g$. This perspective motivates us to first establish the existence of a ``signal-realization'' equilibrium where agents have common information, using classical results on optimal investments in complete markets.\footnote{See Remark \ref{R:signal_realization} for additional discussion on how we implement the above idea.}  We then lift the signal realization equilibrium to the signal level, by using delicate measurability arguments for parameter driven stochastic integrals, as can be found in \cite{MR510530,MR1775229, MR3758346,MR4192554}. The key feature of our method is that it does not impose global (i.e., at the signal level) integrability conditions upon the insider and uninformed agents terminal wealth. This is crucial, because unlike when all agents have CARA preferences, the uninformed agent's optimal position is no longer affine in the market signal and in fact we do not have explicit knowledge of the ``signal to strategy'' map. As a result, it becomes prohibitively difficult to verify the standard global integrability conditions used in optimal investment problems. However, conceptually it does not make sense to require global integrability, as the agents are conditioning upon the signals and hence the signals should be thought of as constants. Our method of lifting from the signal-realization level to the signal level is used precisely to incorporate this thinking and overcome the difficulty. We hope this methodology can be used in the future to extend beyond the CARA insider case.

The paper relates to three branches of the asset pricing theory literature. First, it connects directly with \cite{MR4192554}, which proves existence and properties of equilibria in dynamic CARA-diffusion models with bounded rational noise trading. That paper shows in particular existence of a DNREE where the initial price (vector) reveals a noisy aggregate of the informed (vector) signal if the (vector) payoff function is linear in the underlying (vector) factor, or in the one asset case if it is monotonic with respect to the single underlying factor. Here, we extend the analysis by allowing for uninformed agents with general utility functions satisfying standard Inada and elasticity conditions and arbitrary payoff function satisfying the Jacobian condition. As indicated above, this extension implies feedback effects from state prices to endowment values and introduces non-trivial integrability issues, leading to substantial complications for showing the existence of equilibria with endogenous information structures. We also show that the PCE is a DNREE in the weaker sense of equivalence of right filtrations, in the single asset case, and prove that a vehicle for instantaneous revelation is the asset price. Equilibria display the same informational content as in the pure CARA model, establishing robustness along the uninformed agent's preferences dimension. An additional aspect of our contribution is to permit the analysis of behavioral traits of uninformed agents, such as risk aversion, on equilibrium prices and allocations. Our results uncover differences in behavior relative to the CARA model.

Second, it relates more generally to the broad literature on asymmetric information. Pioneering contributions by \cite{grossman1976efficiency} and \cite{hellwig1980aggregation} respectively show the existence of a fully revealing and a noisy rational expectations equilibrium in the static CARA-Gaussian setting. An early departure from that classic setting can be found in \cite{radner1979rational}, a paper that introduces the notion of full communication equilibrium (FCE) and establishes its relation to the concept of rational expectations equilibrium. Our notion of PCE, introduced in \cite{MR4192554}, is a direct extension of FCE. Recent contributions, while maintaining the CARA setting, provide extensions to non-Gaussian distributions. Notable studies along these lines include \cite{breon2015existence} and \cite{chabakauri2021}. Our analysis differs in terms of generality pertaining to the uncertainty structure, the uninformed agent's preferences with non-CARA utility, and the temporal setting. The second aspect implies price dynamics that depend on initial prices, a significant complication. The third element mandates that markets be proven to be endogenous complete. Our proof of existence overcomes these two hurdles 

Finally, it also relates to the general literature on existence and uniqueness of Arrow-Debreu equilibria in the standard model with homogeneous information, notably \cite{hugonnier2012endogenous}, \cite{MR3583456} and \cite{MR3131287} for dynamic models. Original contributions in that area focused on static economies and can be found in \cite{arrow1954} and \cite{mckenzie1959}.

\nada{
1. Literature on existence of DNREE in finite horizon models:\\
DRR (2020). Existence of PCE with bounded rational noise trading; existence of DNREE under specific assumptions; general diffusion dynamics. \\

H. He and J. Wang (1995). Differential Information and Dynamic Behavior of Stock Trading Volume. Review of Financial Studies, 8(4), 1995, 919–972. Model with exogenous noise trading; Gaussian factors; existence is not proved.
}

The paper is organized as follows. Section \ref{S:setup} presents the model and main assumptions. Section \ref{S:gen_U_arb_w} states the optimal investment problems and defines notions of equilibrium. Section \ref{S:MR} states the main result, and formally derives the equilibrium clearing condition central to the analysis. Section \ref{S:clear_solve} constructs  solutions to the market clearing condition at the signal realization level. Section \ref{S:PCE} establishes the existence of a PCE and Section \ref{S:DNREE} shows the PCE is a DNREE. Section \ref{S:asympt} identifies prices when $U$ has CRRA utility as the risk aversion tends to zero and infinity.  Section \ref{S:EX} contains an example when the terminal payoff is that of a geometric Brownian motion. Appendix \ref{AS:init_enlarge} contains results on initially enlarged filtrations from \cite[Online Appendix]{MR4192554} for ease of reference. Short proofs are in the main text, while longer proofs  are in Appendices \ref{AS:gen_U_arb_w} -- \ref{AS:asympt}.

\section{Setup}\label{S:setup}

\subsection*{The factor process and terminal payoff} Fix an integer $d$. The underlying probability space $\probtriple$ supports a $d$-dimensional Brownian motion $B$, and we denote by $\filt^B$ the $\prob$-augmentation of $B$'s natural filtration. The underlying factor process $X$ is a time homogeneous diffusion taking values in $\reals^d$, and satisfying the SDE
\begin{equation*}
dX_t = b(X_t)dt + a(X_t)dB_t, \qquad X_0 = x \in \reals^d,
\end{equation*}
where  we assume
\begin{ass}\label{A:X_SDE}
The drift function $b:\reals^d \to \reals^d$ is bounded and uniformly Lipschitz. The volatility function $a:\reals^d \to \reals^{d\times d}$ is bounded, uniformly Lipschitz, and invertible for each $x\in\reals^d$. Furthermore, the map $x\to a^{-1}(x)$ is bounded.
\end{ass}

\begin{rem} Our requirements ensure a unique strong solution $X$ taking values in $\reals^d$ for any starting point $x$. Also, the conditions on $b,a$ (as well the requirements in Assumption \ref{A:Psi} below) are  enforced to invoke the non-degeneracy results of \cite{MR3131287, MR3989959}. Beyond regularity, well posed-ness and local ellipticity, our requirements may be relaxed to the same extent as they could to use the conclusions of \cite{MR3131287, MR3989959}.
\end{rem}

The investment horizon is $[0,T]$ for $T>0$ fixed. There is a money market account with interest rate exogenously set to $0$. The risky assets have outstanding supply $\Pi \in\reals^d$ and terminal payoff $\Psi(X_T)$ for a given function $\Psi$, where we assume

\begin{ass}\label{A:Psi} The payoff function $\Psi: \reals^d \to\reals^d$ is $C^1$ and such that $\Pi'\Psi(x) \geq 0, x\in \reals^d$. Additionally, the Jacobian of $\Psi$ is of full rank for Lebesgue almost every $x\in \reals^d$ and there is $K>0$ such that both $|\Psi(x)| \leq e^{K(1+|x|)}$ and $|\Psi_i(x)| \leq e^{K(1+|x|)}, i=1,...,d$, where $\Psi_i(x) = \partial_{x_i} \Psi(x)$.
\end{ass}



\begin{rem}
Though not apparent at first, our assumptions allow $X$ to follow an OU process. Indeed, in terms of model primitives, $X$ is only used through its terminal value $X_T$, which appears both in the terminal payoff $\Psi(X_T)$ and the insider signal, defined below as $G_I = X_T + Y_I$. Thus, we can switch from $X$ to $\wt{X}$ where $\wt{X}_t \dfn \condexpvs{X_T}{\F^B_t}, t\leq T$, and calculation shows that if $dX_t = \kappa(\theta-X_t)dt + \sigma dB_t$ then $d\wt{X}_t = e^{-(T-t)\kappa}\sigma dB_t$. Though $\wt{X}$ has time dependent volatility, the volatility function satisfies the analyticity assumptions in \cite{MR3131287} and regularity assumptions in \cite{MR3989959}, and hence results go through using $\wt{X}$ instead of $X$.

We may also handle when the terminal payoff is that of a geometric Brownian motion. Indeed, this corresponds to $dX_t = \mu_X dt + \sigma_X dB_t$ for a certain vector $\mu_X$ and matrix $\sigma_X$,\footnote{In other words, the insider's signal is multiplicative, rather than additive, in the terminal stock price.} and $\Psi(x) = e^{x}$. This specification is considered in Section \ref{S:EX} below.
\end{rem}

\subsection*{Agents and signals}

At time $0$, there is an insider, or informed agent ``$I$'' who observes the private signal
\begin{equation*}
G_I = X_T + Y_I;\qquad  Y_I \sim N(0,C_I), \ Y_I \indep \F^B_T.
\end{equation*}
There is also a uninformed agent ``$U$'' who possesses no private signal.  Lastly, there is a misinformed agent, or noise trader ``$N$'' who thinks he receives $G_I$ at $0$, but in actuality receives
\begin{equation*}
G_N = \tau_N G_I + \mu_N + Y_N;\qquad Y_N \sim N(0,C_N), \ Y_N\indep Y_I,\F^B_T.
\end{equation*}
Above, $\mu_N\in\reals^d$ and $C_I,C_N\in\mathbb{S}^d_{++}$ the set of strictly positive definite symmetric $d\times d$ matrices. While we allow for general $\tau_N,\mu_N,C_N$, our noise trader specification is primarily made to incorporate two cases of interest.  First, when $\tau_N=1,\mu_N=0$. This corresponds to the noise trader mis-perceiving the informed signal's precision. Second, when $\tau_N = 0, \mu_N = E[X_T], C_N = \textrm{Var}[X_T] + C_I$. This corresponds to when the noise trader receives a truly independent signal, but one with the same first two moments as $G_I$. In either case, the noise trader fails to understand the true nature of the signal received, therefore trades on noise. We use the term ``noise trader'' to conform to the standard terminology, in line with \cite{black1986noise}, but as will be discussed next, $N$ is in many ways a rational agent. He only deviates from rationality, or is ``boundedly rational'' in that he believes his signal $G_N$ is correct, when in fact it is not.

Agents $I$ and $N$ have CARA, or exponential, utilities with respective risk aversions $\gamma_I,\gamma_N$.\footnote{CARA utility corresponds to the utility function $- e^{-\gamma w}, w\in\reals$ so that the absolute risk aversion is $\gamma$.} As is standard in the literature, to justify the price taking assumption that agents do not internalize their impact on prices, we assume agents are representative of a group of identical, ``small'' agents and that collectively, the insider and noise traders have  respective weights in the economy of $\omega_I,\omega_N$. The uninformed agent $U$ has weight in the economy of $\omega_U = 1-\omega_I- \omega_N$ and preferences dictated by a utility function $\util$\footnote{We use $\util$, as opposed to the more common $U$, as $U$ is the uninformed agent label.} for which we make the standard assumptions, as in \cite{MR1722287}.

\begin{ass}\label{A:U}
The utility function $\util:(0,\infty) \to \reals$ is $C^2$, strictly increasing, strictly concave, satisfies the Inada conditions $\lim_{w\to 0} \dot{\util}(w) = \infty, \lim_{w\to\infty} \dot{\util}(w) = 0$, as well as the asymptotic elasticity condition $\limsup_{w\to\infty} w\ \dot{\util}(w)/\util(w) < 1$.
\end{ass}

\begin{rem}
Our results easily extend to when the domain of $\util$ is $(a,\infty)$ for $a>-\infty$. We take $a=0$ for notational ease.   While our assumptions do not cover CARA preferences for $\util$, this case was extensively studied  in \cite{MR4192554}.
\end{rem}

Agent $j \in\cbra{I,N,U}$  is endowed at time $0$ with a constant $\pi^j_0$ shares in the risky asset, and no position in the money market.  The endowed initial positions satisfy

\begin{ass}\label{A:init_pos}
Share endowments $\{\pi^j_0\}, j\in\cbra{I,U,N}$ are consistent with market clearing in that
\begin{equation*}
\Pi = \omega_I \pi^{I}_0 + \omega_N \pi^N_0 +\omega_U \pi^{U}_0,
\end{equation*}
and the uninformed agent is allowed not to trade in that $(\pi^U_0)'\Psi(x) > 0$ for all $x\in\reals^d$.
\end{ass}

Agents maximize utility from terminal wealth given their information, and other than the share position, have no endowments.




\section{Equilibrium definition}\label{S:gen_U_arb_w}

We now define the notions of partial communication equilibrium (PCE) and dynamic noisy rational expectations equilibrium (DNREE), starting with the PCE.

\subsection*{Information flows and price process} In the PCE, there is a public  filtration $\filt^m$ enlarging $\filt^B$ and satisfying the usual conditions.   This is the filtration used by $U$, and as we will show, it takes the form $\filt^m = \filt^B \vee \sigma(H)$ for a  public signal $H$ which, as with the equilibrium price process, must be found ($H$ is explicitly given in \eqref{E:H_def} below). The idea behind $\filt^m$ initially enlarging $\filt^B$ is that a portion of the insider's private information signal is communicated at the outset to the uninformed agent, either directly through communication, or indirectly through observed endogenous quantities. Intuition for this structure of the market filtration stems from the fact that  $I$ and $N$ have CARA preferences. The insider, as she uses public information as well as her private signal, has filtration $\filt^I = \filt^m \vee \sigma(G_I)$, the initial enlargement of $\filt^m$ with respect to the private signal $G_I$, or equivalently, the initial enlargement of $\filt^B$ with respect to the signal pair $(H,G_I)$.

Given the public information flow $\filt^m$, the (to-be-determined) price process $S$ is a $\filt^m$ semi-martingale satisfying the terminal condition $S_T = \Psi(X_T)$.

\subsection*{Individual optimal investment problems}

Let us start with $U$ who solves the optimal investment problem
\begin{equation}\label{E:opt_U_signal}
\sup_{\pi\in\A_U} \condexpvs{\util\left((\pi^U_0)'S_0 + \We^{\pi}_{0,T}\right)}{\F^m_0},
\end{equation}
where the gains process $\We^{\pi}$ is defined by
\begin{equation}\label{E:gains_process}
    \We^{\pi}_{0,T} \dfn \int_0^T \pi_u'dS_u,
\end{equation}
The class of admissible strategies $\A_U$ consists of $\filt^m$ predictable, $S$ integrable processes $\pi$ satisfying a certain budget constraint. We postpone explicitly identifying $\A_U$ until the end of the section, as its construction is rather delicate, for now taking $\A_U$ as given.  Similarly, and because the initial wealth factors out of the conditional expectation, $I$ solves the investment problem
\begin{equation}\label{E:opt_I_signal}
\inf_{\pi\in\A_I} \condexpvs{e^{-\gamma_I \We^{\pi}_{0,T}}}{\F^I_0},
\end{equation}
where $\A_I$, also explicitly constructed below consists of $\filt^I$ predictable, $S$ integrable processes with a related budget constraint.  Implicit in this definition is that $S$ is a $\filt^I$ semi-martingale, a fact we will show at the end of this section as well.

Provided they exist, denote by  $\wh{\pi}^U,\wh{\pi}^I$ the optimal policies for $U$ and $I$ respectively. Regarding the noise trader $N$, as proved in \cite[Remark 2.2 and Section S.3]{MR4192554}  our noise trader convention  implies that if  $I$ has optimal strategy $\wh{\pi}^I_t = (1/\gamma_I) \wh{\psi}^{G_I}_t$ where $\wh{\psi}^g_t = \wh{\psi}(t,\omega,g)$ is $\mathcal{P}(\filt^m)\times \mathcal{B}(\reals^d)$ measurable,\footnote{$\mathcal{P}$ is the predictable sigma field and $\mathcal{B}$ are the Borel sets.}  then $\wh{\pi}^N_t = (1/\gamma_N)\wh{\psi}^{G_N}_t$. In words, modulo risk aversion the noise trader  uses the same ``signal to strategy'' function as the insider. However, he plugs an incorrect signal into the function. This means he acts like an informed agent with characteristics $\left(\omega_N,\pi^N_0,\gamma_N\right)$, processing $G_N$ as if it were the true signal $G_I$.

\subsection*{Equilibrium definition}

With this notation, we define the PCE in terms of the price process $S$ and public filtration $\filt^m = \filt^B \vee \sigma(H)$ as follows
\begin{defn}
The pair $(S,\filt^m)$ is a Partial Communication Equilibrium (PCE) if $\filt^B \subset \filt^m \subset \filt^I$, $S_T = \Psi(X_T)$ almost surely, and
\begin{equation}\label{E:eq_cond}
\Pi = \omega_I \wh{\pi}^I + \omega_N \wh{\pi}^N  + \omega_U \wh{\pi}^U ;\qquad \textrm{Leb}_{[0,T]}\times \prob \textrm{  almost surely}.
\end{equation}
\end{defn}
The idea behind the strict informational set inclusions is that, in a PCE, the market is provided information in addition to that given by the factor process, but the insider still has the finest information set.  This rules out the unrealistic (though possible - see \cite[Section 3]{MR4192554}) case when the insider's signal is passed to the market so $\filt^m = \filt^I = \filt^B \vee \sigma(G_I)$.  Note also that in \eqref{E:eq_cond} the strategies depend upon $S, \filt^m$ and $G_I,G_N$. We next turn to the DNREE.
\begin{defn}\label{D:DNREE}
The pair $(S,\filt^m)$ is a Dynamic Noisy Rational Expectations Equilibrium (DNREE) if $(S,\filt^m)$ is a PCE and if $\filt^m = \filt^{B,S}_{+}$, where $\filt^{B,S}_{+}$ is the $\prob$-augmented right-continuous enlargement of the filtration generated by $B$ and $S$.
\end{defn}
We see that a DNREE prevails if the market signal is  recoverable at time $0+$ (i.e. for any time $t>0$) based upon market observable quantities. This provides a mechanism for transmission of the market signal. Note also that by right-continuity of $\filt^m$  we know that $\filt^{B,S}_{+} \subseteq \filt^m$ automatically. Thus, under a DNREE we have $\filt^m = \filt^{B,S}_{+} \subset \filt^I$. Lastly, we must use the right-continuous enlargement of $\filt^{B,S}$ because it might not be true that either $\filt^{B,S}$ is right continuous or (more importantly) that the public signal $H$  is $\sigma(S_0)$ measurable. For that reason, the notion of a DNREE defined above is weaker than in \cite{MR4192554}.

\subsection*{The acceptable trading strategies}

We conclude by constructing the classes of acceptable trading strategies. While the construction is technical, informally, the acceptable strategies are the standard ones encountered in a complete market optimal investment problem. In other words, if we assume the $\filt^m$ semi-martingale $S$ is also a $\filt^I$ semi-martingale, and if we denote by $L^m(S)$ (respectively $L^I(S)$) the set of $\filt^m$ (resp. $\filt^I$) predictable, $S$-integrable strategies then\footnote{All inequalities between random variables are assumed to hold almost surely.}
\begin{equation}\label{E:init_accept}
    \begin{split}
        \A_U &\approx \wh{\A}_U \dfn \cbra{\pi \in L^m(S) \such \condexpv{\qprob^m}{}{\We^{\pi}_{0,T}}{\F^m_0} \leq 0},\\
        \A_I &\approx \wh{\A}_{I} \dfn \cbra{\pi \in L^I(S) \such  \condexpv{\qprob^I}{}{\We^{\pi}_{0,T}}{\F^I_0} \leq 0},
    \end{split}
\end{equation}
where $\qprob^m$ and $\qprob^I$ are the unique martingale measures in the $(S,\filt^m)$ and $(S,\filt^I)$ markets respectively. With these sets, the agents then solve the investment problems \eqref{E:opt_U_signal}, \eqref{E:opt_I_signal} and equilibrium will follow provided the optimal policies clear the market.

We cannot directly use $\wh{\A}_U$ and $\wh{\A}_I$ due to the prohibitive difficulty in establishing (for example) that the optimal gains $\We^{\wh{\pi}^U}_{0,T}$ for the uninformed agent is in $L^1(\qprob^m)$, as we have very little control over the map $H \to \We^{\wh{\pi}^U}(H)_{0,T}$. However, from $U$'s perspective $H$ is ``constant'', and thus (provided the requisite measurability)  it is more natural to only require integrability of $\We^{\wh{\pi}^U}(h)_{0,T}$ for each signal realization $h$ (with similar statements for the insider $I$). Our construction of the acceptable trading strategies builds upon this idea, using results, summarized in Appendix \ref{AS:init_enlarge} below, from the theory of initially enlarged filtrations, generalized conditional expectations, and the (expected) sample path continuity of the equilibrium price process, which enables us to avoid many technicalities which arise when the price process may jump. Below we carry out the construction, with all proofs within this section given in Appendix \ref{AS:gen_U_arb_w}.


To begin, let $H$ be a $\reals^d$ valued random variable, possessing a strictly positive probability density function (pdf) on $\reals^d$,\footnote{For the general theory, it is not necessary for $H$ to possess a pdf, however, $H$ from \eqref{E:H_def} does possess a pdf, and the presentation of this section simplifies under this assumption.} satisfying the Jacod equivalence condition (c.f. \cite{Jacod1985}) $\condprobs{H\in dh}{\F^B_t} \sim \prob\bra{H\in dh}$ almost surely on $t\leq T$ , and define $\filt^m \dfn \filt^B \vee(H)$. From Lemma \ref{L:fontana_facts} in Appendix \ref{AS:init_enlarge} we know $\filt^m$ satisfies the usual conditions, and if we define the density process
\begin{equation}\label{E:ph_def}
p^h_t \dfn \frac{d\condprobs{H \in dh}{\F^B_t}}{d\prob\bra{H\in dh}},\qquad t\leq T,
\end{equation}
then $(t,\omega,h) \to p^h_t(\omega)$ is a $\OO(\filt^B)\otimes \B(\reals^d)$ measurable map. Using $p^H$, we define the $\filt^B$ to $\filt^m$ martingale preserving measure (see \cite{MR1418248, MR1632213,MR1775229}) $\tprob^H$ on $\F^m_T$ by
\begin{equation}\label{E:tprob_def}
\frac{d\tprob^H}{d\prob}\big|_{\F^m_T} \dfn \frac{1}{p^H_T}.
\end{equation}
To construct the price process $S$, let $(\omega,h)\to \cz^h(\omega)$ be a strictly positive $\F^B_T\otimes \B(\reals^d)$ measurable random variable, and assume for Lebesgue almost every (``Leb a.e.'') $h\in\reals^d$ that $\expvs{\cz^h} = 1$ and $\expvs{|\Psi(X_T)|\cz^h} < \infty$.  Define the measures $\qprob^h$ on $\F^B_T$ and $\qprob^m$ on $\F^m_T$ by
\begin{equation*}
    \frac{d\qprob^h}{d\prob}\big|_{\F^B_T} \dfn \cz^h;\qquad \frac{d\qprob^m}{d\prob}\big|_{\F^m_T} \dfn \frac{\cz^H}{p^H_T}.
\end{equation*}
Lemma \ref{L:market_filt_cond_exp} in Appendix \ref{AS:init_enlarge} shows that for any $\F^B_t\otimes\B(\reals^d)$ measurable random variable $\chi^H_t$ such that $\expv{\qprob^h}{}{|\chi^h_t|}<\infty$ for Leb a.e. $h\in\reals^d$, the conditional expectation $\condexpv{\qprob^m}{}{\chi^H_t}{\F^m_s}$ is well defined (c.f. the discussion right above Lemma \ref{L:market_filt_cond_exp}) and that
\begin{equation}\label{E:qprobm_to_qprobh}
    \condexpv{\qprob^m}{}{\chi^H_t}{\F^m_s} = \left(\condexpv{\qprob^h}{}{\chi^h_t}{\F^B_s}\right)\big|_{h=H},
\end{equation}
which allows us to connect $(\qprob^m,\filt^m)$ and $(\qprob^h,\filt^B)$ conditional expectations. Given this, define the price process $S = S^H$ by
\begin{equation}\label{E:price_def}
S^H_t \dfn \condexpv{\qprob^m}{}{\Psi(X_T)}{\F^m_t} = \left(\condexpv{\qprob^h}{}{\psi(X_T)}{\F^B_t}\right)\big|_{h=H};\quad t \leq T,
\end{equation}
so that $S^H$ is a RCLL $\filt^m$ semi-martingale, and hence by \cite[Lemma 4.3]{MR3758346}, is indistinguishable  from an $\OO(\filt^B)\otimes \B(\reals^d)$ measurable process (which we use going forward). Clearly this latter process is the map $(t,\omega,h) \to S^h_t(\omega) \dfn \condexpv{\qprob^h}{}{\psi(X_T)}{\F^B_t}$, and thus $S^h$ is a continuous $(\filt^B)$ semi-martingale. Additionally, we have

\begin{lem}\label{L:S_h_prop}
For $h\in\reals^d$, $S^h$ has quadratic covariation $\langle S^h, S^h\rangle_t = \int_0^t \Sigma^h_u du$ where the map $(t,\omega,h) \to \Sigma^h_t(\omega)$ is $\mcp(\filt^B)\otimes \B(\reals^d)$ measurable.
\end{lem}

As a last preparatory step, for fixed $h\in\reals^d$ define the set of processes  (here ``$X\in \Hcal$'' means that $X$ is $\Hcal$ measurable)
\begin{equation*}
    \A^h \dfn \cbra{ \pi\in\mcp(\filt^B) \such \pi \textrm{ is } S^h \textrm{ integrable, } \int_0^T \pi_t'dS^h_t \in L^1(\qprob^h) \textrm{ with } \expv{\qprob^h}{}{\int_0^T \pi_t'dS^h_t} \leq 0}.
\end{equation*}
With all of this in place, the set of acceptable trading strategies for the uninformed agent is
\begin{equation}\label{E:AU}
    \begin{split}
        \A_U &\dfn \cbra{\pi \in \mcp(\filt^B)\otimes \B(\reals^d) \such \pi^h \in \A^h \textrm{ for } \textrm{ Leb a.e. } h\in\reals^d }
    \end{split}
\end{equation}
To connect $\A_U$ with $\wh{\A}_U$ from \eqref{E:init_accept} we have the following Proposition.
\begin{prop}\label{P:class_equiv}\text{}
    \begin{enumerate}[(i)]
    \item $\A_U \subseteq \wh{\A}_U$ and for each $\wh{\pi}\in\wh{\A}_U$ there is a $\pi\in\A_U$ such that the processes $\We^{\pi}_{0,\cdot}$, $\We^{\wh{\pi}}_{0,\cdot}$ are indistinguishable on $[0,T]$.
    \item For $\pi\in\A_U$ the processes $\We^{\pi}_{0,\cdot}$ and $\left(\int_0^\cdot (\pi^h_u)'dS^h_u\right)\big|_{h=H}$ are  both $\mcp(\filt^B)\otimes \B(\reals^d)$ measurable and indistinguishable.
    \item For $\pi\in\A_U$, $\condexpv{\qprob^m}{}{\We^{\pi}_{0,T}}{\F^m_0}$ is well defined, and  $\condexpv{\qprob^m}{}{\We^{\pi}_{0,T}}{\F^m_0} = \left(\expv{\qprob^h}{}{\int_0^T (\pi^h_u)'dS^h_u}\right)\big|_{h=H}$.
    \end{enumerate}
\end{prop}

We next turn to the insider. First, assume $H$ and $G_I$ are also such that the Jacod condition $\condprobs{G_I \in dg}{\F^m_t} \sim \prob\bra{G_I \in dg}$ holds almost surely on $t\leq T$  and write the density as
\begin{equation*}
p^{H, g}_t \dfn \frac{d\condprobs{G_I \in dg}{\F^m_t}}{d\prob\bra{G_I\in dg}},\qquad t\leq T.
\end{equation*}
Lemma \ref{L:fontana_facts} in Appendix \ref{AS:init_enlarge} shows $\filt^I$ satisfies the usual conditions and the map $(t,\omega,g) \to p^{H(\omega),g}_t(\omega)$ is $\OO(\filt^m)\otimes\B(\reals^d)$ measurable. Furthermore, for $H$ in \eqref{E:H_def},  \eqref{E:H_to_G} below implies $(t,\omega,h,g) \to p^{h,g}_t(\omega)$ is $\mcp(\filt^B)\otimes \B(\reals^d \times \reals^d)$ measurable. Using standard results on the martingale preserving measure  one can show  $S^H$ is a $(\qprob^I,\filt^I)$ martingale where $\qprob^I$ is defined by
\begin{equation}\label{E:first_QI_def}
\frac{d\qprob^I}{d\prob}\big|_{\F^I_T} = \frac{1}{p^{H,G_I}_T} \times \frac{d\qprob^m}{d\prob}\big|_{\F^m_T}.
\end{equation}
Indeed, mimicking the proof of \cite[Theorem 3.1]{MR1775229}, if $M$ is a $(\qprob^m,\filt^m)$ martingale then for $0\leq s < t\leq T$, $A^m_s \in \F^m_s$ and $\psi$ a smooth bounded function on $\reals^d$ we have
\begin{equation*}
    \begin{split}
        \expv{\qprob^I}{}{M_t 1_{A^m_s}\psi(G_I)} &= \expvs{\frac{1}{p^{H,G_I}_T} \frac{d\qprob^m}{d\prob}\big|_{\F^m_T} M_t 1_{A^m_s}\psi(G_I)} = \expvs{\frac{d\qprob^m}{d\prob}\big|_{\F^m_T} M_t 1_{A^m_s}\condexpvs{\frac{1}{p^{H,G_I}_T} \psi(G_I)}{\F^m_T}}.
    \end{split}
\end{equation*}
By construction $\condexpvs{\psi(G_I)/p^{H,G_I}_T}{\F^m_T} = \int_{\reals^d} \psi(g)\prob\bra{G_I\in dg} \rdfn K(\psi)$. This implies
\begin{equation*}
    \begin{split}
        \expv{\qprob^I}{}{M_t 1_{A^m_s}\psi(G_I)} &= K(\psi)\expvs{\frac{d\qprob^m}{d\prob}\big|_{\F^m_T} M_t 1_{A^m_s}} = K(\psi)\expv{\qprob^m}{}{M_t 1_{A^m_s}}  = K(\psi)\expv{\qprob^m}{}{M_s 1_{A^m_s}}.
    \end{split}
\end{equation*}
Letting $t=s$ above proves the martingale property. Given this,  similarly to $\A_U$ we define for $I$
\begin{equation}\label{E:AI}
    \begin{split}
        \A_I &\dfn \cbra{\pi \in \mcp(\filt^B)\otimes \B(\reals^{2d}) \such  \pi^{h,g} \in \A^{h} \textrm{ for } \textrm{ Leb a.e. } (h,g)\in\reals^{2d}},
    \end{split}
\end{equation}
and the analog of Proposition  \ref{P:class_equiv} is the following, the proof of which is identical to that of Proposition \ref{P:class_equiv} and hence omitted.
\begin{prop}\label{P:I_class_equiv}\text{}
    \begin{enumerate}[(i)]
    \item $\A_I \subseteq \wh{\A}_I$ and for each $\wh{\pi}\in\wh{\A}_I$ there is a $\pi\in\A_I$ such that the processes $\We^{\pi}_{0,\cdot}$, $\We^{\wh{\pi}}_{0,\cdot}$ are indistinguishable on $[0,T]$.
    \item For $\pi\in\A_I$ the processes $\We^{\pi}_{0,\cdot}$ and $\left(\int_0^\cdot (\pi^{h,g}_u)'dS^h_u\right)\big|_{h=H,g=G_I}$ are  both $\mcp(\filt^B)\otimes \B(\reals^{2d})$ measurable and indistinguishable.
    \item For $\pi\in\A_I$, $\condexpv{\qprob^I}{}{\We^{\pi}_{0,T}}{\F^I_0}$ is well defined, and  $\condexpv{\qprob^I}{}{\We^{\pi}_{0,T}}{\F^I_0} = \left(\expv{\qprob^h}{}{\int_0^T (\pi^{h,g}_u)'dS^h_u}\right)\big|_{h=H,g=G_I}$.
    \end{enumerate}
\end{prop}

\section{The Main Result}\label{S:MR}

With individual optimal investment problems and equilibrium notions defined, our main result is the following theorem. To state it,  define the insider and noise trader weighted risk tolerances
\begin{equation}\label{E:weighted_risk_tol_IN}
\alpha_j \dfn \frac{\omega_j}{\gamma_j};\quad j\in {I,N}.
\end{equation}

\begin{thm}\label{T:main_result}
Let Assumptions \ref{A:X_SDE}, \ref{A:Psi}, \ref{A:U} and \ref{A:init_pos} hold.  Then, there exists a PCE.  The market signal is
\begin{equation}\label{E:H_def}
H \dfn \frac{\alpha_I G_I + \alpha_N (G_N-\mu_N)}{\alpha_I + \alpha_N\tau_N} = X_T + Y_I + \frac{\alpha_N}{\alpha_I + \alpha_N\tau_N}Y_N.
\end{equation}
The $(S,\filt^m)$ and $(S,\filt^I)$ markets are complete. Lastly, in the univariate case, $d=1$, the PCE is a DNREE.
\end{thm}

\begin{rem}
Given the problem setup in Section \ref{S:gen_U_arb_w}, identifying  $(S,\filt^m)$ which enforce a PCE is equivalent to identifying the market signal $H$ (which determines $\filt^m)$ and the terminal density $\cz^h$ (which along with $\filt^m$ identifies $S$).  As such, one may also interpret Theorem \ref{T:main_result} as saying there exists $H,\cz^h$ such that a PCE holds. However, to conform to the standard presentation for equilibrium results in terms of prices and information, we choose to express the PCE in terms of $(S,\filt^m)$.
\end{rem}

The proof of Theorem \ref{T:main_result} is carried out in the Sections \ref{S:clear_solve} -- \ref{S:DNREE} below.  In the remainder of this section, we formally derive a clearing condition which, provided markets are complete, identifies the measure $\qprob^m$. In the later sections we will rigorously establish the result.

\subsection*{The clearing condition} The signal $H$ of \eqref{E:H_def} is of the same form as $G_I$, but with  lower precision
\begin{equation*}
P_U \dfn C_U^{-1};\qquad C_U \dfn C_I + \left(\frac{\alpha_N}{\alpha_I + \alpha_N\tau_N}\right)^2 C_N.
\end{equation*}
With $p_C$ denoting the probability density function (pdf) of a $N(0,C)$ random vector, direct calculations show $H$ has  $\F^B_t$ conditional pdf $\ell(t,X_t,h)$ where
\begin{equation}\label{E:ell_T_def}
\ell(t,x,h) \dfn \condexpvs{p_{C_U}(h-X_T)}{X_t = x};\quad \ell(T,x,h) = p_{C_U}(h) e^{-\frac{1}{2}x'P_U x + x'P_Uh},
\end{equation}
so that $\condprobs{H\in dh}{\F^B_t} \sim \prob\bra{H\in dh}$ almost surely on $t\leq T$, with density $p^h_t = \ell(t,X_t,h)/\ell(0,X_0,h)$. To construct the market martingale measure $\qprob^m$, we define the  $\filt^B$ to $\filt^m$ martingale preserving measure $\tprob^H$ as in \eqref{E:tprob_def} and write $\wt{\mathbb{E}}^H$ for expectations with respect to $\tprob^H$.  Slightly changing notation from Section \ref{S:gen_U_arb_w}, and motivated by our Markovian environment, let $\cz$ be a strictly positive $\B(\reals^{2d})$ measurable function such that for Leb a.e. $h\in\reals^d$ we have both $\expvs{\cz(X_T,h)} < \infty$  and $\expvs{|\Psi(X_T)|\cz(X_t,h)} < \infty$.\footnote{$\cz^h$ of Section \ref{S:gen_U_arb_w} takes the form $\cz^h = \cz(X_T,h)/\expvs{\cz(X_T,h)}$.}  The measures $\qprob^h$ and $\qprob^m$ of Section \ref{S:gen_U_arb_w} are defined $\qprob^h$ through $d\qprob^h/d\prob|_{\F^B_T} = \cz(X_T,h)/\expvs{\cz(X_T,h)}$
\begin{equation}\label{E:qprob_m}
\frac{d\qprob^m}{d\prob}\big|_{\F^m_T} =\frac{\cz(X_T,H)}{\expvs{\cz(X_T,h)}|_{h=H}}\times \frac{1}{p^H_T} = \frac{\cz(X_T,H)}{\expvs{\cz(X_T,h)}|_{h=H}}\times \frac{\ell(0,X_0,H)}{\ell(T,X_T,H)}.
\end{equation}

With $\qprob^m$ so defined, we define the price process $S^H$ as in \eqref{E:price_def}, and $\qprob^m$ is a martingale measure, in the sense that $\condexpv{\qprob^m}{}{S^H_t}{\F^m_s}$ is well defined for each $s < t\leq T$ and almost surely equal to $S^H_s$.

The informed agent $I$ has filtration $\filt^I = \filt^m \vee \sigma(G_I)$, and has the $\filt^I$ martingale measure $\qprob^I$ associated to $\qprob^m$ from \eqref{E:first_QI_def}, with terminal density
\begin{equation*}
\frac{d\qprob^I}{d\prob}\big|_{\F^I_T} = \frac{\cz(X_T,H)}{\expvs{\cz(X_T,h)}|_{h=H}}\times\frac{1}{p^H_T p^{H,G_I}_T}.
\end{equation*}
Additionally, from Lemma \ref{L:market_filt_cond_exp} of Appendix \ref{AS:init_enlarge}, we deduce for all appropriately integrable, $\F^B_t\otimes \B(\reals^{2d})$ measurable, random variables $\chi^{H,G}_t$, that
\begin{equation}\label{E:exp1}
\condexpv{\qprob^I}{}{\chi^{H,G_I}_t}{\F^I_s} = \left(\condexpv{\qprob^m}{}{\chi^{H,g}_t}{\F^m_s}\right)\big|_{g=G_I} = \left(\condexpv{\qprob^h}{}{\chi^{h,g}_t}{\F^m_s}\right)\big|_{h=H,g=G_I}.
\end{equation}
To simplify $p^H_T p^{H,G_I}_T$, write $u^g_t$ as the $\F^B_t$ conditional pdf of $G_I$. Calculation shows $u^g_t = u(t,X_t,g)$ where
\begin{equation*}
    u(t,x,g) = \condexpvs{p_{C_I}(g-X_T)}{X_t = x};\quad u(T,x,g) = p_{C_I}(g) e^{-\frac{1}{2}x'P_I x + x'P_I g},
\end{equation*}
where above (and for later use)
\begin{equation*}
    P_I \dfn C_I^{-1};\qquad P_N \dfn C_N^{-1}.
\end{equation*}
With this notation, direct calculations show
\begin{equation}\label{E:H_to_G}
p^H_T p^{H,G_I}_T = \delta(H,G_I) \times p_{C_I}(G_I-X_T),
\end{equation}
where $\delta(H,G_I)$ is an explicitly identifiable function, and hence
\begin{equation*}
    \frac{d\qprob^I}{d\prob}\big|_{\F^I_T} =  \delta(H,G_I)\times \cz(X_T,H)e^{\frac{1}{2}X_T'P_I X_T - X_T'P_I G_I}
\end{equation*}
where the function $\delta$ changed from the previous equality. With this notation, the first order optimality condition for $I$ is
\begin{equation}\label{E:opt_I}
\begin{split}
-\gamma_I \wh{\We}^{I}_{0,T} 
&= \log(\cz(X_T,H)) + \frac{1}{2}X_T'P_IX_T - \condexpv{\qprob^m}{}{\log(\cz(X_T,H)) + \frac{1}{2}X_T'P_IX_T}{\F^m_0}\\
&\qquad - G_I'P_I\left(X_T - \condexpv{\qprob^m}{}{X_T}{\F^m_0}\right),
\end{split}
\end{equation}
where we used \eqref{E:exp1}. Therefore, by our noise trader convention
\begin{equation}\label{E:opt_N}
\begin{split}
-\gamma_N \wh{\We}^{N}_{0,T} &= \log(\cz(X_T,H)) + \frac{1}{2}X_T'P_IX_T - \condexpv{\qprob^m}{}{\log(\cz(X_T,H)) + \frac{1}{2}X_T'P_IX_T}{\F^m_0}\\
&\qquad \qquad  - G_N'P_I\left(X_T - \condexpv{\qprob^m}{}{X_T}{\F^m_0}\right).
\end{split}
\end{equation}
Using \eqref{E:weighted_risk_tol_IN} this gives the population weighted terminal gains $\omega_I \wh{\We}^I_{0,T} + \omega_N \wh{\We}^N_{0,T}$
\begin{equation*}
\begin{split}
&\omega_I\wh{\We}^I_{0,T} +  \omega_N \wh{\We}^N_{0,T} = \left(\left(\alpha_I + \alpha_N\tau_N\right)H +\alpha_N\mu_N\right)'P_I\left(X_T - \condexpv{\qprob^m}{}{X_T}{\F^m_0}\right)\\
&\qquad  -(\alpha_I + \alpha_N)\left(\log(\cz(X_T,H)) + \frac{1}{2}X_T'P_IX_T - \condexpv{\qprob^m}{}{\log(\cz(X_T,H)) + \frac{1}{2}X_T'P_IX_T}{\F^m_0}\right).
\end{split}
\end{equation*}
\begin{rem}
We pause briefly to discuss why the market signal $H$ takes the form in \eqref{E:H_def}.  Indeed, \eqref{E:opt_I}, \eqref{E:opt_N} imply the only component dependent on private information  in the population weighted gains of $I$ and $N$ is
\begin{equation*}
    \left(\alpha_I G_I + \alpha_N G_N\right)'P_I\left(X_T - \condexpv{\qprob^m}{}{X_T}{\F^m_0}\right).
\end{equation*}
We need this quantity to be $\F^m_T$ measurable.  Clearly this can be achieved by taking the market signal to be $H_{old} = (\alpha_I G_I + \alpha_N G_N)/(\alpha_I + \alpha_N)$, and this is what was done in \cite{MR4192554}. Presently we take $H$ as in \eqref{E:H_def}.  As $\sigma(H) = \sigma(H_{old})$ the equilibrium is unchanged choosing $H$ or $H_{old}$.  We choose $H$ as it has the same form as $G_I$ and hence yields the interpretation that the market gets a signal of the same form as the insider, just of a lower precision.
\end{rem}
We lastly consider $U$, who using \eqref{E:ell_T_def}, \eqref{E:qprob_m} has first order optimality condition
\begin{equation*}
\dot{\util}\left(\wh{\We}^U_T \right) =  \lambda(H) \check{Z}(X_T,H)e^{\frac{1}{2}X_T'P_U X_T - X_T'P_U H}, \qquad \wh{\We}^U_T \dfn (\pi^U_0)'S_0 + \wh{\We}^U_{0,T},
\end{equation*}
where $\lambda(H)$ is $\F^m_0$ measurable.  We can write this as
\begin{equation}\label{E:U_foc_I_def}
\wh{\We}^U_T = \imutil\left(\lambda(H) \check{Z}(X_T,H)e^{\frac{1}{2}X_T'P_U X_T - X_T'P_U H}\right),\qquad \imutil \dfn \dot{\util}^{-1},
\end{equation}
provided $\lambda$ enforces the static budget constraint
\begin{equation}\label{E:U_budget}
w^U_0 \dfn (\pi^U_0)'S_0 = \condexpv{\qprob^m}{}{\wh{\We}^U_T}{\F^m_0} = \condexpv{\qprob^m}{}{\imutil\left(\lambda(H) \check{Z}(X_T,H)e^{\frac{1}{2}X_T'P_U X_T - X_T'P_U H}\right)}{\F^m_0}.
\end{equation}
In equilibrium,  according to \eqref{E:eq_cond}, we must have market clearing
\begin{equation*}
\begin{split}
\Pi'(\Psi(X_T)-S_0) &= \omega_I \wh{\We}^I_{0,T} +  \omega_N \wh{\We}^N_{0,T} +  \omega_U\left(\wh{\We}^U_{T} - w^U_0\right).
\end{split}
\end{equation*}
Plugging in for the optimal wealth processes, we need $\check{Z}(X_T,H)$ to enforce
\begin{equation*}
\begin{split}
&\Pi'(\Psi(X_T) - S_0) = \omega_U\left( \imutil\left(\lambda(H) \check{Z}(X_T,H)e^{\frac{1}{2}X_T'P_U X_T - X_T'P_U H}\right) - w^U_0\right)\\
&\quad -(\alpha_I + \alpha_N)\left(\log(\cz(X_T,H)) + \frac{1}{2}X_T'P_IX_T\right)+ \left(\left(\alpha_I + \alpha_N\tau_N\right)H + \alpha_N \mu_N\right)'P_IX_T\\
&\quad -\condexpv{\qprob^m}{}{-(\alpha_I + \alpha_N)\left(\log(\cz(X_T,H)) + \frac{1}{2}X_T'P_IX_T\right)+ \left(\left(\alpha_I + \alpha_N\tau_N\right)H + \alpha_N \mu_N\right)'P_IX_T}{\F^m_0}.
\end{split}
\end{equation*}
To simplify this equation, define the function
\begin{equation}\label{E:chi_def}
\begin{split}
\chi(x,h) &\dfn \Pi'\Psi(x) + (\alpha_I + \alpha_N)\frac{1}{2}x'P_Ix - \left(\left(\alpha_I + \alpha_N\tau_N\right)h + \alpha_N \mu_N\right)'P_Ix.
\end{split}
\end{equation}
Using this and $S_0 = \condexpv{\qprob^m}{}{\Psi(X_T)}{\F^m_0}$ we obtain the market clearing condition
\begin{equation}\label{E:clear}
\begin{split}
&\chi(X_T,H) - \condexpv{\qprob^m}{}{\chi(X_T,H)}{\F^m_0} = \omega_U \left(\imutil\left(\lambda(H) \check{Z}(X_T,H)e^{\frac{1}{2}X_T'P_U X_T - X_T'P_U H}\right)- w^U_0\right)\\
&\qquad\qquad -(\alpha_I + \alpha_N)\left(\log(\cz(X_T,H))  - \condexpv{\qprob^m}{}{\log(\cz(X_T,H))}{\F^m_0}\right).
\end{split}
\end{equation}
If we can find functions $\cz(x,h),\lambda(h)$ solving \eqref{E:U_budget}, \eqref{E:clear} then, provided the resultant markets are complete and certain integrability conditions hold, the above wealth processes will be optimal and hence a PCE will exist.

\section{Solving the market clearing condition}\label{S:clear_solve}

In this section we identify solutions $\cz,\lambda$ to the market clearing condition. Throughout, Assumptions \ref{A:X_SDE}, \ref{A:Psi}, \ref{A:U} and \ref{A:init_pos} are in force. We first re-express $\cz,\lambda$ by letting $h\to \kappa(h)$ and $(x,h,\kappa) \to z(x,h,\kappa)\geq 0$ be (to-be-determined) measurable maps and setting
\begin{equation}\label{E:gen_cz_new}
    \begin{split}
    \lambda(h) &= \expvs{z(X_T,h,\kappa(h)) e^{-\frac{1}{2}X_T'P_U X_T + X_T'P_U h}},\\
    \cz(x,h) &= \frac{z(x,h,\kappa(h))e^{-\frac{1}{2}x'P_U x + x'P_U h}}{\expvs{z(X_T,h,\kappa(h)) e^{-\frac{1}{2}X_T'P_U X_T + X_T'P_U h}}},
    \end{split}
\end{equation}
where $\kappa,z$ are such that $\lambda(h)<\infty$ for each $h$. Using \eqref{E:price_def} and \eqref{E:qprobm_to_qprobh}, the budget constraint \eqref{E:U_budget} will hold if for each $h$,  writing $\kappa$ for $\kappa(h)$
\begin{equation}\label{E:U_budget_new}
    0 = \expv{\qprob^h}{}{(\pi^U_0)'\Psi(X_T)- \imutil(z(X_T,h,\kappa))};\qquad \frac{d\qprob^h}{d\prob}\big|_{\F^B_T} = \frac{z(X_T,h,\kappa)e^{-\frac{1}{2}x'P_U x + x'P_U h}}{\expvs{z(X_T,h,\kappa)e^{-\frac{1}{2}x'P_U x + x'P_U h}}}.
\end{equation}
Given the budget constraint, if we apply \eqref{E:qprobm_to_qprobh} again then \eqref{E:clear} will hold if
\begin{equation}\label{E:clear_real_2}
\begin{split}
\vartheta(x,h) - \expv{\qprob^h}{}{\vartheta(X_T,h)} &= \omega_U\left(\imutil(z(x,h,\kappa)) - \expv{\qprob^h}{}{\imutil(z(X_T,h,\kappa))}\right) \\
&\qquad -  (\alpha_I + \alpha_N)\big(\log(z(x,h,\kappa)) -\expv{\qprob^h}{}{\log(z(X_T,h,\kappa))}\big),
\end{split}
\end{equation}
where
\begin{equation}\label{E:vartheta_def}
\begin{split}
\vartheta(x,h) &\dfn \chi(x,h) - (\alpha_I + \alpha_N)\left(\frac{1}{2}x'P_U x - x'P_U h\right).
\end{split}
\end{equation}
Lastly, in \eqref{E:clear_real_2}, if we absorb the conditional expectations with respect to $\F^m_0 = \sigma(H)$  into $\kappa$, we seek solutions $z=z(x,h,\kappa)\geq 0$ to
\begin{equation}\label{E:clear_function_00}
    \begin{split}
    \omega_U \imutil(z) - (\alpha_I + \alpha_N)\log(z) &= \vartheta(x,h) + \kappa,
    \end{split}
\end{equation}
where $\kappa=\kappa(h)$ must enforce  \eqref{E:U_budget_new} and \eqref{E:clear_real_2}.

Our first result identifies a unique solution $z = z(x,h,\kappa)$  to \eqref{E:clear_function_00} for $(x,h,\kappa)$ fixed.

\begin{lem}\label{L:z_soln}
For each fixed $(x,h,\kappa)$ there is a unique $z = z(x,h,\kappa) > 0$ which solves \eqref{E:clear_function_00}.
\end{lem}

\begin{proof}[Proof of Lemma \ref{L:clear_function_soln}]
By the Inada conditions, $\imutil$ is strictly decreasing with $\imutil(0) = \infty$ and $\imutil(\infty) = 0$. This implies $z\to \omega_U \imutil(z) - (\alpha_I + \alpha_N)\log(z)$ is strictly decreasing in $z$, taking the value $\infty$ at $z=0$ and $-\infty$ at $z=\infty$. The result readily follows.
\end{proof}

Having established existence and uniqueness of a solution  $z = z(x,h,\kappa)$ to \eqref{E:clear_function_00}, we now present the main result of the section, which states existence of a Borel measurable map $h\to \wh{\kappa}(h)$ such that the associated $z$ from \eqref{E:clear_function_00} yields both \eqref{E:U_budget_new} and \eqref{E:clear_real_2}. The proof of Theorem \ref{T:wU0_exists_gen} is given in Appendix \ref{AS:clear_solve}.

\begin{thm}\label{T:wU0_exists_gen}
For each $h$ there exists $\wh{\kappa} = \wh{\kappa}(h)$ such that $z = z(x,h,\wh{\kappa}(h))$ from Lemma \ref{L:z_soln} enforces \eqref{E:U_budget_new} and \eqref{E:clear_real_2}. In particular, the measure $\qprob^h$ of \eqref{E:qprobm_to_qprobh} is well defined. The map $h \to \wh{\kappa}(h)$ can be selected $\mathcal{B}(\reals^d)$ measurable. If the map $w \to w\ \dot{\util}(w)$ is non-decreasing then $\wh{\kappa}(h)$ is unique.
\end{thm}

\begin{rem}
    The condition $w\to w\dot{\util}(w)$ non-decreasing is equivalent to the requirement that the relative risk aversion function $\eta_\util(w) \dfn  -w\ddot{\util}(w)/\dot{\util}(w)$ be bounded above by $1$. This condition is well known to enforce uniqueness of Arrow-Debreu equilibria absent asymmetric information: see \cite{mas1995microeconomic, MR1949437, MR3583456}. For CRRA, it holds provided the relative risk aversion lies in $(0,1]$.
\end{rem}

\subsubsection*{CRRA utility}

Assume CRRA utility with relative risk aversion $\eta_U > 0$. Here, \eqref{E:clear_function_00} specializes to
\begin{equation*}
    \omega_U z^{-1/\eta_U} - (\alpha_I + \alpha_N)\log(z) = \vartheta(x,h) + \kappa,
\end{equation*}
which has ``explicit'' solution
\begin{equation}\label{E:z_crra}
    z = z(x,h,\kappa) = \left(\frac{(\alpha_I +\alpha_N)\eta_U}{\omega_U}\right)^{-\eta_U}  \prodlog{\frac{\omega_U}{(\alpha_I + \alpha_N)\eta_U}e^{\frac{\vartheta(x,h)+\kappa}{\eta_U(\alpha_I+\alpha_N)}} }^{-\eta_U}
\end{equation}
where $\prodlog{\cdot}$ (``Product-Log'' or ``Lambert'' function) is the inverse of $xe^{x}$ on $x\geq  -1$.



\section{Existence of a PCE}\label{S:PCE}

According to Theorem \ref{T:wU0_exists_gen}, there is a Borel mesaurable map $h \to \wh{\kappa}(h)$ such that  if we set (see \eqref{E:ell_T_def} and \eqref{E:gen_cz_new})
\begin{equation}\label{E:gen_cz_nice}
    \cz(x,h) = \frac{z(x,h,\wh{\kappa}(h))\ell(T,x,h)}{\expvs{z(X_T,h,\wh{\kappa}(h))\ell(T,X_T,h)}},
\end{equation}
then $\qprob^h$ in \eqref{E:qprobm_to_qprobh} is well defined and the budget and clearing conditions \eqref{E:U_budget}, \eqref{E:clear} hold. However, this does not yet establish a PCE. In this section we use Theorem \ref{T:wU0_exists_gen} to rigorously prove Theorem \ref{T:main_result}.

\begin{rem}\label{R:signal_realization}
Below we first establish a ``signal-realization'' level PCE. To motivate this step, come back to the optimal investment problems in \eqref{E:opt_U_signal}, \eqref{E:opt_I_signal} and recall the measurability properties for $\pi$ in \eqref{E:AU}, \eqref{E:AI} and $\We^{\pi}$ in Propositions \ref{P:class_equiv}, \ref{P:I_class_equiv}. As $\F^m_0 = \sigma(H)$ and $\F^I_0 = \sigma(H,G_I)$, on an abstract level the agents solve the problems (here we explicitly include the signal dependence in both the strategy and acceptable strategy sets)
\begin{equation*}
    \begin{split}
        (U) &: \inf_{\pi(H)\in\A_U(H)} f_U(H,\pi(H)); \hspace{78pt} f_U(H,\pi(H))\dfn \condexpvs{\util\left((\pi^U_0)'S_0 + \We^{\pi}_{0,T}\right)}{\F^m_0},\\
        (I) &: \inf_{\pi(H,G_I)\in \A_I(H,G_I)} f_I(H,G_I,\pi(H,G_I)); \qquad f_I(H,G_I,\pi(H,G_I)) \dfn  \condexpvs{e^{-\gamma_I \We^{\pi}_{0,T}}}{\F^I_0}.\\
    \end{split}
\end{equation*}

Clearly, the way to solve $\inf_{\pi(X) \in A(X)} f(X,\pi(X))$ for a random variable $X\in\reals^k$ is to first fix $x\in\reals^k$ and solve $\inf_{\pi(x)\in\A(x)} f(x,\pi(x))$, even if $\prob\bra{X=x} = 0$ for all $x\in\reals^d$.  This is what we do, for each agent, finding a market clearing price when establishing the ``signal-realization'' level PCE. Based upon \eqref{E:AU}, \eqref{E:AI}, the construction of $\A_U(h), \A_I(h,g)$ and the measurability properties for $\pi(h),\pi(h,g)$ are clear.  The identification of $f_U(h,\pi(h))$ and $f_I(h,g,\pi(h,g))$ follows from a heuristic argument, but none-the-less is shown the be correct when we use it establish the PCE at the signal level.
\end{rem}

With the above as motivation, we proceed as follows. First, we establish a ``signal realization level'' PCE where all agents have the same information $\filt^B$ but where the signal realizations yield differing endowments/beliefs. Here, we a-priori assume market completeness, and we use classical results on equilibrium in complete markets, in conjunction with the analysis in Section \ref{S:clear_solve}. Second, we will lift up from the signal realization level to the signal level, using the stochastic integration identities given in Appendix \ref{AS:init_enlarge}. Lastly, we will verify our assumption of market completeness.  Throughout, Assumptions \ref{A:X_SDE}, \ref{A:Psi}, \ref{A:U} and \ref{A:init_pos}  are in force.

\subsection*{Signal realization level PCE} Based upon Theorem \ref{T:wU0_exists_gen} the clearing condition \eqref{E:clear_real_2} holds, and using  \eqref{E:qprobm_to_qprobh} the price process of \eqref{E:price_def} is
\begin{equation}\label{E:price_funct_def}
S_t = S(t,X_t,H);\quad S(t,x,h)= \condexpv{\qprob^h}{}{\Psi(X_T)}{X_t = x}.
\end{equation}
Lemma \ref{L:clear_function_soln} below shows $S(t,x,h)$ is well defined, and it is clear  $S$ has volatility (as obtained in the proof of Lemma \ref{L:S_h_prop}) taking the form $\sigma^H_t = \sigma(t,X_t,H)$ for a certain  function $\sigma(t,x,h)$ . For the time being let us assume the following market completeness condition holds.
\begin{tass}\label{TA:complete}
For each $h$, $\sigma(t,x,h)$ is of full rank for Lebesgue almost every $(t,x)\in [0,T]\times \reals^d$.
\end{tass}

Later, we will verify temporary Assumption \ref{TA:complete}, but for now, we take it as given.  Using \eqref{E:exp1} and \eqref{E:H_to_G}, calculation shows for all non-negative $\F^B_T\otimes\B(\reals^{2d})$ measurable maps $\chi^{H,G_I}_T$ that on the set $\cbra{(H,G_I) = (h,g)}$
\begin{equation*}
    \condexpvs{\chi^{H,G_I}_T}{\F^I_0} = \frac{\expvs{\chi^{h,g}_Tp_{C_I}(g-X_T)}}{\expvs{p_{C_I}(g-X_T)}} = \frac{\expvs{\chi^{h,g}_Te^{- \frac{1}{2}X_T'P_I X_T + X_T'P_I g}}}{\expvs{e^{- \frac{1}{2}X_T'P_I X_T + X_T'P_I g}}}.
\end{equation*}
Motivated by this identity, for fixed $(h,g)$, we consider the following optimization problem for $I$, recalling that ``$\xi_T \in \F^B_T$'' means that $\xi_T$ is $\F^B_T$ measurable.
\begin{equation}\label{E:optI_realization_level}
\inf_{\xi_T \in \A_0} \frac{\expvs{e^{-\gamma_I \xi_T - \frac{1}{2}X_T'P_I X_T + X_T'P_I g}}}{\expvs{e^{- \frac{1}{2}X_T'P_I X_T + X_T'P_I g}}};\quad \mathcal{A}_w \dfn \cbra{\xi_T \in \F^B_T \such \xi_T\in L^1(\qprob^h), \expv{\qprob^h}{}{\xi_T} \leq w},
\end{equation}
where $w=0$ in the optimization. Lemma \ref{L:clear_function_soln}  and \eqref{E:gen_cz_nice} imply both $\expvs{\cz(X_T,h)\log(\cz(X_T,h))} < \infty$ and $|X_T|^2 \in L^1(\qprob^h)$, and hence there exists a unique optimizer $\wh{\xi}^I_T$ which satisfies the first order conditions
\begin{equation*}
    -\gamma_I \wh{\xi}^I_T = \frac{1}{2}X_T'P_I X_T - X_T'P_Ig + \log(\cz(X_T,h)) - \expv{\qprob^h}{}{\frac{1}{2}X_T'P_I X_T - X_T'P_Ig + \log(\cz(X_T,h))},
\end{equation*}
yielding $\expv{\qprob^h}{}{\wh{\xi}^I_T} = 0$.  Therefore, by our noise trader convention
\begin{equation*}
    -\gamma_N \wh{\xi}^N_T = \frac{1}{2}X_T'P_I X_T - X_T'P_Ig_N + \log(\cz(X_T,h)) - \expv{\qprob^h}{}{\frac{1}{2}X_T'P_I X_T - X_T'P_Ig_N + \log(\cz(X_T,h))}.
\end{equation*}
where $g_N$ is the realization of $G_N$, which in view of \eqref{E:H_def} is recoverable given realizations of $H,G_I$.  As for $U$, we have for $\F^B_T\otimes\B(\reals^d)$ measurable maps $\chi^H_T$, provided the conditional expectation is well defined, that on $\cbra{H=h}$
\begin{equation*}
    \condexpvs{\chi^H_T}{\F^m_0} = \frac{\expvs{\chi^h_Tp_{C_U}(h-X_T)}}{\expvs{p_{C_U}(h-X_T)}} = \frac{\expvs{\chi^h_Te^{- \frac{1}{2}X_T'P_U X_T + X_T'P_U h}}}{\expvs{e^{- \frac{1}{2}X_T'P_U X_T + X_T'P_U h}}}.
\end{equation*}
This motivates us to consider,  for the initial wealth $w = (\pi^U_0)'\expv{\qprob^h}{}{\Psi(X_T)}$, the optimal investment problem
\begin{equation*}
\sup_{\xi_T \in \A_{w}} \expv{\wh{\prob}^h}{}{\util(\xi_T)};\quad \frac{d\wh{\prob}^h}{d\prob}\big|_{\F^B_T} = \frac{e^{-\frac{1}{2}X_T'P_U X_T + X_T'P_U h}}{\expvs{e^{-\frac{1}{2}X_T'P_U X_T + X_T'P_U h}}}.
\end{equation*}
Here, we follow convention, and set $\expv{\wh{\prob}^h}{}{U(\xi_T)} = -\infty$ if $\expv{\wh{\prob}^h}{}{U(\xi_T)^{-}} = \infty$. To invoke the standard complete market optimization results we need the following lemma, whose proof is given in Appendix \ref{AS:PCE}.

\begin{lem}\label{L:dual_realization_ok}
Define $\dualutil(y) \dfn \sup_{x > 0} (\util(x) - xy)$ for $y>0$.  Then, for all $\lambda> 0$ we have
\begin{equation}\label{E:dual_vf_cond}
    \expv{\wh{\prob}^h}{}{\dualutil\left(\lambda \frac{d\qprob^h}{d\wh{\prob}^h}\big|_{\F^B_T}\right)} < \infty.
\end{equation}
\end{lem}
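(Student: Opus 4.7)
The plan is to establish the bound in three steps, combining the structural bounds from Lemma \ref{L:clear_function_soln} with the asymptotic elasticity of $U$.

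First, I would identify the Radon--Nikodym derivative. From \eqref{E:qprob_m}, \eqref{E:gen_cz_nice} and $\ell(T,x,h) = p_{C_U}(h)e^{-\frac{1}{2}x'P_Ux+x'P_Uh}$, the factor $p_{C_U}(h)$ and the Gaussian weight cancel against $d\wh{\prob}^h/d\prob$, giving
\begin{equation*}
\frac{d\qprob^h}{d\wh{\prob}^h}\bigg|_{\F^B_T} \,=\, \frac{z(X_T,h,\wh{\kappa}(h))}{\wh{m}(h)}, \qquad \wh{m}(h) \dfn \expv{\wh{\prob}^h}{}{z(X_T,h,\wh{\kappa}(h))}.
\end{equation*}
The upper bound in \eqref{E:bounds_1} ensures $\wh{m}(h) \in (0,\wt{z}(h,\wh{\kappa}(h))]$, so the denominator is a finite positive constant.

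Second, I would show $\expv{\wh{\prob}^h}{}{V^+(z)} < \infty$ with $z \dfn z(X_T,h,\wh{\kappa}(h))$. Since $V(y) = U(I(y)) - yI(y) \leq U(I(y))$ and, by the asymptotic elasticity condition in Assumption \ref{A:U}, there exist $\beta' \in (0,1)$ and $C>0$ with $U^+(w) \leq C(1+w^{\beta'})$ for all $w>0$, combining with \eqref{E:bounds_2} and subadditivity of $w \mapsto w^{\beta'}$ yields
\begin{equation*}
V^+(z) \,\leq\, C' \pare{1 + \phi(X_T,h)^{\beta'}}.
\end{equation*}
Since $\phi(x,h) \leq C''(1 + e^{K(1+|x|)} + |x|^2)$ by Assumption \ref{A:Psi} and the form of $\phi$ in \eqref{E:phi_def}, integrability reduces to $\expv{\wh{\prob}^h}{}{e^{c|X_T|}} < \infty$ for all $c>0$. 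Because the tilt satisfies $e^{-\frac{1}{2}x'P_Ux+x'P_Uh} \leq e^{\frac{1}{2}h'P_Uh}$, this further reduces to $\expvs{e^{c|X_T|}} < \infty$, which holds under Assumption \ref{A:X_SDE} since bounded drift and bounded diffusion yield sub-Gaussian tails for $X_T$.

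Third, I would extend to arbitrary $\lambda > 0$ by writing $\lambda \, d\qprob^h/d\wh{\prob}^h = \mu z$ with $\mu \dfn \lambda/\wh{m}(h) > 0$. If $\mu \geq 1$, then $V(\mu z) \leq V(z)$ by monotonicity and the conclusion follows from step two. If $\mu < 1$, after shifting $U$ by an additive constant (which does not affect the problem) so that $V(0^+) > 0$, \cite[Lemma 6.3]{MR1722287} provides constants $\alpha>0$, $y_0 > 0$, $C>0$ with $V(\mu y) \leq C\mu^{-\alpha} V(y)$ for $0 < y \leq y_0$; combining with monotonicity of $V$ on $\cbra{z > y_0}$ gives $V^+(\mu z) \leq C\mu^{-\alpha} V^+(z) + V^+(\mu y_0)$, which integrates to a finite quantity. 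Since $V \leq V^+$, this yields \eqref{E:dual_vf_cond}.

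The main obstacle is step two. If one used instead only the cruder lower bound $z \geq \wt{z}(h,\wh{\kappa}(h))e^{-\phi(X_T,h)}$ from \eqref{E:bounds_1} together with the polynomial-growth bound $V(y) \leq C(1+y^{-\alpha})$ derived from asymptotic elasticity, the resulting $e^{\alpha\phi(X_T,h)}$ factor would fail to be $\wh{\prob}^h$-integrable because $\phi$ inherits the exponential-in-$|x|$ growth of $\Pi'\Psi$. Routing through $V \leq U \circ I$ and exploiting the sharper structural bound \eqref{E:bounds_2} on $I(z)$ instead produces a polynomial-in-$\phi$ control, which survives Gaussian tilting against the sub-Gaussian tails of $X_T$.
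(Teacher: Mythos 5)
Your proof is correct and, apart from the middle step, follows the same logic as the paper: identify $d\qprob^h/d\wh{\prob}^h = z(X_T,h)/\expv{\wh{\prob}^h}{}{z(X_T,h)}$, dispose of the ``small $\lambda$'' case by a scaling estimate on $V$ taken from the Kramkov--Schachermayer circle of results, and reduce the ``large $\lambda$'' case to integrability of $V(z(X_T,h))$. Where you differ is in how you control $V(z)$. The paper uses only monotonicity and convexity of $V$: since $\dot V = -I$ is increasing, $V(z) \leq 2|V(1)| + (1-z)I(z)1_{\{z<1\}} \leq 2|V(1)| + I(z)$, and then \eqref{E:bounds_2} gives a bound that is \emph{linear} in $1 + |X_T|^2 + \Pi'\Psi(X_T)$, whose $\wh{\prob}^h$-integrability is immediate from Assumption \ref{A:Psi} and the Gaussian tilt. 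You instead route through $V^+(z) \leq U^+(I(z))$ and then invoke the asymptotic elasticity condition a second time to get the sublinear growth $U^+(w) \leq C(1+w^{\beta'})$, $\beta'<1$, which yields a bound polynomial in $\phi(X_T,h)$. This works, and your final remark correctly flags why the cruder route $V(y)\lesssim 1+y^{-\alpha}$ combined with the lower bound in \eqref{E:bounds_1} would fail (the resulting $e^{\alpha\phi}$ is not integrable). The paper's bound is a bit cleaner: it does not need AE at all in this step (AE is used only for $\lambda$ below the normalizing constant, as in your step three), and because the bound on $V(z)$ is linear in $I(z)$ rather than a power of it, one never needs to pass through the sub-Gaussian tail estimate for $X_T$ under $\prob$ --- the Gaussian tilt alone suffices once the growth condition $|\Psi(x)| \leq e^{K(1+|x|)}$ is in hand. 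Both arguments are valid; yours simply invokes asymptotic elasticity one more time than necessary.
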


In view of Lemma \ref{L:dual_realization_ok} we may invoke \cite[Theorems 1,2]{MR2023886} to conclude that an optimal $\wh{\xi}^U_T \in \A(w)$ exists and satisfies
\begin{equation*}
    \wh{\xi}^U_T = \imutil\left(\wh{\lambda}  \frac{d\qprob^h}{d\wh{\prob}^h}\big|_{\F^B_T}\right);\qquad \expv{\qprob^h}{}{\wh{\xi}^U_T} = w.
\end{equation*}
Writing this out in terms of $\prob$ and adjusting $\wh{\lambda}$ we have
\begin{equation*}
    \begin{split}
        \wh{\xi}^U_T &= \imutil\left(\wh{\lambda}z(X_T,h)\right);\qquad \expv{\qprob^h}{}{\imutil\left(\wh{\lambda}z(X_T,h)\right)} = w = (\pi^U_0)'\expv{\qprob^h}{}{\Psi(X_T)}.
    \end{split}
\end{equation*}
Lastly, in view of Theorem \ref{T:wU0_exists_gen} we recall that $\wh{\kappa}(h)$ therein ensures $\wh{\lambda} = 1$. Therefore, we have
\begin{equation*}
    \begin{split}
        \wh{\xi}^U_T &= \imutil\left(z(X_T,h)\right);\qquad \expv{\qprob^h}{}{\imutil\left(z(X_T,h)\right)} = w.
    \end{split}
\end{equation*}
Having identified the optimal terminal wealth for each agent, we invoke temporary Assumption \ref{TA:complete} to obtain the optimal policies.  This gives
\begin{equation*}
    \begin{split}
         -\gamma_I \int_0^T (\wh{\pi}^I(h,g)_u)'dS(u,X_u,h)   &= \frac{1}{2}X_T'P_I X_T - X_T'P_Ig + \log(\cz(X_T,h))\\
         &\qquad \qquad - \expv{\qprob^h}{}{\frac{1}{2}X_T'P_I X_T - X_T'P_Ig + \log(\cz(X_T,h))},\\
         -\gamma_N \int_0^T (\wh{\pi}^N(h,g_N)_u)'dS(u,X_u,h) &= \frac{1}{2}X_T'P_I X_T - X_T'P_Ig_N + \log(\cz(X_T,h))\\
         &\qquad\qquad - \expv{\qprob^h}{}{\frac{1}{2}X_T'P_I X_T - X_T'P_Ig_N + \log(\cz(X_T,h))},\\
         \int_0^T (\wh{\pi}^U(h)_u)'dS(u,X_u,h) &= \imutil\left(z(X_T,h)\right) - \expv{\qprob^h}{}{\imutil\left(z(X_T,h)\right)}.
    \end{split}
\end{equation*}
At this point, we must briefly discuss the optimal policies for $I$, $N$, and how they use their respective ``private'' information.  To this end, note that $X_T\in L^1(\qprob^h)$ and hence there is a (matrix-valued) trading strategy $\wh{\pi}^X(h)$ such that $X_T = \expv{\qprob^h}{}{X_T} + \int_0^T \wh{\pi}^X(h)_u dS(u,X_u,h)$ (see \cite[Chapter 4]{MR2057928}). It is clear by uniqueness\footnote{Strict concavity of the utility functions gives almost sure uniqueness of the optimal terminal wealth, see \cite{MR1722287}. In a diffusive setting, this implies uniqueness of the trading strategy up to $\textrm{Leb}_{[0,T]}\times \prob$ negligible sets (\cite{MR1640352}) which, in general, is as precisely as we can enforce the equilibrium clearing condition in \eqref{E:eq_cond}. However, in our Markovian setting, there is sufficient regularity to deduce uniqueness of the trading strategy functions.}  that
\begin{equation*}
    \wh{\pi}^I(h,g)_t = \frac{1}{\gamma_I}\left(\left(\wh{\pi}^X(h)_t\right)'P_I g - \wh{\theta}(h)_t\right);\quad \wh{\pi}^N(h,g_N)_t = \frac{1}{\gamma_N}\left(\left(\wh{\pi}^X(h)_t\right)'P_I g_N - \wh{\theta}(h)_t\right)
\end{equation*}
where
\begin{equation*}
    \begin{split}
         \int_0^T (\wh{\theta}(h)_u)'dS(u,X_u,h)  &= \frac{1}{2}X_T'P_I X_T  + \log(\cz(X_T,h)) - \expv{\qprob^h}{}{\frac{1}{2}X_T'P_I X_T  + \log(\cz(X_T,h))}.
    \end{split}
\end{equation*}
Now, define the $\mcp(\filt^B)$ measurable strategy
\begin{equation*}
    \begin{split}
        \wh{\pi}(h)_t &\dfn \omega_I \wh{\pi}^I(h,g)_t + \omega_N \wh{\pi}^N(h,g_N)_t + \omega_U \wh{\pi}^U(h)_t - \Pi\\
        &= \wh{\pi}^X(h)'_tP_I\left((\alpha_I +\alpha_N \tau_N)h + \alpha_N \mu_N\right) - (\alpha_I+\alpha_N)\wh{\theta}(h)_t + \omega_U \wh{\pi}^U(h)_t - \Pi.
    \end{split}
\end{equation*}
The associated gains process $\wh{\We}_{0,\cdot}$ is a $\qprob^h$ martingale with terminal value
\begin{equation*}
    \begin{split}
        \wh{\We}_{0,T} &= \left((\alpha_I +\alpha_N \tau_N)h + \alpha_N \mu_N\right)'P_I\left(X_T - \expv{\qprob^h}{}{X_T}\right)\\
        &\qquad - (\alpha_I + \alpha_N)\bigg( \frac{1}{2}X_T'P_I X_T + \log(\cz(X_T,h)) - \expv{\qprob^h}{}{\frac{1}{2}X_T'P_I X_T + \log(\cz(X_T,h))}\bigg)\\
        &\qquad + \omega_U\left(\imutil\left(z(X_T,h)\right) - \expv{\qprob^h}{}{\imutil\left(z(X_T,h)\right)}\right) - \Pi'\left(\Psi(X_T) - \expv{\qprob^h}{}{\Psi(X_T)}\right),\\
        &= -\bigg(\Pi'\Psi(X_T) + \frac{1}{2}(\alpha_I + \alpha_N)X_T'(P_I-P_U)X_T - X_T'\left(\left(\alpha_I+\alpha_N\tau_N\right)P_I - (\alpha_I+
        \alpha_N)P_U\right)h\\
        &\qquad - \alpha_N X_T'P_I\mu_N - \omega_U \imutil(z(X_T,h)) + (\alpha_I + \alpha_N)\log(z(X_T,h))\bigg) + \expv{\qprob^h}{}{(\dots)},\\
        &= 0.
        \end{split}
\end{equation*}
Above, the second equality follows from \eqref{E:gen_cz_nice} and the $(\dots)$ is everything within the parentheses.  The third equality follows from \eqref{E:clear_function_00} as the $\wh{\kappa}(h)$ factors out.   By the $\qprob^h$ martingale property we know $\wh{W}_{0,\cdot} \equiv 0$ and hence $\langle \wh{W}_{0,\cdot}\rangle_{\cdot} \equiv 0$. But this implies that $\textrm{Leb}_{[0,T]}\times \prob$ almost surely we have
\begin{equation*}
    0 = \wh{\pi}(h)_t'\sigma(t,X_t,h)\sigma(t,X_t,h)'\wh{\pi}(h)_t
\end{equation*}
Temporary Assumption \ref{TA:complete} ensures $|\wh{\pi}(h)_t| = 0$ almost surely and hence the realization level PCE holds. To conclude, we have just proved
\begin{prop}
Let Assumptions \ref{A:X_SDE}, \ref{A:Psi}, \ref{A:U}, \ref{A:init_pos} as well as Temporary Assumption \ref{TA:complete} hold. Then there is a signal-realization level PCE.
\end{prop}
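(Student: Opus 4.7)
The plan is to fix a signal realization $h$ (together with realizations $g, g_N$ of $G_I, G_N$), so that all three agents work on the common Brownian filtration $\filt^B$ and face the same price process $S^h_t = S(t, X_t, h)$ under the pricing measure $\qprob^h$ of \eqref{E:qprobm_qprobh}. Under Temporary Assumption \ref{TA:complete} the $(\filt^B, S^h)$-market is complete, so it suffices to exhibit each agent's optimal terminal wealth and verify that the aggregate residual portfolio has vanishing gains process; the individual trading strategies themselves come from martingale representation, and portfolio-level clearing will follow from a quadratic-variation argument.

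First, I would solve the three realization-level problems. For $I$ and $N$, the CARA structure combined with the rewriting of the conditional expectation via the Jacod density identities (see Remark \ref{R:cond_exp}) reduces \eqref{E:opt_I_signal} to the explicit exponential-utility problem \eqref{E:optI_realization_level} over $\qprob^h$, whose first order conditions yield the stated formulas for $\wh{\xi}^I_T$ and (via the noise trader convention) $\wh{\xi}^N_T$. For $U$, the initial wealth is $w = (\pi^U_0)'\expv{\qprob^h}{}{\Psi(X_T)} > 0$ by Assumption \ref{A:Psi} and Remark \ref{R:U_positive}. Lemma \ref{L:dual_realization_ok} supplies the dual finiteness \eqref{E:dual_vf_cond}, so the Kramkov--Schachermayer theory (\cite[Theorems 1,2]{MR2023886}) produces a unique optimal terminal wealth of the form $\wh{\xi}^U_T = I(\wh{\lambda}\, z(X_T, h))$. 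The crucial point is that Theorem \ref{T:wU0_exists_gen} is engineered \emph{precisely} so that $\wh{\kappa}(h)$ forces $\wh{\lambda} = 1$: the budget identity \eqref{E:U_budget_new} is exactly $\expv{\qprob^h}{}{I(z(X_T, h, \wh{\kappa}(h)))} = w$.

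Next, completeness supplies predictable replicating strategies $\wh{\pi}^I(h,g), \wh{\pi}^N(h, g_N), \wh{\pi}^U(h)$ for these terminal wealths under $\qprob^h$. Form the residual $\wh{\pi}(h) \dfn \omega_I \wh{\pi}^I + \omega_N \wh{\pi}^N + \omega_U \wh{\pi}^U - \Pi$ and its gains process $\wh{\We}_{0,\cdot}$, which is a $\qprob^h$-martingale. A direct calculation combining the first order conditions for $I$, $N$, the definition of $\wh{\xi}^U_T$, and the replication of $\Pi'(\Psi(X_T) - \expv{\qprob^h}{}{\Psi(X_T)})$ shows $\wh{\We}_{0,T} = 0$; this is exactly the realization-level clearing identity \eqref{E:clear_realization} guaranteed by Lemma \ref{L:clear_gen_new} in view of Theorem \ref{T:wU0_exists_gen}. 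A $\qprob^h$-martingale starting at zero and ending at zero is identically zero, hence $\langle \wh{\We}_{0,\cdot}\rangle_T = 0$, giving $\wh{\pi}(h)_t' \sigma\sigma'(t, X_t, h) \wh{\pi}(h)_t = 0$ for $\Leb_{[0,T]} \times \prob$-a.e.\ $(t,\omega)$. Full rank of $\sigma(t,x,h)$ (Temporary Assumption \ref{TA:complete}) then forces $\wh{\pi}(h) \equiv 0$, which is the desired realization-level clearing condition.

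The main obstacle is the bookkeeping collapse to $\wh{\We}_{0,T} = 0$: the signal-dependent linear pieces in $\omega_I \wh{\xi}^I_T + \omega_N \wh{\xi}^N_T$ must combine exactly into the quantity $((\alpha_I + \alpha_N \tau_N) h + \alpha_N \mu_N)' P_I X_T$ that appears in $\chi(x,h)$, while the remaining logarithmic and quadratic-in-$X_T$ contributions must cancel against $\omega_U I(z) - (\alpha_I + \alpha_N)\log z$ via \eqref{E:clear_function_00} evaluated at $\kappa = \wh{\kappa}(h)$. This is precisely where the engineered form of the market signal $H$ in \eqref{E:H_def} together with the choice of $\wh{\kappa}(h)$ pays off: the coefficients in $H$ are selected so that the population-weighted wealth decompositions telescope, and $\wh{\kappa}(h)$ cancels between the realized and the $\qprob^h$-mean pieces in the aggregate.
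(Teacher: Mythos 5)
Your proof is correct and mirrors the paper's argument step-for-step: reduction to realization-level CARA problems for $I, N$ via the Jacod density identities, the Kramkov--Schachermayer theory (via Lemma \ref{L:dual_realization_ok}) for $U$ with the normalization $\wh{\lambda} = 1$ enforced by Theorem \ref{T:wU0_exists_gen}, and the residual-portfolio/quadratic-variation argument under Temporary Assumption \ref{TA:complete}. The only cosmetic difference is that you conclude $\wh{\We}_{0,T} = 0$ by citing Lemma \ref{L:clear_gen_new} and \eqref{E:clear_realization}, while the paper substitutes \eqref{E:clear_function_00} at $\kappa = \wh{\kappa}(h)$ directly into the aggregate terminal gains; these are equivalent.
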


\subsection*{Signal level PCE} We now establish the PCE at the signal level. In the previous section, we have constructed  strategies at the realization level which clear the market and by plugging in $H,G_I,G_N$ for $(h,g,g_N)$\footnote{Propositions \ref{P:first_first_result} and \ref{P:first_result} of Appendix \ref{AS:init_enlarge}  shows signal-realization level strategies are appropriately measurable.} it still holds that $\textrm{Leb}_{[0,T]}\times \prob$ almost surely
\begin{equation*}
    \Pi = \omega_I \wh{\pi}^I(H,G_I) + \omega_N \wh{\pi}^N(H,G_N) + \omega_U \wh{\pi}^U(H).
\end{equation*}
However, we must verify the above strategies are optimal at the signal level.  To this end, we start with the insider $I$. To connect $\A_I$ from \eqref{E:AI} with $\A_w$ from \eqref{E:optI_realization_level}, let us write, for a given $\pi\in\A_I$, $\We^{\pi}_{0,T} = \We^{\pi}_{0,T}(H,G_I)$ to stress the dependence on $(H,G_I)$.  We then have the following lemma
\begin{lem}\label{L:I_signal_to_real}
If $\pi \in \A_I$ then there is a set $\E^{\pi}$ of full Lebesgue measure such that $\We^{\pi}_{0,T}(h,g) \in \A_0$ for $(h,g) \in \E^{\pi}$.
\end{lem}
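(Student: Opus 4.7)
The plan is to verify, for Lebesgue almost every $(h,g) \in \reals^{2d}$, the three defining conditions of $\A_0$ from \eqref{E:optI_realization_level}: \emph{(i)} $\We^{\pi}_{0,T}(h,g) \in \F^B_T$, \emph{(ii)} $\We^{\pi}_{0,T}(h,g) \in L^1(\qprob^h)$, and \emph{(iii)} $\expv{\qprob^h}{}{\We^{\pi}_{0,T}(h,g)} \leq 0$.

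For \emph{(i)}, since $\pi$ is $\filt^I = \filt^m \vee \sigma(G_I)$ predictable and $S$-integrable, I would invoke the parameter-driven stochastic integration machinery used in \cite{MR510530,MR1775229,MR3758346,MR4192554} to disintegrate $\pi$ as a jointly measurable family $(u,\omega,h,g) \mapsto \pi(u,\omega,h,g)$ with $\pi(\cdot,\cdot,h,g)$ being $\filt^B$-predictable and $S^h$-integrable for each fixed $(h,g)$, and such that
\[\int_0^T \pi_u' dS^H_u = \left(\int_0^T \pi(u,\cdot,h,g)' dS^h_u \right)\bigg|_{(h,g)=(H,G_I)} \quad \text{a.s.}\]
Setting $\We^{\pi}_{0,T}(h,g) := \int_0^T \pi(u,\cdot,h,g)' dS^h_u$ then yields an $\F^B_T$-measurable random variable for every $(h,g)$, proving \emph{(i)}.

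Step \emph{(ii)} is an immediate consequence of the characterization recorded in Remark \ref{R:cond_exp}: the hypothesis $\We^{\pi}_{0,T} \in L^1(\F^I_T,\F^I_0;\qprob^I)$ built into $\pi \in \A_I$ is equivalent to $\expv{\qprob^h}{}{|\We^{\pi}_{0,T}(h,g)|} < \infty$ for a.e. $(h,g)$, which is precisely $\We^{\pi}_{0,T}(h,g) \in L^1(\qprob^h)$. For \emph{(iii)}, I combine \eqref{E:exp1} and \eqref{E:qprobm_qprobh} at $s=t=0$, applied separately to $(\We^{\pi}_{0,T})^{+}$ and $(\We^{\pi}_{0,T})^{-}$; on the full-measure set from \emph{(ii)} both of these have finite $\qprob^h$-expectation, so the generalized conditional expectation coincides with the usual one. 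This yields
\[\condexpv{\qprob^I}{}{\We^{\pi}_{0,T}}{\F^I_0} \;=\; \left(\expv{\qprob^h}{}{\We^{\pi}_{0,T}(h,g)}\right)\bigg|_{(h,g)=(H,G_I)} \quad \text{a.s.}\]
The budget constraint $\condexpv{\qprob^I}{}{\We^{\pi}_{0,T}}{\F^I_0} \leq 0$ then forces $\expv{\qprob^h}{}{\We^{\pi}_{0,T}(h,g)} \leq 0$ for every $(h,g)$ in a set of full $\prob\circ(H,G_I)^{-1}$-measure. Because $Y_I$ and $Y_N$ are non-degenerate Gaussians independent of $X_T$, the law of $(H,G_I) = (X_T + Y_I + (\alpha_N/(\alpha_I+\alpha_N\tau_N))Y_N,\ X_T+Y_I)$ admits a strictly positive density on $\reals^{2d}$, so this exceptional set is Lebesgue-negligible, producing the desired $\E^{\pi}$.

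The main obstacle is step \emph{(i)}: the careful disintegration of an $\filt^I$-predictable, $S$-integrable integrand into a jointly measurable parameter family of $\filt^B$-predictable, $S^h$-integrable integrands, with preservation of the stochastic integral under plugging in $(H,G_I)$. Stochastic integrals are only defined up to null sets, so this requires the measurable-selection and parameter-integration technology of the cited references rather than pathwise definitions; the integrability from \emph{(ii)} is what allows the conditional Fubini-type manipulation in \emph{(iii)} to be applied to each sign component separately.
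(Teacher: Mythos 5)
Your proof is correct and follows essentially the same route as the paper's: the paper's entire argument consists of noting that $\We^{\pi}_{0,T}(H,G_I)$ is $\F^I_T = \F^B_T\times\mathcal B(\reals^{2d})$ measurable, invoking the identity $\condexpv{\qprob^I}{}{(\We^{\pi}_{0,T}(H,G_I))^{\pm}}{\F^I_0} = \left(\expv{\qprob^h}{}{(\We^{\pi}_{0,T}(h,g))^{\pm}}\right)\big|_{(h,g)=(H,G_I)}$ (from \eqref{E:qprobm_qprobh} and \eqref{E:exp1}), and then stating the result readily follows, which is exactly the content of your steps \emph{(ii)} and \emph{(iii)} including the passage from $\prob\circ(H,G_I)^{-1}$-null to Lebesgue-null via the nondegenerate Gaussian law. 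Your step \emph{(i)}, building $\We^{\pi}_{0,T}(h,g)$ as a realization-level stochastic integral, is more than strictly needed (since $\A_0$ only requires $\F^B_T$-measurability, which is immediate from the product structure of $\F^I_T$), but it is harmless and in fact anticipates the machinery the paper uses later via Lemma \ref{L:realization_hedge}.
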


\begin{proof}[Proof of Lemma \ref{L:I_signal_to_real}]
For $\pi\in \A_I$ we know from Proposition \ref{P:I_class_equiv} that $\We^{\pi}_{0,T}(H,G_I)$ is $\F^B_T \otimes \mathcal{B}(\reals^{2d})$ measurable. This implies for Leb a.e. $(h,g)$ that $\We^{\pi}_{0,T}(h,g)$ is $\F^B_T$ measurable, and the integrability condition in $\A_0$ is built into the definition of $\A_I$.
\end{proof}

Next, using Proposition \ref{P:first_result} of Appendix \ref{AS:init_enlarge}, we know $(t,\omega,h,g) \to \wh{\pi}^I(h,g)_t(\omega)$ is $\mcp(\filt^B)\otimes \B(\reals^{2d})$ measurable, and hence $\wh{\pi}^I(H,G_I) \in \A_I$. Therefore, using Proposition  \ref{P:I_class_equiv} we may now prove


\begin{prop}\label{P:optI_same} The strategy
$\wh{\pi}^I(H,G_I)$ is optimal for \eqref{E:opt_I_signal}.
\end{prop}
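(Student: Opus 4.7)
The plan is to reduce the signal-level optimization to the realization-level optimization via a disintegration, and then quote the realization-level optimality that was already established. Concretely, take any competitor $\pi\in\A_I$ and write $\We^{\pi}_{0,T} = \We^{\pi}_{0,T}(H,G_I)$. By Lemma \ref{L:I_signal_to_real}, there is a Lebesgue-null exceptional set outside which $\We^{\pi}_{0,T}(h,g)\in\A_0$, so each such realization is a legitimate competitor in the realization-level problem \eqref{E:optI_realization_level}. At the same time, from the discussion preceding the proposition, Lemma \ref{L:realization_hedge} combined with \eqref{E:connect} ensures $\wh{\pi}^I(H,G_I)\in\A_I$ and that its signal-level gains process coincides (up to indistinguishability) with the realization-level wealth $\wh{\We}^I_{0,T}(h,g)$ evaluated at $(h,g)=(H,G_I)$.

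The next step is the disintegration formula. Writing out the $\F^I_0 = \sigma(H,G_I)$ conditional expectation via \eqref{E:exp1} and \eqref{E:H_to_G}, for any $\F^I_T$-measurable non-negative $\varphi_T(H,G_I)$,
\begin{equation*}
\condexpvs{\varphi_T(H,G_I)}{\F^I_0} = \left(\frac{\expvs{\varphi_T(h,g)\,p_{C_I}(g-X_T)}}{\expvs{p_{C_I}(g-X_T)}}\right)\bigg|_{(h,g)=(H,G_I)}.
\end{equation*}
Expanding $p_{C_I}(g-X_T) = p_{C_I}(g)\exp(-\tfrac{1}{2}X_T'P_IX_T + X_T'P_Ig)$ and cancelling the $g$-dependent prefactor between numerator and denominator, this becomes precisely the realization-level objective in \eqref{E:optI_realization_level}. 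In other words, the signal-level conditional expectation equals the realization-level functional evaluated at $(h,g)=(H,G_I)$, for both the candidate $\wh{\We}^I_{0,T}$ and any competitor $\We^{\pi}_{0,T}$.

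With this identity in hand, the argument closes quickly. For each $(h,g)$ in the full-measure set where $\We^{\pi}_{0,T}(h,g)\in\A_0$, realization-level optimality of $\wh{\xi}^I_T = \wh{\We}^I_{0,T}(h,g)$ (established by the complete-market first order condition in the construction) gives
\begin{equation*}
\frac{\expvs{e^{-\gamma_I\We^{\pi}_{0,T}(h,g)-\frac{1}{2}X_T'P_IX_T+X_T'P_Ig}}}{\expvs{e^{-\frac{1}{2}X_T'P_IX_T+X_T'P_Ig}}} \geq \frac{\expvs{e^{-\gamma_I\wh{\We}^I_{0,T}(h,g)-\frac{1}{2}X_T'P_IX_T+X_T'P_Ig}}}{\expvs{e^{-\frac{1}{2}X_T'P_IX_T+X_T'P_Ig}}}.
\end{equation*}
Substituting $(h,g)=(H,G_I)$ and using the disintegration yields $\condexpvs{e^{-\gamma_I\We^{\pi}_{0,T}}}{\F^I_0}\geq \condexpvs{e^{-\gamma_I\wh{\We}^I_{0,T}}}{\F^I_0}$ almost surely, proving optimality of $\wh{\pi}^I(H,G_I)$.

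The main delicate point is not the optimality inequality itself but the bookkeeping: verifying that the realization-level strategy $(h,g)\mapsto \wh{\pi}^I(h,g)$ can be lifted to an $\mcp(\filt^I)$-measurable process whose signal-level stochastic integral against $S^H$ reproduces $\wh{\We}^I_{0,T}(H,G_I)$, and simultaneously that the generalized conditional expectation $\condexpv{\qprob^I}{}{\wh{\We}^I_{0,T}}{\F^I_0}$ is well defined and equals $0$ (so the budget constraint holds at the signal level). Both issues are handled by Lemma \ref{L:realization_hedge} together with the identity \eqref{E:connect}, which converts pointwise $\qprob^h$-integrability and the zero-budget property into the corresponding $\qprob^I$-statements. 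Once these membership and measurability facts are recorded, the pointwise comparison above transfers directly to the conditional inequality and the proposition follows.
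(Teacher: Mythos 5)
Your proposal is correct and follows essentially the same route as the paper: restrict to the full-measure set from Lemma \ref{L:I_signal_to_real}, apply the disintegration identity \eqref{E:connect} to rewrite the $\F^I_0$-conditional expectation as the realization-level functional in \eqref{E:optI_realization_level}, invoke the realization-level optimality of $\wh{\xi}^I_T$, and substitute $(h,g)=(H,G_I)$ back. The only (harmless) difference is that you spell out the cancellation of the $p_{C_I}(g)$ prefactor and the $\A_I$-membership / budget bookkeeping a bit more explicitly than the paper, which handles those in the discussion immediately preceding the proposition.
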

\begin{proof}[Proof of Proposition \ref{P:optI_same}] As $\wh{\pi}^I(H,G_I) \in \A_I$ it suffices to show it minimizes the conditional expectation in \eqref{E:opt_I_signal}.  To this end, let $\pi\in\A_I$, recall the set $\E^{\pi}$ from Lemma \ref{L:I_signal_to_real}. We then have
\begin{equation*}
    \begin{split}
        \condexpvs{e^{-\gamma_I \We^{\pi}_{0,T}(H,G_I)}}{\F^I_0} &= 1_{\E^{\pi}}(H,G_I)\condexpvs{e^{-\gamma_I \We^{\pi}_{0,T}(H,G_I)}}{\F^I_0}\\
        & = 1_{\E^{\pi}}(H,G_I) \left(\frac{\expvs{e^{-\gamma_I \We^{\pi}_{0,T}(h,g) - \frac{1}{2}X_T'P_I X_T + X_T'P_I g}}}{\expvs{e^{- \frac{1}{2}X_T'P_I X_T + X_T'P_I g}}}\right)\bigg|_{(h,g) = (H,G_I)}\\
        &\geq 1_{\E^{\pi}}(H,G_I) \left(\frac{\expvs{e^{-\gamma_I \wh{\We}^{I}_{0,T}(h,g) - \frac{1}{2}X_T'P_I X_T + X_T'P_I g}}}{\expvs{e^{- \frac{1}{2}X_T'P_I X_T + X_T'P_I g}}}\right)\bigg|_{(h,g) = (H,G_I)}\\
        &= 1_{\E^{\pi}}(H,G_I)\condexpvs{e^{-\gamma_I \wh{\We}^{I}_{0,T}(H,G_I)}}{\F^I_0}\\
        &= \condexpvs{e^{-\gamma_I \wh{\We}^{I}_{0,T}(H,G_I)}}{\F^I_0}.
    \end{split}
\end{equation*}

\end{proof}

Having shown $\wh{\pi}^I(H,G_I)$ is optimal for $I$, it immediately follows by our convention that the noise trader uses the strategy $\wh{\pi}^N(H,G_N) = (\gamma_I/\gamma_N) \wh{\pi}^I(H,G_N)$. Thus, it remains to consider $U$.
Copying Lemma \ref{L:I_signal_to_real}, for each $\pi\in \A_U$ if we write $\We^{\pi}_{T}(H)$ to stress the dependence, we know there is a set $\E^{\pi}$ of full Lebesgue measure such that for $h\in \E^{\pi}$ we have  $\We^{\pi}_T(h) \in \A_{w(h)}$ where $w(h) = (\pi^U_0)'\expv{\qprob^h}{}{\Psi(X_T)}$.    Therefore,

\begin{prop}\label{P:optU_same} The strategy
$\wh{\pi}^U(H)$ solves \eqref{E:opt_U_signal}.
\end{prop}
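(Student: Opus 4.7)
The plan is to mirror the proof of Proposition \ref{P:optI_same}, translating the realization-level optimality of $\wh{\pi}^U(h)$ into signal-level optimality via the Jacod-type identity. Two inputs are already in place from the preamble: first, $\wh{\pi}^U(H) \in \A_U$ by Lemma \ref{L:realization_hedge}; second, for every $\pi \in \A_U$ there is a set $\E^\pi \subset \reals^d$ of full Lebesgue measure such that, writing the terminal wealth as $\We^\pi_T(h) = (\pi^U_0)'S(0,X_0,h) + \We^\pi_{0,T}(h)$, one has $\We^\pi_T(h) \in \A_{w(h)}$ and $\expvs{|U(\We^\pi_T(h))|} < \infty$ for $h \in \E^\pi$, where $w(h) = (\pi^U_0)'\expv{\qprob^h}{}{\Psi(X_T)}$. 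Since $\wh{\pi}^U(H) \in \A_U$, it suffices to show that $\wh{\pi}^U(H)$ maximizes the $\F^m_0$-conditional expectation in \eqref{E:opt_U_signal}.

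The key bridge is the identity
\begin{equation*}
\condexpvs{U(\We^\pi_T(H))}{\F^m_0} = \left(\expv{\wh{\prob}^h}{}{U(\We^\pi_T(h))}\right)\Big|_{h=H},
\end{equation*}
which I would derive as follows. A direct calculation from \eqref{E:ell_T_def} shows $p_{C_U}(h-X_T) = \ell(T,X_T,h)$, so the density of $\wh{\prob}^h$ with respect to $\prob$ on $\F^B_T$ agrees, up to the ($X_T$-independent) normalizer $\ell(0,X_0,h) = \expvs{p_{C_U}(h-X_T)}$, with the $\F^B_T$-conditional density of $H$. Combining this with the conditional expectation formula for $\F^m_T$-measurable random variables $\varphi_T(H)$ recalled just before \eqref{E:optI_realization_level} (and whose integrability holds on $\E^\pi$ by admissibility), yields the displayed identity. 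The same identity applied with $\wh{\pi}^U$ in place of $\pi$ gives a corresponding expression for $\condexpvs{U(\wh{\We}^U_T(H))}{\F^m_0}$.

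Finally, at the realization level $\wh{\We}^U_T(h) = I(z(X_T,h))$ is, by construction in Section \ref{S:clear_solve} together with the fact (guaranteed by Theorem \ref{T:wU0_exists_gen} and the choice of $\wh{\kappa}(h)$) that $\wh{\lambda}(h) = 1$, the unique maximizer of $\xi_T \mapsto \expv{\wh{\prob}^h}{}{U(\xi_T)}$ over $\xi_T \in \A_{w(h)}$. Hence for $h \in \E^\pi$,
\begin{equation*}
\expv{\wh{\prob}^h}{}{U(\We^\pi_T(h))} \leq \expv{\wh{\prob}^h}{}{U(\wh{\We}^U_T(h))}.
\end{equation*}
Multiplying by $1_{\E^\pi}(H)$ and plugging in $H$ via the identity above, then using that $H$ has a density so $1_{\E^\pi}(H) = 1$ $\prob$-a.s., we conclude $\condexpvs{U(\We^\pi_T(H))}{\F^m_0} \leq \condexpvs{U(\wh{\We}^U_T(H))}{\F^m_0}$, completing the proof. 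The only subtlety, handled exactly as in Proposition \ref{P:optI_same}, is bookkeeping of the Lebesgue-null exceptional set; there is no genuine obstacle beyond verifying the measurability and integrability already packaged in the preamble.
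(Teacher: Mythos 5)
Your proposal is correct and follows the same route as the paper's own proof: pass to the realization level via the Jacod-type identity $\condexpvs{U(\We^\pi_T(H))}{\F^m_0} = \bigl(\expv{\wh{\prob}^h}{}{U(\We^\pi_T(h))}\bigr)\big|_{h=H}$, invoke the realization-level optimality of $\wh{\We}^U_T(h)$ from the complete-market problem, and handle the Lebesgue-null exceptional set with the indicator $1_{\E^\pi}(H)$. No meaningful differences.
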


\begin{proof}[Proof of Proposition \ref{P:optU_same}] As $\wh{\pi}^U(H) \in \A_U$ it suffices to show it maximizes the conditional expectation in \eqref{E:opt_U_signal}.  To this end, let $\pi\in\A_U$. We then have
\begin{equation*}
    \begin{split}
        \condexpvs{\util(\We^\pi_{T}(H))}{\F^m_0} &= 1_{\E^{\pi}}(H)\condexpvs{\util(\We^\pi_{T}(H))}{\F^m_0}\\
        & = 1_{\E^{\pi}}(H) \left(\frac{\expvs{\util(\We^{\pi}_T(h))e^{- \frac{1}{2}X_T'P_U X_T + X_T'P_U h}}}{\expvs{e^{- \frac{1}{2}X_T'P_U X_T + X_T'P_U h}}}\right)\bigg|_{h=H}\\
        &\leq 1_{\E^{\pi}}(H) \left(\frac{\expvs{\util(\wh{\We}^U_T(h)e^{- \frac{1}{2}X_T'P_U X_T + X_T'P_U h}}}{\expvs{e^{- \frac{1}{2}X_T'P_U X_T + X_T'P_U h}}}\right)\bigg|_{h=H}\\
        &= 1_{\E^{\pi}}(H)\condexpvs{\util\left(\wh{\We}^U_T(H)\right)}{\F^m_0}\\
        &= \condexpvs{\util\left(\wh{\We}^U_T(H)\right)}{\F^m_0}.
    \end{split}
\end{equation*}
\end{proof}

Putting all these results together, we have proved
\begin{prop}
Let Assumptions \ref{A:X_SDE}, \ref{A:Psi}, \ref{A:U}, \ref{A:init_pos} as well as Temporary Assumption \ref{TA:complete}. Then there exists a PCE.
\end{prop}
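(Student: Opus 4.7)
The plan is to assemble the ingredients already proved in this section, since by this point essentially all the technical work is done. From the signal-realization level PCE we have, for each fixed $(h,g,g_N)$ with $g_N$ recoverable from $(h,g)$, predictable functions $\wh{\pi}^I(h,g)$, $\wh{\pi}^N(h,g_N)$, $\wh{\pi}^U(h)$ in $\mcp(\filt^B)\otimes\B(\reals^{2d})$ such that the realization-level clearing identity $\Pi = \omega_I \wh{\pi}^I(h,g) + \omega_N \wh{\pi}^N(h,g_N) + \omega_U \wh{\pi}^U(h)$ holds $\textrm{Leb}_{[0,T]}\times\prob$ a.s. for each $(h,g)$, with $S$ being the price process $S^H$ constructed from $\cz$ in \eqref{E:gen_cz_nice} and \eqref{E:price_def}.

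First, I would invoke Lemma~\ref{L:realization_hedge} applied separately to $Y=H$ and to $Y=(H,G_I)$ (the latter again satisfies the Jacod condition on $\filt^m$) to upgrade the pointwise measurability of $\wh{\pi}^I(h,g)$, $\wh{\pi}^N(h,g_N)$, $\wh{\pi}^U(h)$ to joint predictability in the appropriate parameter space, and then to conclude that the plugged-in processes $\wh{\pi}^I(H,G_I)$, $\wh{\pi}^N(H,G_N)$, $\wh{\pi}^U(H)$ are $\mcp(\filt^I)$, $\mcp(\filt^I)$, and $\mcp(\filt^m)$ predictable respectively, with the pathwise stochastic integrals coinciding, up to indistinguishability, with $\wh{\We}^I_{0,\cdot}(H,G_I)$, $\wh{\We}^N_{0,\cdot}(H,G_N)$, $\wh{\We}^U_{0,\cdot}(H)$. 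Plugging $(H,G_I,G_N)$ into the realization-level clearing identity gives \eqref{E:eq_cond} $\textrm{Leb}_{[0,T]}\times\prob$ almost surely.

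Second, I would verify the admissibility and optimality at the signal level. Admissibility of $\wh{\pi}^I(H,G_I)\in\A_I$ and $\wh{\pi}^U(H)\in\A_U$ is obtained by combining Lemma~\ref{L:realization_hedge} with the realization-level budget identities $\expv{\qprob^h}{}{\wh{\xi}^I_T}=0$ and $\expv{\qprob^h}{}{\wh{\xi}^U_T}=(\pi^U_0)'\expv{\qprob^h}{}{\Psi(X_T)}$ and the identities \eqref{E:connect}/\eqref{E:qprobm_qprobh} that turn conditional expectations on $\F^I_0$ (resp.\ $\F^m_0$) into parameter-integral expectations under $\qprob^h$; the integrability requirements in the definitions of $\A_I$ and $\A_U$ are precisely the realization-level statements of Remark~\ref{R:cond_exp}. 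Optimality then follows directly from Propositions~\ref{P:optI_same} and~\ref{P:optU_same}, and for $N$ from the noise trader convention.

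Finally, I would check the filtration inclusions $\filt^B\subset\filt^m\subset\filt^I$ demanded by the PCE definition. The inclusions themselves are immediate from $\filt^m=\filt^B\vee\sigma(H)$ and $\filt^I=\filt^m\vee\sigma(G_I)$. For strictness, $\filt^B\subsetneq\filt^m$ because $H=X_T+Y_I+\bigl(\alpha_N/(\alpha_I+\alpha_N\tau_N)\bigr)Y_N$ contains the non-trivial Gaussian noise $Y_I+\bigl(\alpha_N/(\alpha_I+\alpha_N\tau_N)\bigr)Y_N$ independent of $\F^B_T$, while $\filt^m\subsetneq\filt^I$ because $G_I=X_T+Y_I$ cannot be recovered from $(X_T,H)$ so long as $Y_N$ is non-degenerate. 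The main (and really only non-bookkeeping) obstacle in the whole argument is ensuring that the realization-level strategies can be measurably glued into $\filt^I$- and $\filt^m$-predictable processes whose stochastic integrals agree with the realization-level gains; this is exactly the role of Lemma~\ref{L:realization_hedge}, which was imported for this purpose.
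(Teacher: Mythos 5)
Your proposal is correct and follows essentially the same route as the paper: the paper's proof of this proposition is precisely the assembly of the signal-realization level PCE with Lemma~\ref{L:realization_hedge} (to glue the realization-level strategies into predictable processes whose stochastic integrals match the realization-level gains) and Propositions~\ref{P:optI_same} and~\ref{P:optU_same} for optimality. Your added verification that the filtration inclusions $\filt^B\subset\filt^m\subset\filt^I$ are strict (using that $Y_I + \tfrac{\alpha_N}{\alpha_I+\alpha_N\tau_N}Y_N$ is non-degenerate and independent of $\F^B_T$, and that $Y_N\sim N(0,C_N)$ with $C_N\succ 0$ prevents recovery of $G_I$ from $(X_T,H)$) is a welcome explicit check that the paper leaves implicit.
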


\subsection*{Verifying market completeness} The last thing to do is verify temporary Assumption \ref{TA:complete}, which is done in the following proposition.

\begin{prop}\label{P:on_TA_complete}
Let Assumptions \ref{A:X_SDE}, \ref{A:Psi}, \ref{A:U} and \ref{A:init_pos} hold. Then Temporary Assumption \ref{TA:complete} holds.
\end{prop}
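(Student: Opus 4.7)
The plan is to reduce the full-rank claim to a Jacobian-rank statement for the pricing function $S(t,x,h)$, and then to invoke the non-degeneracy results of \cite{MR3131287, MR3989959} that Assumptions \ref{A:X_SDE} and \ref{A:Psi} were expressly designed to enable. Since $a(x)$ is invertible by Assumption \ref{A:X_SDE}, and since It\^o's formula applied to the $(\qprob^h,\filt^B)$ martingale $S^h_t = S(t,X_t,h)$ yields $\sigma(t,x,h) = \nabla_x S(t,x,h)\cdot a(x)$, it is enough to show that $\nabla_x S(t,x,h)$ has full rank for Lebesgue a.e.\ $(t,x)\in[0,T]\times\reals^d$, for each fixed $h$.

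First, I would cast $S(t,x,h) = \expv{\qprob^h}{}{\Psi(X_T)\mid X_t=x}$ as a pricing function in a (time-inhomogeneous) diffusion. Because the density $d\qprob^h/d\prob|_{\F^B_T}$ is proportional to $\cz(X_T,h)$, a function of $X_T$ alone (see \eqref{E:qprobm_qprobh} and \eqref{E:gen_cz_nice}), a Doob $h$-transform applied to $q(t,x,h)\dfn\expvs{\cz(X_T,h)\mid X_t=x}$ shows that under $\qprob^h$,
\[
dX_t = \pare{b(X_t) + a(X_t)a(X_t)'\nabla_x \log q(t,X_t,h)}dt + a(X_t)dB^h_t,
\]
for a $(\qprob^h,\filt^B)$ Brownian motion $B^h$. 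Thus $X$ remains a diffusion under $\qprob^h$ with the same volatility $a(x)$, and only the drift has been perturbed.

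Second, I would verify the hypotheses of \cite{MR3131287, MR3989959} for this perturbed triple (drift, volatility, payoff). Assumption \ref{A:X_SDE} supplies boundedness, Lipschitz continuity and uniform ellipticity of $a$; Assumption \ref{A:Psi} supplies the a.e.\ full-rank Jacobian and the exponential growth bound on $\Psi$ required for $\qprob^h$-integrability of $\Psi(X_T)$ (combined with the upper bound $z(x,h,\wh{\kappa}(h))\leq \wt z(h,\wh{\kappa}(h))$ from Lemma \ref{L:clear_function_soln} and the Gaussian factor $\ell(T,\cdot,h)$ from \eqref{E:ell_T_def}). Since $\cz$ is a product of a Gaussian in $x$ and the controlled function $z(\cdot,h,\wh\kappa(h))$, the extra drift term $aa'\nabla_x\log q(t,\cdot,h)$ inherits the regularity needed for the cited results. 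Applying them yields that $\nabla_x S(t,x,h)$ has full rank for Lebesgue a.e.\ $(t,x)$; multiplying by the invertible $a(x)$ on the right then gives the claim.

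I expect the main obstacle to be Step 2, namely verifying that the $h$-transformed drift and the resulting pricing function fit into the analytic/regular framework of \cite{MR3131287, MR3989959}. Concretely, one must check that $\log q(t,x,h)$ is sufficiently smooth in $x$ with controlled derivatives, which requires differentiating under the integral using the Gaussian weight and the sandwich bound $\wt z(h,\wh\kappa(h)) e^{-\phi(x,h)}\leq z(x,h,\wh\kappa(h))\leq \wt z(h,\wh\kappa(h))$ of Lemma \ref{L:clear_function_soln}. The remark following Assumption \ref{A:X_SDE} flags exactly this technicality and asserts the conditions are arranged so that this step goes through.
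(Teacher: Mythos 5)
Your overall strategy---reduce to a full-rank claim for $\nabla_x S(t,x,h)$ and then invoke \cite{MR3131287,MR3989959}---is the right idea, but the route you take to get there (Doob $h$-transform) is genuinely different from the paper's and introduces a gap the paper's route avoids.

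The paper does not transform the diffusion. The framework of \cite[Theorem 2.3]{MR3131287} already builds in a terminal weight function $G$ so that the pricing functional
\[
S(t,x,h) = \frac{\condexpvs{\Psi(X_T)G(X_T)}{X_t=x}}{\condexpvs{G(X_T)}{X_t=x}}
\]
is handled directly, with $X$ still under $\prob$. The paper therefore sets $F(x)=\Psi(x)$ and $G(x) = z(x,h,\wh\kappa(h))e^{-\frac{1}{2}x'P_Ux + x'P_Uh}$, with all remaining data (drift perturbation $\gamma$, running flows $f,g$, discounting $a$, $\beta$) identically zero. The only nontrivial thing to check is the exponential growth bound $|F^j_i|+|G_i|\leq e^{N(1+|x|)}$, which the paper does by computing $z_i/z$ via implicit differentiation of \eqref{E:clear_function_00} and observing that the Gaussian factor $e^{-\frac{1}{2}x'P_Ux+x'P_Uh}$ dominates everything else. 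This is short, and crucially it never needs to study the time-$t$ law under $\qprob^h$.

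Your proposal, by contrast, transfers the problem to a drifted diffusion under $\qprob^h$ with drift $b + aa'\nabla_x\log q(t,\cdot,h)$ and then tries to verify the hypotheses of the cited theorems for this perturbed system. This is where it breaks. As $t\uparrow T$, $q(t,x,h)\to G(x)$, and $\nabla_x\log G(x) = z_x/z - P_U(x-h)$ grows at least linearly in $|x|$ (the Gaussian factor contributes a term linear in $x$, and $z_x/z$ can contribute on top of that via $\nabla\Psi$). Hence the transformed drift is unbounded and not uniformly Lipschitz, so the hypotheses that Assumption \ref{A:X_SDE} was designed to supply for the \emph{original} system fail for the \emph{transformed} one. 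You gesture at this obstacle in your last paragraph and assert the remark after Assumption \ref{A:X_SDE} resolves it, but that remark is about relaxing regularity of $(b,a)$ for the original $X$, not about controlling $\nabla_x\log q$. To make your route rigorous you would need a version of the non-degeneracy theorems accommodating an unbounded, time-inhomogeneous drift with growth linked to the Gaussian weight, which is a substantially different (and harder) verification than the one actually needed.

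In short: the Doob-transform picture is conceptually illuminating, but it is not how the paper argues, and as stated it does not close. The cleaner move---and the one the paper makes---is to recognize that the martingale-measure density is a function of $X_T$ alone, feed it in as the weight $G$ that the cited theorem already accepts, and check only an exponential growth bound on $\nabla G$.
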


\begin{proof}[Proof of Proposition \ref{P:on_TA_complete}]

The result will follow provided we verify the assumptions in \cite[Theorem 2.3]{MR3131287} for each fixed signal realization $h$. To connect to the notation in \cite{MR3131287}, we have (the symbol on the left side of the equation is the associated one in \cite{MR3131287})
\begin{equation*}
    \begin{split}
        \gamma &\equiv 0, a \equiv 0, f \equiv 0, g\equiv 0, \beta\equiv 0, G(x) = z(x, h, \wh{\kappa}(h))e^{-\frac{1}{2}x'P_u x + x'P_U h}, F(x)  = \Psi(x).
    \end{split}
\end{equation*}
\cite[Assumptions (A1),(A3)]{MR3131287} clearly hold and \cite[Assumption (A2)]{MR3131287} will hold provided we may find $N = N(h)$ such that
\begin{equation}\label{E:KP_bounds_new}
    |F^j_{i}| + |G_{i}| \leq e^{N(1+|x|)};\qquad i = 1,...,d
\end{equation}
where, as in Assumption \ref{A:Psi}, superscript $j$ denotes the $j^{th}$ component of a vector and subscript $i$ is the derivative relative to the $i^{th}$ component of the argument. By Assumption \ref{A:Psi} this clearly holds for $F$. As for $G$, calculations similar to those in Lemma \ref{L:clear_function_soln} show that
\begin{equation*}
    \begin{split}
    &\frac{z_i}{z}(x,h,\wh{\kappa}) = \frac{1}{\alpha_U(\imutil(z)) + \alpha_I + \alpha_N} \bigg( -\sum_{j=1}^d \Pi^j \psi^j_i(x) + (\alpha_I + \alpha_N)((P_I-P_U)x)^i\\
    &\qquad\qquad - (((\alpha_I + \alpha_N\tau_N)P_I - (\alpha_I + \alpha_N)P_U)h)^{i} - \alpha_N (P_I\mu_N)^i\bigg).
    \end{split}
\end{equation*}
As $\alpha_U(\imutil(z))>0$, by Assumption \ref{A:Psi} the magnitude of the right side above is bounded by $C(1+|x| + e^{K|x|})$ for some $C=C(h)$ (which might change from line to line below).  \eqref{E:KP_bounds_new} readily follows from Lemma \ref{L:clear_function_soln} as
\begin{equation*}
    \begin{split}
    |G_i(x)| &\leq z(x,h,\wh{\kappa}(h)) e^{-\frac{1}{2}x'P_Ux + x'P_U h}\left(\left|\frac{z_i}{z}\right|(x,h,\wh{\kappa}) + |(P_U(x-h))^{i}|\right),\\
    &\leq \wt{z}(h,\wh{\kappa}(h)) e^{-\frac{1}{2}x'P_Ux + x'P_U h} C\left(1+|x| + e^{K|x|}\right) \leq C.
    \end{split}
\end{equation*}
Here, the last inequality holds because the map
\begin{equation*}
    x \to e^{-\frac{1}{2}x'P_U x + x'P_U h}\left(1 + |x| + e^{K|x|}\right)
\end{equation*}
is bounded from above on $\reals^d$ by a constant depending only on $h$.
\end{proof}

Putting everything together yields Theorem \ref{T:main_result}, except for the DNREE statement, which is addressed in the next section.


\section{Existence of a DNREE}\label{S:DNREE}

We have shown existence of a PCE under Assumptions \ref{A:X_SDE}, \ref{A:Psi}, \ref{A:U} and \ref{A:init_pos}. We obtain the existence of a DNREE, according to Definition \ref{D:DNREE}, if the signal $H$ is instantaneously recoverable using the information in  $\filt^{B,S}_{+}$, the right-continuous enlargement of $(S,B)$'s natural filtration. In \cite{MR4192554}, we showed that if $U$ has CARA preferences and $\Psi(x) = x$ (or in the one dimensional case, if $\Psi(x)$ is monotonic), then the PCE is a DNREE because the time $t=0$ price is invertible in the signal.  This was a very special outcome of CARA preferences, and while numerical computations suggest the time $0$ price is invertible in $H$, a formal proof of this fact has been elusive.  However, this does not rule out the signal being recoverable at any time $t>0$, and hence existence of a  DNREE in the sense of Definition \ref{D:DNREE}, as we now show in the univariate case.   Results are valid assuming
\begin{ass}\label{A:DNREE_d1}
The dimension is $d=1$.
\end{ass}

Recall the pricing function $S = S(t,x,h)$ from \eqref{E:price_funct_def} (see also \eqref{E:qprobm_to_qprobh}). We first state the following result, whose proof is given in Appendix \ref{AS:DNREE}.
\begin{prop}\label{P:DNREE_1}
Let Assumptions \ref{A:X_SDE},  \ref{A:Psi}, \ref{A:U}, \ref{A:init_pos} and \ref{A:DNREE_d1} hold.  Then the following statement is not true
\begin{enumerate}[]
\item ``There exists $h_1,h_2$ and $\eps > 0$ such that $S(\cdot,h_1) = S(\cdot,h_2)$ on $(0,\eps) \times\reals$.''
\end{enumerate}
\end{prop}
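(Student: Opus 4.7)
The plan is to reduce the hypothesized equality of pricing functions on a small time slab to a pointwise algebraic identity between the measure-change densities $K(x,h) := z(x,h,\wh{\kappa}(h))\,\ell(T,x,h)$ at $h_1$ and $h_2$, and then derive a sign contradiction from the explicit structural form of $K$. Set $w^h(t,x) := \esp\bra{K(X_T,h)\mid X_t = x}$. By Girsanov, the $\qprob^h$-drift of $X$ is $b^h(t,x) = b(x) + a(x)^2\partial_x\log w^h(t,x)$, and by Feynman--Kac, $S(\cdot,\cdot,h)$ solves $\partial_t S + b^h\partial_x S + \tfrac{1}{2}a^2\partial_{xx}S = 0$ with $S(T,\cdot,h) = \Psi$.

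Assume toward contradiction that $S(\cdot,h_1) \equiv S(\cdot,h_2) =: \bar{S}$ on $(0,\eps)\times\reals$ with $h_1\neq h_2$. Subtracting the PDEs at $h_1,h_2$ on the slab gives $(b^{h_1}-b^{h_2})\partial_x\bar{S} = 0$. By market completeness (Proposition~\ref{P:on_TA_complete}) the volatility $a(x)\partial_x S(t,x,h)$ is nonzero Lebesgue-a.e., hence by continuity $\partial_x\bar{S}$ is nonzero on a dense open subset of the slab; continuity of $b^{h_i}$ then forces $b^{h_1}\equiv b^{h_2}$, and hence $\partial_x\log w^{h_1}\equiv\partial_x\log w^{h_2}$, on $(0,\eps)\times\reals$. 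Thus $w^{h_1}(t,x)/w^{h_2}(t,x)$ depends only on $t$ there; inserting into the backward PDE $\partial_t w + Lw = 0$ (with $L$ the $\prob$-generator of $X$) forces this ratio to be a positive constant $C$.

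The function $u := w^{h_1} - C\,w^{h_2}$ is then a bounded classical solution of $\partial_t u + Lu = 0$ on $[0,T]\times\reals$ (boundedness of $K(\cdot,h_i)$ in $x$ follows from Lemma~\ref{L:clear_function_soln} and the Gaussian factor in $\ell$) that vanishes on the open slab $(0,\eps)\times\reals$. Classical backward uniqueness for bounded solutions of uniformly parabolic equations with bounded coefficients then gives $u \equiv 0$ on $[0,T]\times\reals$; evaluating at $t=T$ yields the pointwise identity
\[
K(x,h_1) = C\,K(x,h_2), \qquad x\in\reals.
\]

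Unpacking $K(x,h) = z(x,h,\wh{\kappa}(h))\,p_{C_U}(h)\,e^{-\frac{1}{2}P_U x^2 + P_U x h}$ this becomes $\log z_1(x) - \log z_2(x) = \textrm{const} + P_U x(h_2 - h_1)$, where $z_i := z(\cdot,h_i,\wh{\kappa}(h_i))$. Combining with $\omega_U I(z_i(x)) - (\alpha_I+\alpha_N)\log z_i(x) = \vartheta(x,h_i) + \wh{\kappa}(h_i)$ and the explicit form of $\vartheta$ (in which the $h$-dependent linear-in-$x$ coefficient is $-((\alpha_I+\alpha_N\tau_N)P_I - (\alpha_I+\alpha_N)P_U)$) yields, after cancellation,
\[
\omega_U\bra{I(z_1(x)) - I(z_2(x))} = (\alpha_I+\alpha_N\tau_N)\,P_I\,(h_2 - h_1)\,x + \textrm{const}, \qquad x\in\reals.
\]
In the two cases of interest $\alpha_I+\alpha_N\tau_N > 0$; taking WLOG $h_2 > h_1$, the right side is a strictly increasing linear function of $x$, diverging to $\pm\infty$ at $\pm\infty$. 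On the other hand, $z_1(x)/z_2(x) = \tilde{C}\,e^{P_U x(h_2-h_1)}$ together with monotonicity of $I$ forces $I(z_1)-I(z_2) < 0$ for all sufficiently large positive $x$ (since $z_1 > z_2$ eventually) and $> 0$ for all sufficiently large negative $x$: the sign of $I(z_1)-I(z_2)$ is opposite to that predicted by the linear formula at both $\pm\infty$, a contradiction. Hence $h_1 = h_2$. The hardest step is the backward-uniqueness extension from the slab to the full time interval, which relies on classical unique-continuation theorems for bounded solutions of uniformly parabolic equations; once the pointwise identity on $K$ is in hand, the final sign comparison is direct.
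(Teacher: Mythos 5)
Your proposal follows the same architecture as the paper's proof: reduce the pricing-function equality on the slab to $\partial_x \log w^{h_1} = \partial_x \log w^{h_2}$ (your $w^h$ is the paper's $e^{\Lambda(\cdot,h)}$ up to a constant) via market completeness and cancellation of the $h$-dependent drift in the pricing PDE; deduce that the ratio $w^{h_1}/w^{h_2}$ is constant on the slab (the paper reaches the equivalent conclusion via It\^{o}'s formula and the support theorem applied to $\Lambda(\cdot,h_1)-\Lambda(\cdot,h_2)$, you get it directly from the parabolic PDE and positivity of $w^{h_2}$ -- both work); form the difference $u = w^{h_1} - C w^{h_2}$, which is a bounded solution of $\partial_t u + L u = 0$ vanishing on the slab; and invoke a backward-uniqueness / unique-continuation result to conclude $u(T,\cdot) \equiv 0$, giving the pointwise identity $K(\cdot,h_1) = C K(\cdot,h_2)$. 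The paper cites the specific Carleman-type result of \cite{MR3989959} at this step, whereas you appeal to ``classical backward uniqueness,'' which is the same class of result; your verification that $u$ is bounded (via the Gaussian decay of $\ell(T,\cdot,h)$ and the bound $z \leq \wt{z}$ from Lemma \ref{L:clear_function_soln}) is exactly what is needed.

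Where you go beyond the paper is the final step. The paper declares the identity $w(T,\cdot) \equiv$ constant to be ``an obvious contradiction if $h_1 \neq h_2$'' without elaboration. Your explicit sign argument -- combining $\log(z_1/z_2) = \mathrm{const} + P_U(h_2-h_1)x$ with the clearing equation \eqref{E:clear_function_00} to obtain $\omega_U\bra{I(z_1)-I(z_2)} = (\alpha_I + \alpha_N \tau_N) P_I (h_2-h_1) x + \mathrm{const}$, and then observing that the growth of the exponential ratio forces $I(z_1)-I(z_2)$ to have the wrong sign at both infinities -- is a genuine and worthwhile fill-in of that gap, and the algebra checks out. The one caveat is that your contradiction explicitly requires $\alpha_I + \alpha_N \tau_N > 0$; you flag this as covering ``the two cases of interest,'' but the paper's Theorem \ref{T:main_result} does not impose a sign restriction on $\tau_N$ (only that $\alpha_I + \alpha_N\tau_N \neq 0$, so that $H$ in \eqref{E:H_def} is well-defined). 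If $\alpha_I + \alpha_N\tau_N < 0$ were permitted, the two linear functions of $x$ you compare have slopes of the same sign and your sign contradiction at $\pm\infty$ evaporates, so a different argument would be needed; it would be cleaner to state this hypothesis explicitly rather than bury it as a remark, or to note that the construction implicitly presupposes it.
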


Using this result we finish the proof of Theorem \ref{T:main_result} by proving

\begin{prop}\label{P:PCE_is_DNREE}

Let Assumptions \ref{A:X_SDE},  \ref{A:Psi}, \ref{A:U}, \ref{A:init_pos} and \ref{A:DNREE_d1} hold.   Then the PCE of Theorem \ref{T:main_result} is a DNREE, in the sense of Definition \ref{D:DNREE}.
\end{prop}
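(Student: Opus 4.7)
Since by construction $\filt^m = \filt^B \vee \sigma(H)$ and since right-continuity automatically gives $\filt^{B,S}_{+} \subseteq \filt^m$, the plan is to establish the reverse inclusion by showing that $H$ is $\filt^{B,S}_{0+}$-measurable, i.e., $H \in \filt^{B,S}_\eps$ for every $\eps > 0$. Because $X$ is the strong solution of an SDE driven by $B$ (Assumption \ref{A:X_SDE}), we have $\filt^{B,S}_\eps = \sigma(B_t, S_t : t \le \eps) = \sigma(X_t, S_t : t \le \eps)$, so it is enough to recover $H$ from $(X, S)$ observed on $[0, \eps]$.

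The key reduction is to prove the following \emph{injectivity claim}: for any $h_1 \ne h_2$ and any $\eps > 0$,
\[
\prob\bigl[S(t, X_t, h_1) = S(t, X_t, h_2) \text{ for all } t \in (0, \eps)\bigr] = 0.
\]
To prove this, fix $h_1 \ne h_2$, set $D(t, x) \dfn S(t, x, h_1) - S(t, x, h_2)$, and suppose the event above has positive probability. By Feynman-Kac applied to $S(t, x, h) = \expv{\qprob^h}{}{\Psi(X_T) \such X_t = x}$ together with Assumptions \ref{A:X_SDE}, \ref{A:Psi} (and the integrability provided by Lemma \ref{L:clear_function_soln} via $\cz(x, h)$), the function $D$ is $C^{1,2}$ on $(0, T) \times \reals$, so Itô's formula yields
\[
D(t, X_t) = D(0, X_0) + \int_0^t (\partial_s D + b D_x + \tfrac{1}{2} a^2 D_{xx})(s, X_s)\, ds + \int_0^t D_x(s, X_s) a(X_s)\, dB_s.
\]
On the event in question $D(\cdot, X_\cdot) \equiv 0$ on $(0, \eps)$, hence its quadratic variation vanishes, forcing $D_x(s, X_s)^2 a(X_s)^2 = 0$ for Lebesgue-a.e.\ $s \in (0, \eps)$. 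Since $a^{-1}$ is bounded (Assumption \ref{A:X_SDE}) we have $a \ne 0$, and since the non-degenerate diffusion $X$ admits a strictly positive transition density on $(0, \eps) \times \reals$, a Fubini argument yields $D_x(s, x) = 0$ for Lebesgue-a.e.\ $(s, x) \in (0, \eps) \times \reals$. Continuity of $D_x$ then gives $D_x \equiv 0$ on $(0, \eps) \times \reals$; combining with $D(s, X_s) = 0$ for $s$ in a dense subset of $(0, \eps)$ and the support of $X_s$ being all of $\reals$, we conclude $D \equiv 0$ on $(0, \eps) \times \reals$. This contradicts Proposition \ref{P:DNREE_1}, establishing the injectivity claim.

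The final step is to upgrade the injectivity claim to measurability of $H$ with respect to $\filt^{B,S}_\eps$. The map $(h, \omega) \mapsto \bigl(t \mapsto S(t, X_t(\omega), h)\bigr)_{t \in [0, \eps]}$ from $\reals \times \Omega$ into $C([0, \eps])$ is jointly measurable. Fixing a countable dense set $\{h_k\}_{k \geq 1} \subset \reals$ and applying the injectivity claim along pairs $(h_i, h_j)$ gives, outside a null set $N$, that any two realizations $h \ne h'$ yielding the same path $S(\cdot, X_\cdot(\omega), \cdot)|_{[0,\eps]}$ would force $D \equiv 0$ for approximating $(h_i, h_j)$ by continuity in $h$ of the pricing function; thus off $N$ the map $h \mapsto S(\cdot, X_\cdot(\omega), h)$ is injective. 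A standard measurable selection argument (e.g., the Jankov-von Neumann theorem, or regular conditional probabilities together with the fact that an a.s.\ injective image yields a Dirac conditional law) then produces a Borel function $\Upsilon$ with $H = \Upsilon\bigl((X_t, S_t)_{t \in [0, \eps]}\bigr)$ $\prob$-a.s., so $H \in \filt^{B,S}_\eps$. Letting $\eps \downarrow 0$ gives $\sigma(H) \subseteq \filt^{B,S}_{0+}$, completing the identification $\filt^m = \filt^{B,S}_{+}$.

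The main obstacle will be the regularity needed to run the Itô step rigorously (ensuring $S(\cdot, \cdot, h)$ is $C^{1,2}$ with continuous $D_x$ up to $t = 0+$, and that $X$ has a positive transition density on $(0, \eps) \times \reals$), together with the measurable-selection upgrade from the countable dense injectivity to genuine $\filt^{B,S}_\eps$-measurability of $H$; both are technical but standard given Assumptions \ref{A:X_SDE}-\ref{A:U} and Proposition \ref{P:DNREE_1}.
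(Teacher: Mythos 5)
Your high-level strategy (reduce to showing $H$ is $\filt^{B,S}_{0+}$-measurable, show a pathwise injectivity statement, then conclude by right-continuity) matches the paper's, and both arguments ultimately rest on Proposition \ref{P:DNREE_1}. However, the way you try to establish the injectivity claim differs from the paper and contains a genuine gap. The paper invokes the Stroock--Varadhan support theorem to pass from the observed \emph{path} $\cbra{(t,X_t,S_t)}_{t\leq\eps}$ to the entire pricing \emph{function} $\cbra{S(t,x,h_0)}_{t\leq\eps,x\in\reals}$, and then applies Proposition \ref{P:DNREE_1} directly. You instead try to derive the function-level identity $D\equiv 0$ from the pathwise identity $D(\cdot,X_\cdot)\equiv 0$ via quadratic variation, and this is where the argument breaks.

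Concretely: the quadratic-variation argument gives $D_x(s,X_s(\omega))=0$ for Lebesgue-a.e.\ $s\in(0,\eps)$ only for $\omega$ in the event $A\dfn\cbra{D(t,X_t)=0\ \forall t\in(0,\eps)}$, where you are assuming only $\prob[A]>0$. Your ``Fubini argument'' then asserts $D_x(s,x)=0$ for Lebesgue-a.e.\ $(s,x)\in(0,\eps)\times\reals$. This does not follow: Fubini plus the positivity of the unconditional transition density yields only that, for a.e.\ $s$, the set $\cbra{x: D_x(s,x)\neq 0}$ is null for the \emph{conditional} law $\prob[X_s\in\cdot\,|\,A]$. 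That conditional law is absolutely continuous but its support can be a strict subset of $\reals$ when $\prob[A]<1$; a null set for it need not be Lebesgue null. To close the gap you would need either (i) to first upgrade $\prob[A]>0$ to $\prob[A_0]=1$ for a germ event $A_0\in\F^B_{0+}$ via the Blumenthal $0$--$1$ law, or (ii) to invoke the real-analyticity in $x$ of $S(s,\cdot,h)$ (available through \cite{MR3131287}, and already used implicitly in Proposition \ref{P:on_TA_complete} and \ref{P:DNREE_1}) so that vanishing on a set of positive Lebesgue measure forces $D_x(s,\cdot)\equiv 0$. Neither step appears in your proposal. A second, smaller issue: in the measurable-selection step, injectivity of $h\mapsto S(\cdot,X_\cdot(\omega),h)$ along a countable dense set of pairs, together with continuity in $h$, does \emph{not} imply injectivity on all of $\reals$ (a continuous map can separate every pair from a dense countable set and still fail to be injective). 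The paper sidesteps this by treating the entire function $\cbra{S(t,x,h_0)}_{t\leq\eps,x\in\reals}$ as the observed object and using Proposition \ref{P:DNREE_1} for the uniqueness, rather than arguing injectivity along paths and then trying to densify.
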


\begin{proof}[Proof of Theorem \ref{P:PCE_is_DNREE}]

Let $h_0$ be the (potentially unknown) signal realization and $\eps>0$. By the support theorem \cite{MR0400425} and continuity of $S$, the information in $\F^{B,S}_{+,\eps}$ allows us to see the collection of numbers $\cbra{S(t,x,h_0)}_{t\leq \eps, x\in\reals}$. By Proposition \ref{P:DNREE_1} there is only one $h$ compatible with the collection and hence we can back out $h_0$. As this works for any realization we see that $\F^m_0 \subseteq \F^{B,S}_{+,\eps} \subseteq \F^m_{+,\eps}=\F^m_{\eps}$, for any $\eps > 0$ and hence by right-continuity of both $\filt^m$ and $\filt^{B,S}_{+}$ we see that $\F^m_0 = \F^m_{+,0} = \F^{B,S}_{+,0}$ and the DNREE follows.
\end{proof}

\section{Relative Risk Aversion Asymptotics}\label{S:asympt}

While Theorem \ref{T:main_result} establishes existence of a PCE and DNREE, as $U$'s preferences are general, explicit formulas for equilibrium quantities are difficult to obtain.  This fact remains true even when $U$ has CRRA preferences.   To obtain more concrete expressions, in this section we specify to when $U$ has CRRA utility and consider the risk aversion limits $\eta_U \to \infty$ and $\eta_U \to 0$. Proofs of the propositions below are given in Appendix \ref{AS:asympt}.

We first let $\eta_U\to \infty$.  To gain intuition for the result, consider \eqref{E:clear_function_00} when $\imutil(z) = z^{-1/\eta_U}$.  Formally, if $\eta_U = \infty$ then $z^{-1/\eta_U} = 1$ and by absorbing $\omega_U$ into $\kappa$ we expect that
\begin{equation*}
    z(x,h,\kappa) = e^{-\frac{\vartheta(x,h)+\kappa}{\alpha_I + \alpha_N}}.
\end{equation*}
In view of \eqref{E:gen_cz_nice}, \eqref{E:price_def} and \eqref{E:qprobm_to_qprobh} one expects the limiting price function to be
\begin{equation}\label{E:large_etaU_price}
    S(t,x,h;\infty) = \frac{\condexpvs{\Psi(X_T) e^{-\frac{\vartheta(X_T,h)}{\alpha_I + \alpha_N} - \frac{1}{2}X_T'P_U X_T + X_T'P_U h}}{X_t = x}}{\condexpvs{ e^{-\frac{\vartheta(X_T,h)}{\alpha_I + \alpha_N} - \frac{1}{2}X_T'P_U X_T + X_T'P_U h}}{X_t = x}}.
\end{equation}
This argument in fact leads to the correct price, as the following proposition shows.

\begin{prop}\label{P:large_etaU}
Let Assumptions \ref{A:X_SDE}, \ref{A:Psi} and \ref{A:init_pos} hold, and assume CRRA preferences for $U$ with relative risk aversion $\eta_U$. Then
\begin{equation*}
    \lim_{\eta_U \to \infty} S(t,x,h;\eta_U) = S(t,x,h;\infty),
\end{equation*}
for $S(\cdot,\infty)$ as in \eqref{E:large_etaU_price}
\end{prop}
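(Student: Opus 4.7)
The plan is to leverage the explicit CRRA form \eqref{E:z_crra} combined with the change of variable $\hat W(x,h,\kappa) \dfn I(z(x,h,\kappa)) = z(x,h,\kappa)^{-1/\eta_U}$, which recasts the clearing condition \eqref{E:clear_function_00} as
\[
    \omega_U \hat W(x,h,\kappa) + (\alpha_I+\alpha_N)\eta_U \log \hat W(x,h,\kappa) = \vartheta(x,h) + \kappa,
\]
or equivalently $\log z(x,h,\kappa) = -(\vartheta(x,h) + \kappa - \omega_U \hat W(x,h,\kappa))/(\alpha_I+\alpha_N)$. Heuristically, after dividing the recast equation by $\eta_U$ the $\omega_U \hat W/\eta_U$ term is negligible, so $(\alpha_I+\alpha_N)\log\hat W \approx (\vartheta+\kappa)/\eta_U$; provided $\hat\kappa(h;\eta_U)/\eta_U$ stays bounded, $\hat W(x,h,\hat\kappa)$ becomes asymptotically independent of $x$ as $\eta_U\to\infty$. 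Consequently the $x$-dependence of $\log z$ collapses to the single term $-\vartheta(x,h)/(\alpha_I+\alpha_N)$, while all $\hat\kappa$-, $\hat W$-, and $\eta_U$-dependent pieces are $x$-independent and cancel between the numerator and denominator of the price ratio.

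The first key step is to make this heuristic rigorous. Lemma \ref{L:clear_function_soln} sandwiches $\hat W$ via $\wt z(h,\hat\kappa)^{-1/\eta_U} \leq \hat W(x,h,\hat\kappa) \leq \wt z(h,\hat\kappa)^{-1/\eta_U}\, e^{\phi(x,h)/\eta_U}$. Since $\phi(x,h)/\eta_U\to 0$ pointwise, this forces $\hat W(x,h,\hat\kappa) - \wt W(h,\hat\kappa)\to 0$ with $\wt W \dfn \wt z^{-1/\eta_U}$, provided $\wt W$ stays bounded. Feeding this back into the displayed identity for $\log z$ yields, for any fixed reference point $x_0$,
\[
    \frac{z(x,h,\hat\kappa(h;\eta_U))}{z(x_0,h,\hat\kappa(h;\eta_U))} \longrightarrow \exp\!\pare{-\tfrac{\vartheta(x,h)-\vartheta(x_0,h)}{\alpha_I+\alpha_N}}
\]
pointwise in $x$. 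Since the $x_0$-normalization is $x$-independent and hence cancels in the ratio
\[
    S(t,x,h;\eta_U) = \frac{\condexpvs{\Psi(X_T)\, z(X_T,h,\hat\kappa)\, e^{-\frac12 X_T'P_U X_T + X_T'P_U h}}{X_t=x}}{\condexpvs{z(X_T,h,\hat\kappa)\, e^{-\frac12 X_T'P_U X_T + X_T'P_U h}}{X_t=x}},
\]
the pointwise limit of the normalized integrand agrees with the integrand defining $S(t,x,h;\infty)$ in \eqref{E:large_etaU_price}.

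To legitimize the interchange of limit and integral I will invoke dominated convergence. The upper bound $z\leq \wt z$ in \eqref{E:bounds_1}, after the $x_0$-normalization and using $\phi\geq 0$ (which follows from $\Pi'\Psi>0$ and $P_I\succeq P_U$, the latter because $C_U = C_I + (\alpha_N/(\alpha_I+\alpha_N\tau_N))^2 C_N \succeq C_I$), together with the Gaussian factor $e^{-\frac12 x' P_U x + x'P_U h}$ and the exponential growth bound $|\Psi(x)|\leq e^{K(1+|x|)}$ from Assumption \ref{A:Psi}, produces an integrable dominating function uniform in $\eta_U$. Dominated convergence then delivers convergence of both numerator and denominator, and taking the ratio yields $S(t,x,h;\eta_U)\to S(t,x,h;\infty)$.

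The main obstacle will be showing that $\wt W(h,\hat\kappa(h;\eta_U))$ stays bounded away from $0$ and $\infty$ as $\eta_U\to\infty$. This should follow from the budget constraint \eqref{E:U_budget_new}, which forces $\expv{\qprob^h}{}{\hat W} = w^U_0(h;\eta_U) = (\pi^U_0)' S(0,x_0,h;\eta_U)$; the sandwich then transfers bounds on $w^U_0$ to bounds on $\wt W$, with $w^U_0>0$ coming from Remark \ref{R:U_positive} and a uniform upper bound coming from the Gaussian decay of $\qprob^h$ together with the exponential growth of $\Psi$. A subsequence argument may be needed because uniqueness of $\hat\kappa(h;\eta_U)$ is not guaranteed by Theorem \ref{T:wU0_exists_gen} once $\eta_U>1$, where $w\dot U(w) = w^{1-\eta_U}$ is decreasing.
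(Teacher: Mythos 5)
Your proposal is correct, and it takes a genuinely different route from the paper's proof. The paper normalizes $z(x,h,\wh{\kappa}(h,\eta_n))$ by $e^{\eta_n\wt{\kappa}_n}$ where $\wt{\kappa}_n = \wh{\kappa}(h,\eta_n)/((\alpha_I+\alpha_N)\eta_n)$, works heavily with the explicit Lambert-function representation \eqref{E:z_crra}, identifies the limit $\wt{\kappa}$ of $\wt{\kappa}_n$ via the budget constraint, and then spends roughly half the proof ruling out $\wt{\kappa}_n \to \pm\infty$ by a subsequence argument. Your normalization by $z(x_0,h,\wh{\kappa})$ for a fixed reference $x_0$ is cleverer: it is manifestly $x$-independent and so cancels in the price ratio, which means you never need to identify any constant. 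Writing the clearing condition as $(\alpha_I+\alpha_N)\log z = \omega_U \hat W - \vartheta - \kappa$, the normalized ratio depends on $\kappa$ only through $\hat W(x) - \hat W(x_0)$, and the sandwich $\wt W \leq \hat W(x) \leq \wt W e^{\phi(x)/\eta_U}$ gives $|\hat W(x) - \hat W(x_0)| \leq \wt W\,(e^{\phi(x)/\eta_U} + e^{\phi(x_0)/\eta_U} - 2)$, which vanishes once $\wt W$ is bounded. For that, $\wt W \leq \expv{\qprob^h}{}{\hat W} = w^U_0$ and $w^U_0 \leq e^{\phi(x_0,h)}\expvs{(\pi^U_0)'\Psi(X_T)\ell(T,X_T,h)}/\expvs{e^{-\phi(X_T,h)}\ell(T,X_T,h)}$ (obtained by the same $z(x_0)$-normalization inside $\qprob^h$), which is a finite $\eta_U$-independent constant. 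In particular, this bound holds for \emph{every} $\wh{\kappa}$ satisfying \eqref{E:U_budget_new}, so the subsequence argument you cautiously flag at the end is actually unnecessary — your approach buys a cleaner argument that sidesteps both the Lambert-function computations and the divergence-exclusion step that dominate the paper's proof. The dominated convergence step is also cleaner in your version: $z(x)/z(x_0) \leq e^{\phi(x_0,h)}$ is a constant dominating function directly from \eqref{E:bounds_1}, whereas the paper has to establish an $\eta_n$-uniform bound for $z\,e^{\eta_n\wt{\kappa}_n}$ by a more delicate Lambert-function estimate.
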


We next consider $\eta_U \to 0$, obtaining an a priori surprising result. To state it, we first fix $h$ and consider solutions $\kappa = \wh{\kappa}(h;0)$ to
\begin{equation}\label{E:small_etaU_kappa_eqn}
    \frac{1}{\omega_U} = \frac{\expvs{(\pi_0^U)'\Psi(X_T)e^{\frac{(\vartheta(X_T,h)+\kappa)^{-}}{\alpha_I + \alpha_N} - \frac{1}{2}X_T'P_U X_T + X_T'P_U h}}}{\expvs{(\vartheta(X_T,h)+\kappa)^{+}e^{- \frac{1}{2}X_T'P_U X_T + X_T'P_U h}}}.
\end{equation}
As Assumption \ref{A:Psi} implies, for fixed $h$, that $x \to \vartheta(x,h)$ is bounded from below, it is clear there exists a unique $\wh{\kappa}(h;0)$ solving the above.  With this notation we have
\begin{prop}\label{P:small_etaU}
Let Assumptions \ref{A:X_SDE}, \ref{A:Psi} and \ref{A:init_pos} hold, and assume CRRA preferences for $U$ with relative risk aversion $\eta_U$. Then
\begin{equation}\label{E:small_eta_px_limit}
    \lim_{\eta_U \to 0} S(t,x,h;\eta_U) = S(t,x,h;0) \dfn \frac{\condexpvs{\Psi(X_T) e^{\frac{(\vartheta(X_T,h)+\wh{\kappa}(h;0))^{-}}{\alpha_I + \alpha_N} - \frac{1}{2}X_T'P_U X_T + X_T'P_U h}}{X_t = x}}{\condexpvs{ e^{\frac{(\vartheta(X_T,h)+\wh{\kappa}(h;0))^{-}}{\alpha_I + \alpha_N} - \frac{1}{2}X_T'P_U X_T + X_T'P_U h}}{X_t = x}}.
\end{equation}
where $\wh{\kappa}(h;0)$ solves \eqref{E:small_etaU_kappa_eqn}.
\end{prop}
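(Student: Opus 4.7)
The plan is to pass to the limit in the equilibrium characterization via three steps: identify the pointwise limit of $z(x,h,\kappa;\eta_U)$, show $\wh{\kappa}(h;\eta_U)\to\wh{\kappa}(h;0)$, and then invoke dominated convergence in the price ratio. For the first, fix $(x,h,\kappa)$; the CRRA form of \eqref{E:clear_function_00} reads $\omega_U z^{-1/\eta_U}-(\alpha_I+\alpha_N)\log z=\vartheta(x,h)+\kappa$. Splitting on the sign of $\vartheta+\kappa$: when $\vartheta+\kappa>0$, matching a positive RHS forces $\log z\to 0$ (so $z\to 1$) and $z^{-1/\eta_U}\to(\vartheta+\kappa)/\omega_U$; when $\vartheta+\kappa<0$, the power term must vanish and the logarithm carries the RHS, giving $z\to e^{-(\vartheta+\kappa)/(\alpha_I+\alpha_N)}$ with $z^{-1/\eta_U}\to 0$. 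Consolidating,
\begin{equation*}
z(x,h,\kappa;\eta_U)\longrightarrow e^{(\vartheta(x,h)+\kappa)^-/(\alpha_I+\alpha_N)}, \qquad I(z(x,h,\kappa;\eta_U))\longrightarrow (\vartheta(x,h)+\kappa)^+/\omega_U.
\end{equation*}

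For the second, write $F(\kappa;\eta_U)$ for the RHS of \eqref{E:U_budget_new} viewed as a function of $\kappa$, so $F(\wh{\kappa}(h;\eta_U);\eta_U)=0$. For $\eta_U\leq 1$, CRRA satisfies $w\dot U(w)$ non-decreasing, so Theorem \ref{T:wU0_exists_gen} delivers uniqueness and strict monotonicity of $F(\cdot;\eta_U)$. Combining Step 1 with the uniform bound $z(x,h,\kappa;\eta_U)\leq \wt{z}(h,\kappa;\eta_U)$ from Lemma \ref{L:clear_function_soln} (with $\wt{z}(h,\kappa;\eta_U)$ continuous in $(\kappa,\eta_U)$ near $\eta_U=0$, hence locally bounded) and Assumption \ref{A:Psi} to dominate $\Psi(X_T)$, DCT yields $F(\kappa;\eta_U)\to F(\kappa;0)$ pointwise in $\kappa$, where $F(\cdot;0)=0$ is precisely \eqref{E:small_etaU_kappa_eqn}. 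Checking that $F(\cdot;0)$ is strictly monotone with $F(-\infty;0)=+\infty$ and $F(+\infty;0)=-\infty$, which uses Assumption \ref{A:Psi}'s control on $\vartheta$ and $\Psi$, pins down a unique root. Any subsequential limit of $\wh{\kappa}(h;\eta_U)$ must solve the limit equation, so the full sequence converges to $\wh{\kappa}(h;0)$.

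For the third, from \eqref{E:price_def} and \eqref{E:qprobm_qprobh}, $S(t,x,h;\eta_U)$ is the ratio of $\condexpvs{\Psi(X_T)z(X_T,h,\wh{\kappa}(h;\eta_U);\eta_U)e^{-\frac{1}{2}X_T'P_UX_T+X_T'P_Uh}}{X_t=x}$ to the same conditional expectation without the $\Psi(X_T)$ factor. Applying the pointwise limit of $z$ at the now-convergent argument $\wh{\kappa}(h;\eta_U)$, together with uniform domination by a multiple of $(1+|\Psi(X_T)|)e^{-\frac{1}{2}X_T'P_UX_T+X_T'P_Uh}$ obtained from $\wt{z}(h,\wh{\kappa}(h;\eta_U);\eta_U)$ being bounded along the convergent sequence, DCT produces \eqref{E:small_eta_px_limit}. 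I expect the main obstacle to be securing this uniform-in-$\eta_U$ domination along the moving argument $\wh{\kappa}(h;\eta_U)$; once the a priori convergence in Step 2 bounds the $\wh{\kappa}$'s, continuity of $\wt{z}$ in $(\kappa,\eta_U)$ at $(\wh{\kappa}(h;0),0)$ closes the gap and the remainder of the proof is a tight combination of case analysis, monotonicity, and DCT.
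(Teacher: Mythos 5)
Your plan reproduces the paper's Proposition \ref{P:small_etaU} by the same basic route (identify the pointwise limit of $z$, show $\wh{\kappa}(h;\eta_U)\to\wh{\kappa}(h;0)$, pass to the limit in the price ratio via DCT), but you substitute an elementary case analysis on the sign of $\vartheta+\kappa$ for the paper's Lambert-function calculus, and a monotonicity-of-$F$ argument for the paper's sequence-by-sequence analysis. The first substitution is sound and arguably cleaner; the limits you obtain agree with the paper's \eqref{E:small_eta_zlim}. The price-ratio step is also fine once convergence of $\wh{\kappa}$ is secured, since \eqref{E:bounds_1}--\eqref{E:bounds_2}, the boundedness of $\wt z(h,\kappa;\eta_U)$ and of $I(\wt z(h,\kappa;\eta_U))$ as $\eta_U\downarrow 0$, and Assumption \ref{A:Psi} supply the needed $\eta_U$-uniform dominant for both $z\ell$ and $I(z)z\ell$.

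The gap is in your Step 2 closing sentence. You say that the pointwise convergence $F(\kappa;\eta_U)\to F(\kappa;0)$ and the unique root of $F(\cdot;0)$ imply that ``any subsequential limit of $\wh{\kappa}(h;\eta_U)$ must solve the limit equation, so the full sequence converges.'' This only rules out multiple \emph{finite} accumulation points; it does not, as written, exclude $\wh{\kappa}(h;\eta_U)\to\pm\infty$, which has no finite subsequential limit at all. The paper devotes the final third of its proof precisely to excluding this by explicit contradiction estimates. You could close the gap more cleanly than the paper does using exactly the tools you already invoke: since $F(\cdot;\eta_U)$ is strictly decreasing for each $\eta_U\le 1$, $F(\kappa;\eta_U)\to F(\kappa;0)$ pointwise, and $F(\wh{\kappa}(h;0)-\eps;0)>0>F(\wh{\kappa}(h;0)+\eps;0)$, one gets $F(\wh{\kappa}(h;0)-\eps;\eta_U)>0>F(\wh{\kappa}(h;0)+\eps;\eta_U)$ for all small $\eta_U$, whence monotonicity pins the root into $(\wh{\kappa}(h;0)-\eps,\wh{\kappa}(h;0)+\eps)$. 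That sandwich argument simultaneously proves boundedness and convergence, and should replace your ``any subsequential limit'' sentence. One further minor overstatement: $F(-\infty;0)=+\infty$ is not quite right --- for $\kappa$ very negative the quantity is a $\qprob$-weighted average of $(\pi^U_0)'\Psi(X_T)$, hence strictly positive but possibly finite --- but $F(-\infty;0)>0>F(+\infty;0)$ is all the sandwich requires.
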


\begin{rem}\label{R:small_etaU} Vanishing risk aversion, $\eta_U \to 0$, intuitively suggests $U$ approaches risk neutrality.  But if $U$ were truly risk neutral, this agent's behavior would be expected to drive the equilibrium price to the risk neutral price (i.e., the risk neutral agent would be expected to dominate in the price formation process) given by
\begin{equation*}
    S_t = \condexpvs{\Psi(X_T)}{\F^m_t} = S(t,X_t,H;rn);\quad S(t,x,h; rn) \dfn \frac{\condexpvs{\Psi(X_T) e^{ - \frac{1}{2}X_T'P_U X_T + X_T'P_U h}}{X_t = x}}{\condexpvs{ e^{- \frac{1}{2}X_T'P_U X_T + X_T'P_U h}}{X_t = x}}.
\end{equation*}
Now, as $\vartheta(x,h)$ is bounded from below in $x$ for $h$ fixed, it is natural to wonder if $\wh{\kappa}(h;0)$ is such that $\vartheta(x,h) + \wh{\kappa}(h;0)\geq 0$.  Indeed, if this were the case then $S(t,x,h;0) = S(t,x,h;rn)$.  However, if this were true, then in view of \eqref{E:small_etaU_kappa_eqn} one would have
\begin{equation*}
    \wh{\kappa}(h;0) = \frac{ \expvs{\left(\omega_U(\pi_0^U)'\Psi(X_T)-\vartheta(X_T,h)\right)e^{- \frac{1}{2}X_T'P_U X_T + X_T'P_U h}} }{\expvs{e^{- \frac{1}{2}X_T'P_U X_T + X_T'P_U h}}}.
\end{equation*}
Therefore, it is natural to define $\wh{\kappa}(h;0)$ by the right side above, and then check if $\vartheta(x,h) + \wh{\kappa}(h;0)\geq 0$.  If so, then the limiting price is the risk neutral price.  However, as can be seen when $X$ is Gaussian, typically, one does not have $\vartheta(x,h) + \wh{\kappa}(h;0)\geq 0$ for all signal realizations $h$, and hence even as the uninformed agent's relative risk aversion vanishes,  the limit price is not the risk neutral price. The explanation for this counter-intuitive phenomenon is that, unlike a truly risk neutral agent, power utility enforces a non-negativity constraint, even in the limiting case.
\end{rem}

\section{Example}\label{S:EX}  We consider a univariate case, d=1, with $X_T = (\mu_X - (1/2)\sigma_X^2)T + \sigma_X B_T$ and $\Psi(x) = e^x$, i.e., the terminal payoff is a geometric Brownian motion and the signal is on the log-payoff, or alternatively the signal is multiplicative, rather than additive, on the terminal payoff. Lastly, we consider $U$ with CRRA utility. Therefore, Assumptions \ref{A:X_SDE}, \ref{A:Psi}, \ref{A:U} and \ref{A:init_pos} are all satisfied.

With all the assumptions met, we obtain a PCE and DNREE from Theorem \ref{T:main_result}.
Next, we examine the impact of coefficients on the price, volatility and market price of risk (MPR), in the context of this example. For numerical computations, we use the baseline set of parameters
\begin{equation}
\label{E:GBM_parameters}
    \begin{split}
        \mu_X &= 0.1,\quad \sigma_X = .3,\quad \Pi = 1,\quad T= 1,\\
        \omega_U &= \omega_I  = \omega_N = \frac{1}{3},\quad \gamma_I = \gamma_N = 3,\quad \eta_U = 5,\\
        C_I &= \sigma_X^2,\quad \tau_N = 1,\quad \mu_N = 0,\quad C_N = C_I.
    \end{split}
\end{equation}
Note in particular that we are assuming the noise trader is a mis-perceiving insider.

\subsection*{Sensitivity with respect to uninformed initial endowment} Figure  \ref{F:Endow} plots the time 0 price, volatility and market price of risk versus the weighted uninformed initial share endowment, for various values of the signal $h$.  Here, we see that price and volatility increase slightly with the endowment, while the MPR decreases significantly. An increase in $U$'s endowment increases his initial wealth. This raises $U$'s demand for the risky asset, leading to an excess aggregate demand for the asset. Adjustment to equilibrium is then consistent with a simultaneous increase in the volatility and decrease in the MPR. The latter effect increases the stock price.

\begin{figure}
\centering
\includegraphics[height=4cm,width=6cm]{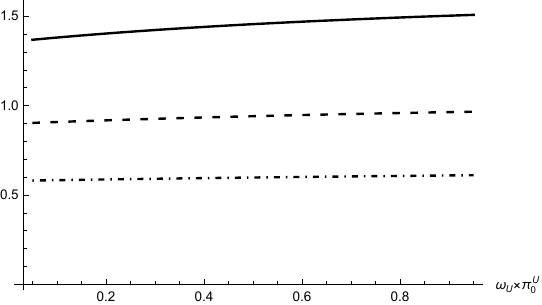}\ \includegraphics[height=4cm,width=6cm]{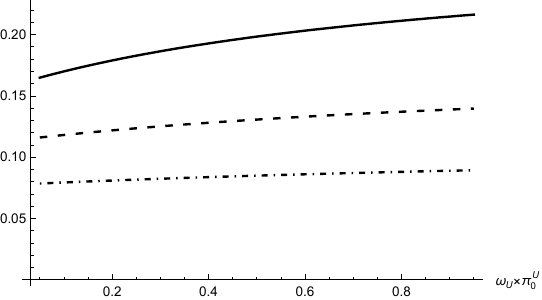}\ \includegraphics[height=4cm,width=8cm]{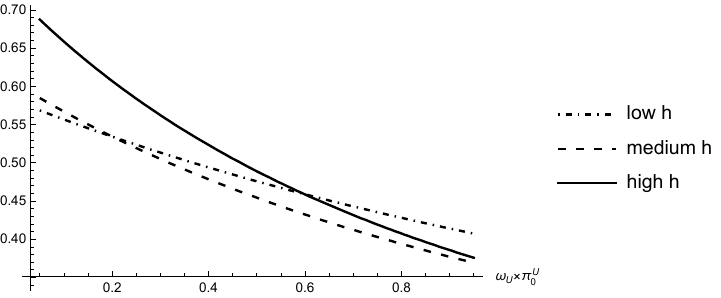}
\caption{Time 0 price (upper left plot), volatility (upper right plot) and market price of risk (lower plot) as a function of the uninformed weighted initial share position for various signal realizations. Parameters are given in \eqref{E:GBM_parameters}.}
\label{F:Endow}
\end{figure}

\subsection*{Sensitivity with respect to uninformed relative risk aversion} Figure \ref{F:riskav} 
plots the time 0 price, volatility and MPR versus the relative risk aversion of $U$ for various values of $h$.  Here, we see a decrease in the price and volatility coefficient as risk aversion increases, and an increase in the MPR. In essence, the more risk averse $U$ requires greater compensation for a given level of risk, leading to an increase in the MPR and a reduction in the asset price. These adjustments are consistent with a simultaneous decrease in the equilibrium volatility. The limit prices in Figure \ref{F:riskav} correspond to those in Propositions \ref{P:large_etaU} and \ref{P:small_etaU}.

\begin{figure}
\centering
\includegraphics[height=4cm,width=6cm]{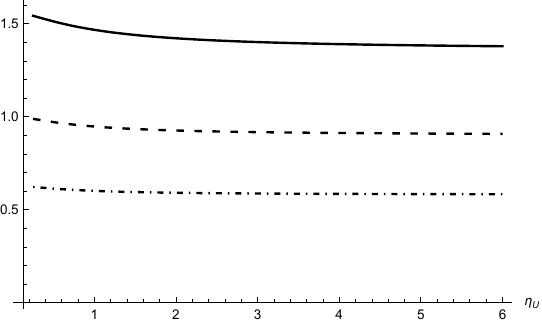}\ \includegraphics[height=4cm,width=6cm]{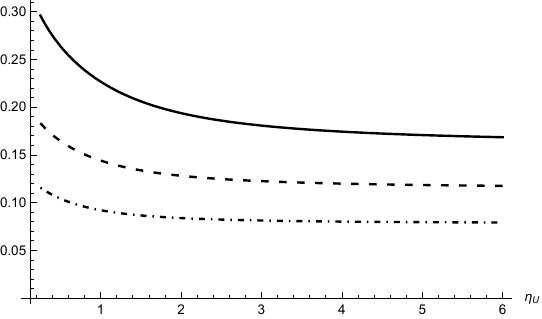}\ \includegraphics[height=4cm,width=8cm]{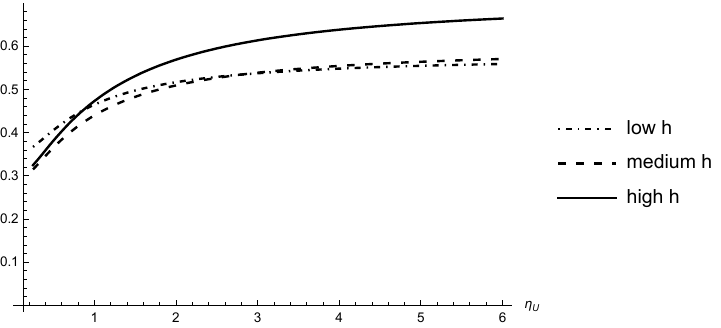}
\caption{Time 0 price (upper left plot), volatility (upper right plot) and market price of risk (lower plot) as a function of the uninformed relative risk aversion for various signal realizations. Parameters are given in \eqref{E:GBM_parameters}.}
\label{F:riskav}
\end{figure}


\subsection*{Sensitivity with respect to noise trader precision} Figure \ref{F:ntprecision}
plots the time 0 price, volatility and MPR versus the noise trader precision $P_N = 1/C_N$, for various values of $h$.\footnote{The computation of $h$ accounts for the fact that the distribution of $h$ depends on $P_N$.} When $P_N\to\infty$ the noise trader becomes an insider, when $P_N \to 0$ the market signal becomes pure noise.  The plots show the reaction depends on the signal realization. For high (low) $h$ the price and volatility decrease (increase) and the MPR increases (decreases). An increase in the noise trader precision implies the market signal is more reliable and the informational advantage of the informed shrinks. The demand for the asset then decreases (increases) if the signal is sufficiently high (low), leading to the effects described. In the limit, as $P_N\rightarrow \infty$, $U$ effectively becomes an insider and the price and MPR both become insensitive to $P_N$. At the opposite extreme, as $P_N\rightarrow 0$, $U$ is unable to extract any useful information from $h$ and information disparity is greatest. Any small change in the neighborhood of $P_N=0$ then has a large impact on heterogeneity and leads to a strong response of equilibrium quantities.


\begin{figure}
\centering
\includegraphics[height=4cm,width=6cm]{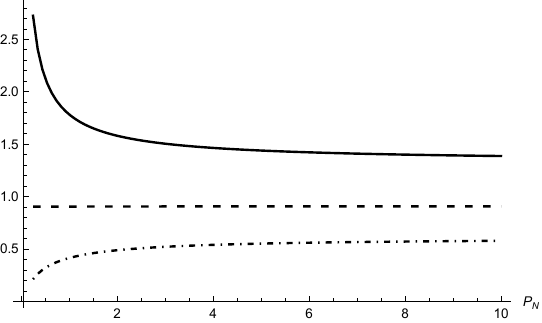}\ \includegraphics[height=4cm,width=6cm]{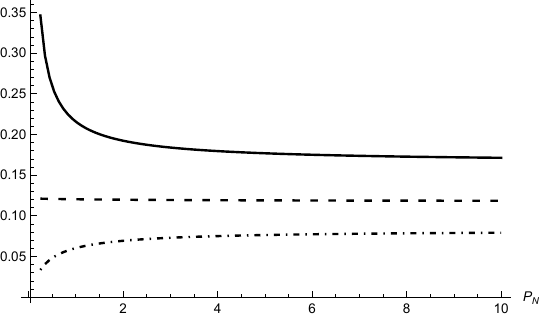}\ \includegraphics[height=4cm,width=8cm]{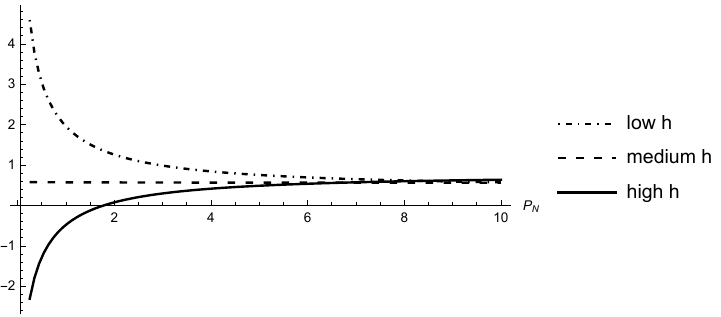}
\caption{Time 0 price (upper left plot), volatility (upper right plot) and market price of risk (lower plot) as a function of the noise trader precision for various signal realizations. Parameters are given in \eqref{E:GBM_parameters}.}
\label{F:ntprecision}
\end{figure}

\appendix

\section{Selected Results on Initial Enlargements}\label{AS:init_enlarge}



In this section, we provide selected results on initially enlarged filtrations. These results are taken almost directly from \cite[Online Appendix]{MR4192554}\footnote{The results in  \cite[Online Appendix]{MR4192554} are themselves a collection of results found in \cite{MR510530,MR1775229, MR3758346,MR2233544, MR3706797, MR3758346}.}  but for ease of reference we repeat them here. We take a complete filtered probability space $\basis$ where $\filt$ satisfies the usual conditions. There is a $d$-dimensional $(\prob,\filt)$ Brownian motion $B$, which has the predictable representation property, but we do not mandate $\filt=\filt^B$. Next, let $\Y \subseteq \reals^m$ be an open set, with Borel sigma-algebra $\B(\Y)$, and for ease of terminology, define

\begin{defn}\label{D:G_prog_mbl} $\theta:[0,T]\times\Omega\times \Y \to \reals^d$ is \emph{$\Y$-predictable} (respectively \emph{$\Y$ -optional}) if $\theta$ is $\mcp(\filt)\otimes \B(\Y)$
(resp. $\OO(\filt)\otimes\B(\Y)$) measurable.
\end{defn}

Let $Y$ be a random variable taking values in $\mathcal{Y}$, and assume

\begin{ass} \label{A:jacod_equiv_abs} For $t\leq T$, $\condprobs{Y\in \cdot}{\F_t}\sim \prob\bra{Y\in\cdot}$ almost surely. Denote by $p^y_t(\omega) = p(t,\omega,y)$ the resultant density,  and by $\lambda$ the unconditional law of $Y$.
\end{ass}

Define $\filtg \dfn \filt \vee \sigma(Y)$. The first lemma contains three results from \cite{MR3758346}.

\begin{lem}\label{L:fontana_facts} Let Assumption \ref{A:jacod_equiv_abs} hold. Then (1) $p$ is $\mathcal{Y}$-optional; (2) $\filtg$ satisfies the usual conditions; (3) $1/p^Y$ is a strictly positive $(\mathbb{P},\mathbb{G})$ martingale with constant expectation $1$; with $\tprob^Y$ denoting the martingale preserving measure defined on $\G_T$ by $d\tprob^Y/d\prob |_{\G_T} = 1/p^Y_T$ we have that $B$ is a $(\tprob^Y,\filtg)$ Brownian motion. In particular, $B$ is a $\filtg$ semi-martingale.
\end{lem}

\begin{proof}[Proof of Lemma \protect\ref{L:fontana_facts}] This follows directly from \cite[Lemma 2.3, Lemma 4.2, Proposition 4.4]{MR3758346} respectively. Part (4) is easily obtained by copying the proofs of \cite[Theorem 3.1, Theorem 3.2]{MR1775229}. 
\end{proof}


The next proposition concerns stochastic integrals when the integrand depends upon $Y$.

\begin{prop}\label{P:first_first_result} Let Assumption \ref{A:jacod_equiv_abs} hold, and let $\theta$ be $\Y$-predictable and such that for $\lambda$ almost every $y\in\Y$ and each $T>0$ we have $\int_0^T |\theta^y_t|^2 dt < \infty$ almost surely.  Then (1) The stochastic integral $\int_0^\cdot (\theta^y_u)'dB_u$ is $\Y$-predictable; (2) The stochastic integral $\int_0^\cdot (\theta^Y_u)'dB_u$ is well defined; and (3) $\int_0^\cdot (\theta^Y_u)'dB_u$ and $\left(\int_0^\cdot (\theta^y_u)'dB_u\right)\big|_{y=Y}$ are indistinguishable.
\end{prop}

\begin{proof}[Proof of Proposition \protect\ref{P:first_result}]  $(1)$ is proved in \cite[Proposition 5]{MR510530} when $\expvs{(\int_0^T |\theta(t,\cdot,y)|^2 dt)^{1/2}} < \infty$ (and noting the integral sample paths are continuous). For the general case, set $\theta_n = \theta 1_{|\theta|\leq n}$ and write $M^n$ as the resultant $\Y$-predictable map. Clearly, we have $\prob \text{-} \lim_{n,m
\rightarrow\infty} \int_0^T |\theta_n(t,\cdot,y)-\theta_m(t,\cdot,y)|^2 dt = 0$, and hence by \cite[Prop. 3.2.26]{MR1121940}, $\mathbb{P} \text{-} \lim_{n,m\rightarrow\infty} \sup_{t\leq T}\left|M_n(t,\cdot,y)-M_m(t,\cdot,y)\right| = 0$. The result follows using \cite[Proposition 1]{MR510530} with $\F,\prob$ there-in being $\mcp(\filt)$ and $\prob\times \text{Leb}_{[0,T]}$ respectively. For part $(2)$, we first note that by \cite[Lemma 4.2]{MR2233544}, $\theta^Y\in\mcp(\filtg)$. Next, Lemma \ref{L:fontana_facts} implies $B$ is a $(\tprob^Y,\filtg)$ Brownian motion, and hence the result will follow if $\prob\bra{\int_0^T|\theta^Y_u|^2 du <\infty} = 1$ as $\tprob^Y\sim \prob$. By Fubini we know that $1_{\int_0^T |\theta^Y_u|^2 du <\infty} = (1_{\int_0^T |\theta^y_u|^2 du <\infty})|_{y=Y}$, and that $(\omega,g) \to 1_{(\int_0^T |\theta^y_u|^2 du)(\omega) <\infty}$ is $\F_T\otimes\B(\Y)$ measurable. Thus, from \cite[Equation (4.1)]{MR3758346} we conclude
\begin{equation}  \label{E:first_1}
\prob\bra{\int_0^T |\theta^Y_u|^2 du <\infty} = \int_{\Y}\expvs{p^y_T 1_{\int_0^T |\theta^y_u|^2 du <\infty}}\lambda(dy)= \int_{\Y}\expvs{p^y_T}\lambda(dy) = 1,
\end{equation}
where the last equality follows from \cite[Equation (4.1)]{MR3758346} applied to $f\equiv 1$.

That part (3) holds is stated in the proof of \cite[Proposition 4.10]{MR3758346} as following from a) an application of the monotone convergence theorem and b) \cite[Proposition 5]{MR510530} combined with the dominated convergence theorem for stochastic integrals (c.f. \cite[IV.Theorem 32]{MR2273672}). Part (3) is also implicitly used in the proof of \cite[Corollary 2.10]{MR1632213}. However, for the sake of clarity, we will offer a detailed sketch.

First, assume $\theta (t,\omega ,y)=\psi (t,\omega )h(y)$ where $\psi \in\mcp(\filt)$, $h\in\B(\Y)$ are bounded. Considering integration with respect to the $(\prob,\filt)$ -Brownian motion $B$, it follows that $(\int_0^\cdot(\theta^y_u)'dB_u)|_{y=Y} = h(Y)\int_{0}^{\cdot}\psi_u'dB_{u}$. Next, considering integration with respect to the $(\wt{\prob}^{Y}, \filtg)$-Brownian motion $B$ we have $\int_{0}^{t}(\theta^Y_u)'dB_{u} = h(Y)\int_{0}^{\cdot}\psi_u'dB_{u}$. The result follows by path-continuity. Next, let bounded $\left\{\theta _{n}\right\}$ converge (bounded-ly) to a bounded $\theta $. Write the associated integrals as $M_{n},M$ and recall by \cite[Lemma 4.2]{MR3758346} that $M_n,M$ are $\OO(\filt)\otimes \B(\Y)$ measurable. For each $t\leq T$
\begin{equation*}
\begin{split}
& \expv{\wt{\prob}^{Y}}{}{\left( M(t,\cdot,Y)-\int_{0}^{t}\theta (u,\cdot ,Y)'dB_{u}\right) ^{2}} \\ & \qquad \leq 2\expv{\wt{\prob}^{Y}}{}{\left( M(t,\cdot ,Y)-M_{n}(t,\cdot ,Y)\right) ^{2}} +2\expv{\wt{\prob}^{Y}}{}{\left( \int_{0}^{t}(\theta _{n}(u,\cdot,Y)-\theta (u,\cdot ,Y))'dB_{u}\right) ^{2}} .
\end{split}
\end{equation*}
First,
\begin{equation*}
\begin{split}
\expv{\tprob^{Y}}{}{\left( M(t,\cdot,Y)-M_{n}(t,\cdot ,Y)\right) ^{2}} & =\expvs{\frac{1}{p_{t}^{Y}}\left( M(t,\cdot ,Y)-M_{n}(t,\cdot ,Y)\right) ^{2}} \\
& =\expvs{\int_{\Y}\left( M(t,\cdot,y)-M_{n}(t,\cdot ,y)\right) ^{2}\lambda(dy)}\\
& =\expvs{\int_{\Y}\left(\int_{0}^{t}\left\vert \theta (u,\cdot ,y)-\theta _{n}(u,\cdot,y)\right\vert ^{2}du\right) \lambda(dy)} \\
& =\expv{\tprob^Y}{}{\int_{0}^{t}\left\vert\theta (u,\cdot ,Y)-\theta _{n}(u,\cdot ,Y)\right\vert ^{2}du}.
\end{split}
\end{equation*}
Above we have used the definition of $p^{Y}$ and the It\^{o} isometry. Similarly
\begin{equation*}
\expv{\tprob^Y}{}{\left(\int_{0}^{t}(\theta _{n}(u,\cdot ,Y)-\theta (u,\cdot ,Y))'dB_{u}\right) ^{2}} =\expv{\tprob^Y}{}{\int_{0}^{t}\left\vert \theta (u,\cdot ,Y)-\theta _{n}(u,\cdot
,Y)\right\vert ^{2}du}.
\end{equation*}
The bounded convergence theorem implies almost surely for $t\leq T$ that $M(t,\cdot ,Y)-\int_{0}^{t}\theta (u,\cdot ,Y)'dB_{u}=0$. As $\theta$ is bounded, $\int_{0}^{\cdot }\theta (u,\cdot ,Y)'dB_{u}$ is a $(\tprob^Y,\filtg)$ martingale. But, this implies $M(t,\cdot ,Y)$ is also a $(\tprob^Y,\filtg)$ martingale. As martingale representation holds with respect to $B$, we
deduce $M(\cdot ,\cdot ,Y)$ has continuous paths and hence $M(\cdot,\cdot ,Y)$ and $\int_{0}^{\cdot }\theta (u,\cdot ,Y)'dB_{u}$ are indistinguishable. The monotone class theorem gives the result for bounded $\theta $. We now extend to $\theta $ such that $\int_{0}^{T}|\theta (u,\cdot,Y)|^{2}du<\infty $. For each $t,n$
\begin{equation}\label{E:M_comparison_a}
\begin{split}
&M(t,\cdot ,Y)-\int_{0}^{t}\theta (u,\cdot ,Y)' dB_{u}=\left(\int_{0}^{t}(\theta (u,\cdot ,y)1_{|\theta (u,\cdot ,y)|\geq n})'dB_{u}\right) \big|_{y=Y}\\
&\qquad\qquad -\int_{0}^{t}(\theta (u,\cdot ,Y)1_{|\theta(u,\cdot ,Y)|\geq n})^{\prime }dB_{u}.
\end{split}
\end{equation}
We first handle the right-most term above. By construction of $p^{Y}$, for each $\eps >0$
\begin{equation*}
\tprob^Y\bra{\int_{0}^{T}|\theta (u,\cdot,Y)|^{2}1_{|\theta (u,\cdot ,Y)|\geq n}du\geq \eps} =\int_{\Y}\expvs{1_{\int_{0}^{T}|\theta (u,\cdot
,y)|^{2}1_{|\theta (u,\cdot ,y)|\geq n}du\geq \eps}}\lambda(dy) .
\end{equation*}
Since for $\lambda$ a.e. $y\in \Y$, $\int_{0}^{T}|\theta(u,\cdot,y)|^{2}1_{|\theta (u,\cdot ,y)|\geq n}du\rightarrow 0$ almost surely as $n\uparrow \infty $, two applications of the dominated convergence theorem allow us to conclude that $\lim_{n\uparrow \infty }\int_{0}^{T}|\theta (u,\cdot ,Y)|^{2}1_{|\theta (u,\cdot ,Y)|\geq n}du=0$ in $\tprob^Y$ probability. Therefore, by \cite[Prop. 3.2.26]{MR1121940} we know that in $\tprob^Y$ probability
\begin{equation*}
\lim_{n\uparrow \infty }\sup_{t\leq T}\left\vert \int_{0}^{t}(\theta(u,\cdot ,Y)1_{|\theta (u,\cdot ,Y)|\geq n})^{\prime }dB_{u}\right\vert =0.
\end{equation*}
As for the first term on the right side of \eqref{E:M_comparison_a} set $M_{n}(t,\cdot ,y)\dfn \int_{0}^{t}(\theta (u,\cdot ,y)1_{|\theta (u,\cdot,y)|\geq n})'dB_{u}$. Since for each $y\in \Y$, $\int_{0}^{T}|\theta (u,\cdot ,y)|^{2}1_{|\theta (u,\cdot ,y)|\geq n}du$ converges to $0$ almost surely, we again deduce from \cite[Prop 3.2.26]{MR1121940} that in $\prob$ probability $\sup_{t\leq T}|M_{n}(t,\cdot,y)|\rightarrow 0$. As $M_{n}$ is $\Y$-optional,
\begin{equation*}
\tprob^Y\bra{\sup_{t\leq T}|M_{n}(t,\cdot ,Y)|\geq \eps} =\int_{\Y}\expvs{1_{\sup_{t\leq T}|M_{n}(t,\cdot ,y)|\geq \eps}}\lambda(dy) ,
\end{equation*}%
so that $\sup_{t\leq T}|M_{n}(t,\cdot ,Y)|\rightarrow 0$ in $\tprob^Y$ probability. Thus, by taking subsequences where the convergence takes place almost surely $\tprob^Y$ and hence $\prob$, we deduce from \eqref{E:M_comparison_a} that $\sup_{t\leq T}\vert M(t,\cdot ,Y)-\int_{0}^{t}\theta (u,\cdot ,Y)'dB_{u}\vert =0$ almost surely, finishing the result.

\end{proof}

The next result concerns martingale representation. To state it, assume

\begin{ass} \label{A:phi} $\phi = \phi(\omega,y)$ is a $\F_T\otimes\B(\Y)$ measurable function such that $\expvs{|\phi(\cdot,y)|} < \infty$ for each $y\in\Y$.
\end{ass}

Denote by $\theta = \theta^y$ the process $\theta^y\in\mcp(\filt)$ for each $y\in\Y$ such that \begin{equation} \label{E:phi_theta}
M^y_\cdot \dfn \condexpvs{\phi(\cdot,y)}{\F_t} = M^y_0 + \int_0^\cdot (\theta^y_u)'dB_u.
\end{equation}

We then have the following intuitive result.

\begin{prop}\label{P:first_result} Let Assumptions \ref{A:jacod_equiv_abs} and \ref{A:phi} hold, and let $\theta$ be from \eqref{E:phi_theta}. Then $\theta$ is $\Y$-predictable and hence satsifies the assumptions of Proposition \ref{P:first_first_result}.
\end{prop}

\begin{cor} \label{C:first_result} If additionally $\theta$ is strictly positive almost surely then the same conclusions hold for $\nu = \nu^y$ defined by
\begin{equation*}
\frac{\phi(\cdot,y)}{\expvs{\phi(\cdot,y)}} = \mathcal{E}\left(\int_0^\cdot (\nu^y_u)'dB_u\right)_T.
\end{equation*}
\end{cor}

\begin{proof}[Proof of Proposition \protect\ref{P:first_result}] From \cite[Proposition 3]{MR510530}, we can take $M = M^y$ in \eqref{E:phi_theta} to be a RCLL and $\B(\Y)$ measurable version of the $\filt$-optional projection of $\phi(\cdot,y)$ (see also the proof of \cite[Lemma 4.2]{MR3758346}). The result then follows from \cite[Proposition A.1]{MR3758346}.
\end{proof}

\begin{proof}[Proof of Corollary \protect\ref{C:first_result}] It is clear that $\nu = \theta / M$. Thus, by the results on $\theta$ above (in particular the connection the proof of $(4)$ which proved indistinguishability for general $\theta$), it suffices to prove that $\int_0^T |\nu^Y_u|^2 du < \infty$ almost surely. But, this will follow provided $\inf_{t\leq T} M^{Y}_t > 0$ almost surely. But, this latter fact follows using the same calculations as in \eqref{E:first_1}, but now for the random variable $1_{\inf_{t\leq T} M^y_t > 0}$, which is almost surely $1$ for all $y$ since $M^y$ has continuous paths.
\end{proof}

Lastly, we relate conditional expectations between $\filt$ and $\filtg$ not necessarily under the measure $\prob$.  To state the result, fix $T>0$ and let $Z=Z^y$ be a strictly positive, $\F_T\otimes \B(\Y)$ measurable map such that for $\lambda$ a.e.  $y\in\Y$ we have $\expvs{Z^y_T} = 1$, and define the measure $\qprob^y$ on $\F_T$ by $d\qprob^y/d\prob |_{\F_T} = Z^y_T$.  Define the measure $\qprob^{\filtg}$ on $\G_T$ through $d\qprob^{\filtg}/d\prob |_{\G_T} = Z^Y_T/p^Y_T$ and note that $\qprob^{\filtg}$ is well defined because
\begin{equation*}
    \expvs{\frac{Z^Y_T}{p^Y_T}} = \expvs{\condexpvs{\frac{Z^Y_T}{p^Y_T}}{\F_T}} = \int_{\Y} \expvs{Z^y_T}\lambda(dy) = 1.
\end{equation*}
Next, we pause to briefly discuss how conditional expectations may be defined for non-integrable random variables. Let $\probtriple$ be a generic complete probability space, let  $\Hcal \subseteq \F$ be a sigma-algebra, and recall that conditional expectations of non-negative random variables with respect to $\Hcal$ are always well defined by the monotone convergence theorem.  To extend beyond non-negative random variables, but also without requiring the $L^1(\prob)$ property,  let $X$ be a random variable and set
\begin{equation*}
    \Delta_{X}(\F,\Hcal;\prob) \dfn \cbra{\condexpv{\prob}{}{X^+}{\Hcal} = \infty} \bigcup \cbra{\condexpv{\prob}{}{X^-}{\Hcal} = \infty},
\end{equation*}
and the class of functions
\begin{equation*}
    \begin{split}
        L^1(\F,\Hcal;\prob) &\dfn \cbra{X\in L^0(\F;\prob) \such \prob\bra{\Delta_{X}(\F,\Hcal;\prob)}= 0}.
    \end{split}
\end{equation*}
For $X\in L^1(\F,\Hcal;\prob)$ the conditional expectation of $X$ given $\Hcal$ is well defined setting $\condexpvs{X}{\Hcal} \dfn \condexpv{\prob}{}{X^+}{\Hcal} - \condexpv{\prob}{}{X^-}{\Hcal}$.   With this notation we now state

\begin{lem}\label{L:market_filt_cond_exp} Fix  $0\leq s < t\leq T$, and let $\chi^y_t$ be a $\F_t\otimes \B(\Y)$ measurable map  such that for $\lambda$ a.e.  $y\in\Y$ we have $\expv{\qprob^y}{}{|\chi^y_t|}<\infty$. Then $\condexpv{\qprob^{\filtg}}{}{\chi^Y_t}{\G_s}$ is well defined, and
\begin{equation*}
\condexpv{\qprob^{\filtg}}{}{\chi^Y_t}{\G_s} = \left(\condexpv{\qprob^y}{}{\chi^y_t}{\F_s}\right)\big|_{y=Y}.
\end{equation*}
\end{lem}

\begin{proof}[Proof of Lemma \protect\ref{L:market_filt_cond_exp}]  First, note by Tonelli's theorem that the map
\begin{equation*}
    (\omega,y) \to \condexpv{\qprob^y}{}{\chi^y_t}{\F_s} = \frac{\condexpvs{Z^y_T \chi^y_t}{\F_s}}{\condexpvs{Z^y_T}{\F_s}},
\end{equation*}
is $\F_s\otimes \B(Y)$ measurable.  Now, assume $\chi^Y_t$ is bounded. Let $A_s \in \F_s$, $\psi$ a smooth bounded function on $\Y$, and recall that $\lambda$ is the law of $Y$. We have
\begin{equation*}
\begin{split}
\expv{\qprob^{\filtg}}{}{\chi^Y_t 1_{A_s}\psi(Y)} &= \expvs{\frac{Z^Y_T}{p^Y_T}\chi^Y_t 1_{A_s} \psi(Y)} = \expvs{1_{A_s} \condexpvs{\frac{Z^Y_T}{p^Y_T} \chi^Y_t \psi(Y)}{\F_T}}\\
&= \int_{\Y}\psi(y)\expvs{\chi^y_t 1_{A_s} Z^y_T}\lambda(dy) = \int_{\Y}\psi(y) \expv{\qprob^y}{}{\chi^y_t 1_{A_s}}\lambda(dy)\\
&= \int_{\Y}\psi(y)\expv{\qprob^y}{}{1_{A_s}\condexpv{\qprob^y}{}{\chi^y_t}{\F_s}}\lambda(dy).\\
\expv{\qprob^{\filtg}}{}{\left(\condexpv{\qprob^y}{}{\chi^y_t}{\F_s}\right)\big|_{y=Y} 1_{A_s}\psi(Y)} &= \expvs{\frac{Z^Y_T}{p^Y_T}\left(\condexpv{\qprob^y}{}{\chi^y_t}{\F_s}\right)\big|_{y=Y} 1_{A_s} \psi(Y)}\\
&= \int_{\Y}\psi(y)\expvs{\condexpv{\qprob^y}{}{\chi^y_t}{\F_s} 1_{A_s} Z^y_T}\lambda(dy)\\
&= \int_{\Y}\psi(y)\expv{\qprob^y}{}{1_{A_s}\condexpv{\qprob^y}{}{\chi^y_t}{\F_s}}\lambda(dy).
\end{split}
\end{equation*}
This gives the result for bounded $\chi^Y_t$.  By monotone convergence we have
\begin{equation*}
    \condexpv{\qprob^{\filtg}}{}{\left(\chi^Y_t\right)^{\pm}}{\G_s}  = \left(\condexpv{\qprob^y}{}{\left(\chi^y_t\right)^{\pm}}{\F_s}\right)\big|_{y=Y}.
\end{equation*}
$\chi^y_t\in L^1(\qprob^y)$ for all $y$ implies that $\Delta_{\chi^Y_t}(\G_t,\G_s;\qprob^{\qprob}) = \emptyset$ and hence $\chi^Y_T \in L^1(\G_t,\G_s,\qprob^{\filtg})$ and
\begin{equation*}
    \begin{split}
        \condexpv{\qprob^{\filtg}}{}{\chi^Y_t}{\G_s}  &= \left(\condexpv{\qprob^y}{}{\left(\chi^y_t\right)^{+}}{\F_s} - \condexpv{\qprob^y}{}{\left(\chi^y_t\right)^{-}}{\F_s}\right)\big|_{y=Y},\\
        &= \left(\condexpv{\qprob^y}{}{\chi^y_t}{\F_s}\right)\big|_{y=Y},
    \end{split}
\end{equation*}
which gives the result.
\end{proof}


\section{Proofs from Section \ref{S:gen_U_arb_w}}\label{AS:gen_U_arb_w}

\begin{proof}[Proof of Lemma \ref{L:S_h_prop}] Using Corollary \ref{C:first_result} in Appendix \ref{AS:init_enlarge}, we deduce $d\qprob^h/d\prob |_{\F^B_T} = \E\left(\int_0^\cdot (\nu^h_u)'dB_u\right)_T$ where $(t,\omega,h) \to \nu^h_t(\omega)$ is $\mcp(\filt^B)\otimes \B(\reals^d)$ measurable. Additionally, by Girsnov's theorem, $B^{\qprob^h}_t \dfn B_t - \int_0^t \nu^h_u du$ is a $(\qprob^h,\filt^B)$ Brownian motion.  Now, for $i=1,...,d$ we may write
\begin{equation*}
    (S^h_t)^{(i)} = \frac{M(i)^h_t}{N^h_t};\qquad M(i)^h_t = \condexpvs{(\Psi(X_T))^{(i)} \cz^h_T}{\F^B_t};\quad N^h_t = \E\left(\int_0^\cdot (\nu^h_u)'dB_u\right)_t.
\end{equation*}
Proposition \ref{P:first_result} in Appendix \ref{AS:init_enlarge} allows us to write $M(i)^h_t = \int_0^t (\theta(i)^h_u)' dB_u$ for some $\mcp(\filt^B)\otimes \B(\reals^d)$ measurable map $\theta(i)^h_t(\omega)$, and also shows that both $M(i)^h_t(\omega)$ and $N^h_t(\omega)$ are $\mcp(\filt^B)\otimes \B(\reals^d)$ measurable maps. Given this, Ito's formula shows
\begin{equation*}
    d\left(S^{h}_t\right)^{(i)} = \frac{1}{N^h_t}\left(\theta(i)^h_t - M(i)^h_t \nu^h_t\right)' dB^{\qprob^h}_t,
\end{equation*}
and hence $dS^h_t = \sigma^h_t dB^{\qprob^h}_t$ for the $d\times d$ matrix valued map
\begin{equation*}
    \left(\sigma^h_t\right)^{(ij)} = \frac{1}{N^h_t}\left(\theta(i)^h_t - M(i)^h_t\nu^h_t\right)^{(j)},\quad i,j = 1,...,d,
\end{equation*}
and that clearly $(t,\omega,h)\to \sigma^h_t(\omega)$ is $\mcp(\filt^B)\otimes \B(\reals^d)$ measurable.  This gives the result as $\Sigma^h_t = \sigma^h_t (\sigma^h_t)'$.

\end{proof}


\begin{proof}[Proof of Proposition \ref{P:class_equiv}]
Throughout we use Lemma \ref{L:S_h_prop}, and recall from it's proof the $(\filt^B,\qprob^h)$ Brownian motion $B^{\qprob^h}$ defined by
\begin{equation*}
    dB^{\qprob^h}_t = dB_t - \nu^h_t dt;\qquad \frac{d\qprob^h}{d\prob}\big|_{\F^B_T} = \cz^h_T = \E\left(\int_0^\cdot (\nu^h_u)'dB_u\right)_T,
\end{equation*}
and the $\mcp(\filt^B)\otimes \B(\reals^d)$ measurable map $\sigma^h_t(\omega)$ enforcing $dS^h_t = \sigma^h_t dB^{\qprob^h}_t = \sigma^h_t \left(dB_t - \nu^h_t dt\right)$. Proposition \ref{P:first_first_result} implies  $(t,\omega,h) \to S^h_t(\omega)$ is $\mcp(\filt^B)\otimes \B(\reals^d)$ measurable, and hence
\begin{equation*}
    S^H_t = \left(S^h_0 + \int_0^t \sigma^h_u dB^{\qprob^h}_u\right)\big|_{h=H}.
\end{equation*}
Using \cite[Theorem 3.2]{MR1775229} we know that $B$ is a $(\tprob^H,\filt^m)$ Brownian motion, and we claim that $B^{\qprob^m}$ defined by $B^{\qprob^m}_t \dfn B_t - \int_0^t \nu^H_u du$ is a $(\qprob^m,\filt^m)$ Brownian motion. Indeed, this follows using \eqref{E:qprobm_to_qprobh} which shows (here, we are implicilty assuming we have localized to ensure the appropriate integrability)
\begin{equation*}
    \begin{split}
        \condexpv{\qprob^m}{}{B^{\qprob^m}_t}{\F^m_s} &= \condexpv{\qprob^m}{}{B_t - \int_0^t \nu^H_u du}{\F^m_s},\\
        &= -\int_0^s \nu^H_u du + \left(\condexpv{\qprob^h}{}{B_t - \int_s^t \nu^h_u du}{\F^B_s}\right)\big|_{h=H},\\
        &= -\int_0^s \nu^H_u du + \left(\condexpv{\qprob^h}{}{B^{\qprob^h}_t + \int_0^s \nu^h_u du}{\F^B_s}\right)\big|_{h=H},\\
        &= \left(B^{\qprob^h}_s\right)\big|_{h=H} = B_s - \int_0^s \nu^H_u du = B^{\qprob^m}_s.
    \end{split}
\end{equation*}
By path continuity it is clear that the $\filt^m$ quadratic variation of $B^{\qprob^m}$ at any time $t$ is $t$ and hence $B^{\qprob^m}$ is a $(\qprob^m,\filt^m)$ Brownian motion, and under the given assumption on $\Sigma^h$, Proposition \ref{P:first_first_result} ensures
\begin{equation*}
    S^H_t = \left(S^h_0 + \int_0^t \sigma^h_u\left(dB_u - \nu^h_u du\right)\right)\big|_{h=H} = S^H_0 + \int_0^t \sigma^H_u\left(dB_u - \nu^H_u du\right) = S^H_0 + \int_0^t \sigma^H_u dB^{\qprob^m}_u.
\end{equation*}
This in turn shows that $S^H$ has $\filt^m$ quadratic variation $\langle S^H, S^H \rangle_t = \int_0^t \Sigma^H_u du$. Given this, we now prove the statements of the Lemma, starting with (i). By \cite[Lemma 4.2]{MR2233544} we know that $\pi\in \A_U$ is (a) $\mcp(\filt^m)$ measurable, and (b) such that $\int_0^T (\pi^H_t)'\Sigma^H_t \pi^H_t dt < \infty$ a.s., hence $S^H$ integrable. This shows $\A_U \subseteq \wh{\A}_U$.  Next, \cite[Proposition 4.22]{Aksamit2017}\footnote{\cite[Proposition 4.22]{Aksamit2017} is stated in dimension $1$ but can easily be updated to general dimensions.} yields, for each $\mcp(\filt^m)$ predictable $\wh{\pi}\in\wh{\A}_U$, a $\mcp(\filt^B)\otimes\B(\reals^d)$ measurable modification $\pi$.  Therefore, if $\wh{\pi}$ is also $S^H$ integrable then $0 = \int_0^T (\pi_t-\wh{\pi}_t)'\Sigma^H_t(\pi_t-\wh{\pi}_t)dt$ implies $\pi$ is $S^H$ integrable and that the resultant wealth processes are indistinguishable. This proves (i).

As for $(ii)$, let $\pi\in\A_u$. Given $\pi\in\wh{\A}_U$, \cite[Lemma 4.2]{MR3758346} implies $\We^{\pi}_0,\cdot$ is $\OO(\filt^B)\otimes \B(\reals^d)$ measurable, and hence $\mcp(\filt^B)\otimes \B(\reals^d)$ mesaurable by path continuity, and part (1) of Proposition \ref{P:first_first_result} shows
\begin{equation*}
    \left(\int_0^\cdot (\pi^h_u)'dS^h_u\right)\big|_{h=H} = \left(\int_0^{\cdot} (\pi^h_u)'\sigma^h_u\left(dB_u - \nu^h_u du\right)\right)_{h=H},
\end{equation*}
is also $\mcp(\filt^B)\otimes \B(\reals^d)$ measurable. Furthermore, as
\begin{equation*}
    \We^{\pi}_{0,\cdot} = \int_0^{\cdot} (\pi^H_u)'\sigma^H_u\left(dB_u - \nu^H_u du\right),
\end{equation*}
the indistinguishability of $\We^{\pi}_{0,\cdot}$ and $\left(\int_0^\cdot (\pi^h_u)'dS^h_u\right)\big|_{h=H}$ follows by part (3) of Proposition \ref{P:first_first_result}. Lastly, the conditional expectation statement in (iii) follows from Lemma \ref{L:market_filt_cond_exp} at $s=0, t=T$ and with $\chi^h_T = \int_0^T (\pi^h_u)'dS^h_u$ so that $\chi^H_T = \We^{\pi}_{0,T}$.

\end{proof}


\section{Proofs from Section \ref{S:clear_solve}}\label{AS:clear_solve}

Throughout, Assumptions \ref{A:X_SDE}, \ref{A:Psi}, \ref{A:U} and \ref{A:init_pos} hold. Next, recall from Lemma \ref{L:z_soln} that for fixed $(x,h,\kappa)$ there is a unique $z = z(x,h,\kappa)$ solving \eqref{E:clear_function_00}.  Before proving Theorem \ref{T:wU0_exists_gen} we must establish two lemmas, but we begin with a remark.

\begin{rem}\label{R:alt_forms}
Using \eqref{E:ell_T_def}, \eqref{E:chi_def} and \eqref{E:vartheta_def}, we obtain the alternate forms of \eqref{E:clear_function_00}
\begin{equation}\label{E:clear_function}
\begin{split}
\omega_U \imutil(z) - (\alpha_I + \alpha_N)\log(z) &= \chi(x,h) + (\alpha_I + \alpha_N)\log(\ell(T,x,h)) + \kappa_1,\\
\omega_U \imutil(z) - (\alpha_I + \alpha_N)\log(z) &= (\alpha_I +\alpha_N)\phi(x,h) + \kappa_2,
\end{split}
\end{equation}
where $\kappa_1,\kappa_2$ are suitably adjusted constants, and
\begin{equation}\label{E:phi_def}
    \begin{split}
    \phi(x,h) &\dfn \frac{\Pi'\Psi(x)}{\alpha_I +\alpha_N} + \frac{1}{2}(x-V(h))'(P_I-P_U)(x-V(h)),\\
    V(h) &\dfn (P_I - P_U)^{-1}\left(\left(\frac{\alpha_I + \alpha_N\tau_N}{\alpha_I + \alpha_N}P_I - P_U\right)h + \frac{\alpha_N}{\alpha_I +\alpha_N} P_I \mu_N\right).
    \end{split}
\end{equation}
These forms will help us deduce properties of the solution $z$. In particular, by Assumption \ref{A:Psi}
\begin{equation}\label{E:vartheta_to_V_ordering}
    \begin{split}
        \vartheta(x,h) &=(\alpha_I +\alpha_N)\phi(x,h) - \frac{1}{2}(\alpha_I +\alpha_N)V(h)'(P_I - P_U)V(h),\\
        &\geq - \frac{1}{2}(\alpha_I +\alpha_N)V(h)'(P_I - P_U)V(h).
    \end{split}
\end{equation}
\end{rem}


Our first lemma establishes additional properties of the solution $z$ to \eqref{E:clear_function_00} from Lemma \ref{L:z_soln}. To state it, define the absolute risk aversion  and weighted risk tolerance functions
\begin{equation*}
\gamma_U(w) \dfn -\frac{\ddot{U}(w)}{\dot{U}(w)}, \qquad \alpha_U(w) \dfn \frac{\omega_U}{\gamma_U(w)}.
\end{equation*}

\begin{lem}\label{L:clear_function_soln}
For each fixed $(x,h,\kappa)$ let $z = z(x,h,\kappa)$ be the unique solution to \eqref{E:clear_function_00} from Lemma \ref{L:z_soln}. Then, $z$ is strictly decreasing in $\kappa$ with derivative
\begin{equation}\label{E:partial_kappa_new}
    \frac{\partial_{\kappa} z}{z}= \frac{-1}{\alpha_U(\imutil(z)) + \alpha_I +\alpha_N}.
\end{equation}
Next, recall $\phi,V$ from \eqref{E:phi_def} and let $\wt{z}(h,\kappa)$ denote the unique solution to
\begin{equation}\label{E:clear_no_x}
    \omega_U \imutil(z) - (\alpha_I + \alpha_N)\log(z) = -\frac{1}{2}(\alpha_I + \alpha_N)V(h)'(P_I-P_U)V(h) + \kappa,
\end{equation}
Then
\begin{equation}\label{E:bounds_1}
    \wt{z}(h,\kappa) e^{-\phi(x,h)} \leq z(x,h,\kappa) \leq \wt{z}(h,\kappa),
\end{equation}
as well as
\begin{equation}\label{E:bounds_2}
    \imutil(z(x,h,\kappa)) \leq \imutil(\wt{z}(h,\kappa)) +  \frac{\alpha_I + \alpha_N}{\omega_U}\phi(x,h).
\end{equation}

\end{lem}

\begin{proof}[Proof of Lemma \ref{L:clear_function_soln}]
That \eqref{E:partial_kappa_new} holds is clear from \eqref{E:clear_function_00} and $z = \dot{\util}(\imutil(z))$, and the derivative is negative because $\alpha_U > 0$. The upper bound in \eqref{E:bounds_1} is clear from \eqref{E:vartheta_to_V_ordering} as $z\to \omega_U \imutil(z) - (\alpha_I + \alpha_N)\log(z)$  is decreasing.  The lower bound in \eqref{E:bounds_1} follows because $\imutil$ is decreasing and $z(x,h,\kappa)\leq \wt{z}(h,\kappa)$ so that
\begin{equation*}
\begin{split}
    - (\alpha_I + \alpha_N)\log\left(\frac{z(x,h,\kappa)}{\wt{z}(h,\kappa)}\right) &\leq \omega_U(\imutil(z(x,h,\kappa)) - \imutil(\wt{z}(h,\kappa))) - (\alpha_I + \alpha_N)\log\left(\frac{z(x,h,\kappa)}{\wt{z}(h,\kappa)}\right)\\
    &= (\alpha_I+\alpha_N)\phi(x,h).
\end{split}
\end{equation*}
Similarly, \eqref{E:bounds_2} holds because
\begin{equation*}
\begin{split}
    \omega_U(\imutil(z(x,h,\kappa)) - \imutil(\wt{z}(h,\kappa))) &\leq \omega_U(\imutil(z(x,h,\kappa)) - \imutil(\wt{z}(\kappa))) - (\alpha_I + \alpha_N)\log\left(\frac{z(x,h,\kappa)}{\wt{z}(h,\kappa)}\right)\\
    &= (\alpha_I+\alpha_N)\phi(x,h).
\end{split}
\end{equation*}

\end{proof}

Having obtained solutions to \eqref{E:clear_function_00}, let us now consider \eqref{E:U_budget_new} and \eqref{E:clear_real_2}. Using Assumptions \ref{A:Psi}, \ref{A:init_pos} with \eqref{E:ell_T_def}, \eqref{E:qprobm_to_qprobh}, and Lemma \ref{L:clear_function_soln} it is clear that for all $(h,\kappa)$
\begin{equation*}
    \begin{split}
    \expvs{\max\bra{1,\imutil(z(X_T,h,\kappa)),(\pi^U_0)'\Psi(X_T),|\chi(X_T,h)|}z(X_T,h,\kappa)e^{-\frac{1}{2}X_T'P_U X_T + X_T'P_U h}} < \infty,\\
    \expvs{\log(z(X_T,h,\kappa))z(X_T,h,\kappa)e^{-\frac{1}{2}X_T'P_U X_T + X_T'P_U h}} < \infty,
    \end{split}
\end{equation*}
and hence the expectations in \eqref{E:U_budget_new} and \eqref{E:clear_real_2} are well defined. The next lemma shows for any $(h,\kappa)$ that if $z$ solves \eqref{E:clear_function_00} then \eqref{E:clear_real_2} holds.

\begin{lem}\label{L:clear_gen_new}
For fixed $(h,\kappa)$ if $z$ solves \eqref{E:clear_function_00} then \eqref{E:clear_real_2} holds.
\end{lem}

\begin{proof}[Proof of Lemma \ref{L:clear_gen_new}]
Clearly, $z$ is Borel measurable in $x$.  Next, for fixed $(h,\kappa)$, if $z$ solves \eqref{E:clear_function_00} then
\begin{equation*}
    \begin{split}
        &\omega_U\left(\imutil\left(z\right)-\frac{\expvs{\imutil(z) z\ell}}{\expvs{z\ell}}\right) - (\alpha_I + \alpha_N)\left(\log(z) - \frac{\expvs{\log(z)z\ell}}{\expvs{z\ell}}\right)\\
        &= \left(\omega_U \imutil(z) - (\alpha_I + \alpha_N)\log(z)\right) - \frac{\expvs{\left(\omega_U \imutil(z) - (\alpha_I + \alpha_N)\log(z)\right)z\ell}}{\expvs{z\ell}},\\
        &= \vartheta  - \frac{\expvs{\vartheta z\ell}}{\expvs{z\ell}}.
    \end{split}
\end{equation*}
Above, we used \eqref{E:clear_function_00} and that $\kappa$ is constant in the third  equality. This gives \eqref{E:clear_real_2}.

\end{proof}

In light of Lemma \ref{L:clear_gen_new},Theorem \ref{T:wU0_exists_gen} will follow if we identify a map $h \to \kappa(h)$ such that \eqref{E:U_budget_new} holds for $z$ solving \eqref{E:clear_function_00}. This is done below, and constitutes the proof of Theorem  \ref{T:wU0_exists_gen}.

\begin{proof}[Proof of Theorem \ref{T:wU0_exists_gen}] Omitting function arguments, we seek $\kappa = \kappa(h)$ enforcing \eqref{E:U_budget_new}, which by using \eqref{E:ell_T_def} is equivalent to
\begin{equation}\label{E:key_fraction}
    1 = g(\kappa) \dfn \frac{\expvs{\Psi'\pi^U_0\ell z}}{\expvs{\imutil(z)\ell z }} = \frac{\expvs{\Psi'\pi^U_0 e^{\frac{1}{\alpha_I+\alpha_N}\left(\omega_U \imutil(z) -\chi\right)}}}{\expvs{\imutil(z) e^{\frac{1}{\alpha_I+\alpha_N}\left(\omega_U \imutil(z) -\chi\right)}}},
\end{equation}
where the second equality holds because $z$ solves  \eqref{E:clear_function}.  To prove existence of such $\kappa$ we will show that $g$ is continuous in $\kappa$ with  $\lim_{\kappa\to -\infty} g(\kappa) = \infty$ and $\lim_{\kappa\to \infty} g(\kappa) = 0$.  We will also show that provided $w\to w\dot{\util}(w)$ is non-deceasing then $g$ is strictly decreasing in $\kappa$ and solutions are unique. Lastly, we will prove the measurability results.

We start with the limiting statements.  Let $\kappa \to -\infty$, which in view of Lemma \ref{L:clear_function_soln} implies the almost sure limits $z(X_T,h,\kappa)\uparrow \infty, \imutil(z(X_T,h,\kappa)) \downarrow 0$. This means that for $\kappa < 0$ we have almost surely that (here, we omit the $(X_T,h)$ function arguments for brevity, and note in \eqref{E:clear_function} that $\kappa_1 = \kappa - \log(p_{C_U})$)
\begin{equation*}
    \begin{split}
    \imutil(z(\kappa))e^{\frac{1}{\alpha_I + \alpha_N}\left(\omega_U \imutil(z(\kappa)) -\chi\right)} &\leq \imutil(z(0))e^{\frac{1}{\alpha_I + \alpha_N}\left(\omega_U \imutil(z(0)) -\chi\right)} = \imutil(z(0))z(0)\ell p_{C_U}^{-\frac{1}{\alpha_I +\alpha_N}}.
    \end{split}
\end{equation*}
By \eqref{E:bounds_2} we know the right-most quantity above is integrable, and hence by dominated convergence the denominator in \eqref{E:key_fraction} goes to $0$. As for the numerator, recall from Assumption \ref{A:init_pos} that $(\pi^U_0)'\Psi(X_T) > 0$. Thus,  by Fatou's lemma, the numerator exceeds (in the limit)
\begin{equation*}
    \expvs{\Psi'\pi^U_0 e^{-\chi/(\alpha_I + \alpha_N)}} > 0,
\end{equation*}
so that $\lim_{\kappa\to -\infty} g(\kappa) = \infty$.  As for when $\kappa \to \infty$, recalling $\phi$ from \eqref{E:phi_def} we have by \eqref{E:bounds_1}, \eqref{E:bounds_2}  and because $\imutil$ is decreasing
\begin{equation*}
    g(\kappa) \leq  \frac{\expvs{\Psi'\pi^U_0 \ell}}{\imutil(\wt{z}(\kappa))\expvs{\ell e^{-\phi}}}.
\end{equation*}
The expectations on the right side no longer depend on $\kappa$ and each are finite.  Thus, $\lim_{\kappa \to \infty} g(\kappa) = 0$, because  $\imutil(\wt{z}(\kappa)) \to \infty$.

We next prove continuity by showing the numerator and denominator in \eqref{E:key_fraction} are each continuous.  For the numerator, note that $\Psi'\pi^U_0$ does not depend on $\kappa$, and from Lemma \ref{L:clear_function_soln} we have for $1 >\eps > 0$
\begin{equation*}
    \begin{split}
        0 &\geq \frac{1}{\eps}(z(x,h,\kappa+\eps) - z(x,h,\kappa))  = \frac{1}{\eps}\int_{\kappa}^{\kappa+\eps} \partial_{\kappa} z(x,h,\tau) d\tau\\
        &= -\frac{1}{\eps}\int_{\kappa}^{\kappa+\eps} \frac{z(x,h,\tau)}{\alpha_U(\imutil(z(x,h,\tau))) + \alpha_I+\alpha_N}d\tau,\\
        &\geq -\frac{1}{(\alpha_I+\alpha_N)\eps}\int_{\kappa}^{\kappa+\eps} z(x,h,\tau)d\tau \geq -\frac{z(x,h,\kappa)}{(\alpha_I+\alpha_N)} \geq -\frac{\wt{z}(h,\kappa)}{\alpha_I +\alpha_N}.
    \end{split}
\end{equation*}
This implies
\begin{equation*}
    0 \geq \Psi(X_T)'\pi^U_0 \ell(T,X_T,h) \frac{1}{\eps}(z(X_T,h,\kappa+\eps) - z(X_T,h,\kappa)) \geq - \Psi(X_T)'\pi^U_0 \ell(T,X_T,h)\frac{\wt{z}(h,\kappa)}{\alpha_I +\alpha_N}
\end{equation*}
Therefore, by Assumption \ref{A:Psi} and \eqref{E:ell_T_def} we see that the numerator is not only continuous, but differentiable in $\kappa$. Turning to the denominator,
using \eqref{E:bounds_2} we obtain for $0<\eps < 1$ (as $\kappa \to \imutil(z(x,h,\kappa))$ is increasing)
\begin{equation*}
    \begin{split}
        0 &\geq \imutil(z(x,h,\kappa+\eps))\frac{z(x,h,\kappa+\eps) - z(x,h,\kappa)}{\eps}  \geq -\frac{\wt{z}(h,\kappa)\imutil(z(x,h,\kappa+\eps))}{\alpha_I +\alpha_N},\\
        &\geq -\frac{\wt{z}(h,\kappa)}{\alpha_I+\alpha_N}\left(\imutil(\wt{z}(h,\kappa+1)) +  \frac{\alpha_I +\alpha_N}{\omega_U}\phi(X_T,h)\right),
    \end{split}
\end{equation*}
and
\begin{equation*}
    \begin{split}
        0 &\leq z(x,h,\kappa)\frac{\imutil(z(x,h,\kappa+\eps)) - \imutil(z(x,h,\kappa))}{\eps}  = \frac{z(x,h,\kappa)}{\eps}\int_{\kappa}^{\kappa+\eps} \frac{-\dot{\imutil}(z(x,h,\tau))z(x,h,\tau)}{\alpha_U( \imutil(z(x,h,\tau))) + \alpha_I+\alpha_N}d\tau,\\
        &= \frac{z(x,h,\kappa)}{\eps\omega_U}\int_{\kappa}^{\kappa+\eps} \frac{\alpha_U( \imutil(z(x,h,\tau)))}{\alpha_U( \imutil(z(x,h,\tau))) + \alpha_I+\alpha_N}d\tau,\\
        &\leq \frac{z(x,h,\kappa)}{\omega_U} \leq \frac{\wt{z}(h,\kappa)}{\omega_U}.
    \end{split}
\end{equation*}
Again, using $\ell$ from \eqref{E:ell_T_def} we find that the denominator is also differentiable, hence continuous in $\kappa$.  Therefore, by continuity there exists a solution $\wh{\kappa}$ to $g(\kappa) = 1$. As for the uniqueness statement, clearly $\kappa \to \expvs{\Psi'\pi^U_0 \ell z}$ is strictly decreasing.  Additionally,
\begin{equation*}
    \frac{d}{dz} (z\imutil(z)) = \frac{1}{\ddot{\util}(w)}\frac{d}{dw} (w \dot{\util}(w));\qquad w = \imutil(z).
\end{equation*}
Thus, if $w\to w\dot{\util}(w)$ is non-decreasing then $z \imutil(z)$ is non-increasing in $z$ and hence non-increasing (almost surely) in $\kappa$. Therefore, $\kappa \to g(\kappa)$ is strictly decreasing and hence uniqueness follows.

The last thing to prove is measurability of $\wh{\kappa}$ in $h$, which follows from  the Kuratowski-Ryll-Nardzewski measurable selection theorem. Indeed, writing $g$ as a function of both $(h,\kappa)$ for $E\in\mathcal{B}(\reals)$ define
\begin{equation*}
    A^E \dfn \cbra{h\in\reals^d \such \ \exists\ \kappa \in E \textrm{ s.t. } g(h,\kappa) = 1}.
\end{equation*}
Assuming $g$ is jointly continuous in $(h,\kappa)$ (which can be shown using Lemma \ref{L:clear_function_soln}) it is easy to see that $A^E$ is closed (hence Borel) for each closed $E$.  Now, let $E = (a,b)$ for $a<b$ and (for $n$ large enough so that it is not empty) set $E_n = [a+1/n,b-1/n]$.  Straightforward computations show that
\begin{equation*}
    A^E = \bigcup_n A^{E_n}
\end{equation*}
and hence $A^E$ is Borel for each $E=(a,b)$.  But this implies that $A^E$ is Borel for each open $E$ and hence the Kuratowski-Ryll-Nardzewski measurable selection theorem implies that we may select $\wh{\kappa}$ measurably in $h$.

\end{proof}


\section{Proofs from Section \ref{S:PCE}}\label{AS:PCE}

\begin{proof}[Proof of Lemma \ref{L:dual_realization_ok}]
From \eqref{E:gen_cz_nice} we see that
\begin{equation*}
    Z^h_T \dfn \frac{d\qprob^h}{d\wh{\prob}^h}\big|_{\F^B_T} = \frac{z(X_T,h)}{\expv{\wh{\prob}^h}{}{z(X_T,h)}}.
\end{equation*}
We first claim that \eqref{E:dual_vf_cond} holds when $\lambda \geq \expv{\wh{\prob}^h}{}{z(X_T,h)}$. Indeed, as $\dot{\dualutil} = -\imutil < 0$,  $\dualutil$ is decreasing and convex. Therefore,
\begin{equation*}
    \expv{\wh{\prob}^h}{}{\dualutil\left(\lambda \frac{d\qprob^h}{d\wh{\prob}^h}\big|_{\F^B_T}\right)} \leq \expv{\wh{\prob}^h}{}{\dualutil(z(X_T,h))}.
\end{equation*}
Next, for any $z>0$
\begin{equation*}
    \dualutil(z) \leq |\dualutil(1)| + \dualutil(z)1_{z<1} \leq 2|\dualutil(1)| + (1-z)\imutil(z)1_{z<1}.
\end{equation*}
Using \eqref{E:bounds_2} in Lemma \ref{L:clear_function_soln} we deduce
\begin{equation*}
    \begin{split}
    \imutil(z) &\leq C(h)\left(1 + |x|^2 + \Pi'\Psi(x)\right)
    \end{split}
\end{equation*}
for some constant $C(h)>0$. Therefore, for $\lambda \geq \expv{\wh{\prob}^h}{}{z(X_T,h)}$
\begin{equation*}
    \begin{split}
     \expv{\wh{\prob}^h}{}{\dualutil\left(\lambda \frac{d\qprob^h}{d\wh{\prob}^h}\big|_{\F^B_T}\right)} &\leq \expv{\wh{\prob}^h}{}{\dualutil(z(X_T,h))},\\
     &\leq 2|\dualutil(1)| + C(h)\expv{\wh{\prob}^h}{}{(1-z(X_T,h))1_{z(X_T,h)\leq 1}\left(1 + |X_T|^2 + \Pi'\Psi(X_T)\right)},\\
     &\leq 2|\dualutil(1)| + C(h)\expv{\wh{\prob}^h}{}{\left(1 + |X_T|^2 + \Pi'\Psi(X_T)\right)},\\
    \end{split}
\end{equation*}
The result follows from Assumption \ref{A:Psi} and the construction of $\wh{\prob}^h$.  When $0 < \lambda < \expv{\wh{\prob}^h}{}{z(X_T,h)}$ we have
\begin{equation*}
    \expv{\wh{\prob}^h}{}{\dualutil\left(\lambda \frac{d\qprob^h}{d\wh{\prob}^h}\big|_{\F^B_T}\right)} =  \expv{\wh{\prob}^h}{}{\dualutil\left(\wt{\lambda}z(X_T,h)\right)},\quad \wt{\lambda} = \frac{\lambda}{\expv{\wh{\prob}^h}{}{z(X_T,h)}} < 1.
\end{equation*}
As $U$ satisfies the reasonable asymptotic elasticity condition in Assumption \ref{A:U}, we deduce from \cite[Corollary 6.1 (iii')]{MR2023886} the existence of $z_0 > 0$ and a constant $C(\wt{\lambda})$ such that $\dualutil(\wt{\lambda}z) \leq C(\wt{\lambda})\dualutil(z)$ when $z < z_0$.  Using again that $\dualutil$ is decreasing, this implies
\begin{equation*}
    \begin{split}
        \dualutil(\wt{\lambda}z) &= \dualutil(\wt{\lambda}z) 1_{z\geq z_0} + \dualutil(\wt{\lambda}z)1_{z < z_0} \leq |\dualutil(\wt{\lambda} z_0)| + C(\wt{\lambda})\dualutil(z)1_{z < z_0}.
    \end{split}
\end{equation*}
By definition of $\dualutil$ we know $\dualutil(z) \geq \util(1) - z$ so that
\begin{equation*}
    \dualutil(z)1_{z < z_0} = \dualutil(z) - \dualutil(z)1_{z\geq z_0} \leq \dualutil(z) + |\util(1)| + z.
\end{equation*}
Therefore,
\begin{equation*}
    \begin{split}
        \dualutil(\wt{\lambda}z(Z_T,h))\leq |\dualutil(\wt{\lambda} z_0)| + C(\wt{\lambda})\left(|\util(1)|  + \dualutil(z(X_T,h)) + z(X_T,h)\right).
    \end{split}
\end{equation*}
\eqref{E:dual_vf_cond} holds because we already showed $\expv{\wh{\prob}^h}{}{\dualutil(z(X_T,h))} < \infty$ and by Lemma \ref{L:clear_function_soln} we know $z(X_T,h)\leq C(h)$ for some constant $C$.
\end{proof}

\section{Proofs from Section \ref{S:DNREE}}\label{AS:DNREE}

\begin{proof}[Proof of Proposition \ref{P:DNREE_1}]
Define
\begin{equation*}
    \Lambda(t,x,h) \dfn \log\left(\condexpvs{z(X_T,h,\wh{\kappa}(h))e^{-\frac{1}{2}X_T'P_U X_T + X_T'P_U h}}{X_t=x}\right).
\end{equation*}
Under the given assumptions, we know $\Lambda$ is smooth enough in $(t,x)$ to use Ito's formula. Therefore, from  \eqref{E:price_def}, \eqref{E:qprobm_to_qprobh}, \eqref{E:gen_cz_new} we see that
\begin{equation*}
    \frac{d\qprob^h}{d\prob}\big|_{\F^B_T} = e^{\Lambda(T,X_T,h) - \Lambda(0,X_0,h)} = \E\left(\int_0^\cdot \partial_x \Lambda(t,X_t,h)'a(X_t)dB_t\right)_T.
\end{equation*}
Writing the price process $t \to S(t,X_t,h)$ as $S^h$ we see that  $S^h$ has market price of risk $\nu^h_t \dfn -a(X_t)' \partial_x \Lambda(t,X_t,h)$. 
Next, write $L$ as the second order operator associated to $X$, so that for smooth functions $f$
\begin{equation*}
    (Lf)[x] = \frac{1}{2}A(x)\partial_{xx} f(x) + b(x)\partial_x f(x).
\end{equation*}
We know that $S(\cdot,h)$ solves the PDE
\begin{equation*}
    \partial_t u + Lu + \partial_x u A \partial_x \Lambda(\cdot,h)  =0;\qquad u(T,\cdot) = \Psi.
\end{equation*}
Therefore, if the statement is true (which implies $\partial_t, \partial_x, \partial_{xx}$ coincide for $S(\cdot, h_1), S(\cdot,h_2)$) then it follows that on $(0,\eps)\times\reals$ we have
\begin{equation*}
    0 = (a(x)\partial_x S(t,x,h_i))a(x)\left(\partial_x\Lambda(t,x,h_1) - \partial_x\Lambda(t,x,h_2)\right);\qquad i = 1,2.
\end{equation*}
As $\sigma(t,x,h_i) = a(x)\partial_x S(t,x,h_i)$ we know from Proposition \ref{P:on_TA_complete} that for each $h$, and (Lebesgue almost every) $(t,x)$ that $\sigma(t,x,h) \neq 0$.  Thus, as $a(x)$ is non-degenerate as well, we have Lebesgue almost surely on $(0,\eps)\times \reals$ that
\begin{equation}\label{E:Lambda1}
    \partial_x \Lambda(t,x,h_1) =  \partial_x \Lambda(t,x,h_2).
\end{equation}
By continuity the above holds for all $(t,x)\in (0,\eps)\times \reals$. Thus, as $\Lambda(\cdot,h)$ solves the PDE
\begin{equation*}
    \partial_t u + Lu + \frac{1}{2}\partial_x u A \partial_x u  =0;\qquad u(T,x) = \log(z(x, h,\wh{\kappa}(h))) - \frac{1}{2}x'P_U x + x'P_U h.
\end{equation*}
we see that $v \dfn \Lambda(\cdot,h_1) - \Lambda(\cdot,h_2)$ satisfies, on $(0,\eps)\times\reals$
\begin{equation*}
    \partial_t v + Lv = 0.
\end{equation*}
Again, using \eqref{E:Lambda1} we know by Ito's formula that almost surely
\begin{equation*}
    \Lambda(t,X_t,h_1) - \Lambda(0,X_0,h_1) = \Lambda(t,X_t,h_2) - \Lambda(0,X_0,h_2)
\end{equation*}
which by the support theorem for diffusion processes (see \cite{MR0400425}) implies for Lebesgue almost every $(0,\eps)\times\reals$ (hence all by continuity) that
\begin{equation*}
    \Lambda(t,x,h_1) - \Lambda(0,X_0,h_1) = \Lambda(t,x,h_2) - \Lambda(0,X_0,h_2)
\end{equation*}
But, this implies
\begin{equation*}
    w(t,x) \dfn e^{\Lambda(t,x,h_1)-\Lambda(0,X_0,h_1)} - e^{\Lambda(t,x,h_2) - \Lambda(0,X_0,h_2)}
\end{equation*}
satisfies the PDE
\begin{equation*}
    \begin{split}
    \partial_t w + Lw &= 0;\\
        w(T,x) &= z(x,h_1,\wh{\kappa}(h_1))e^{ - \frac{1}{2}x'P_U x + x'P_U h_1-\Lambda(0,X_0,h_1)}\\
        &\qquad\qquad - z(x,h_2,\wh{\kappa}(h_2))e^{ - \frac{1}{2}x'P_U x + x'P_U h_2-\Lambda(0,X_0,h_2)},
    \end{split}
\end{equation*}
and is such that $w(t,x) \equiv 0$ almost surely on $(0,\eps)\times \reals$. We now apply \cite[Theorem 1.2]{MR3989959} to $w$. To match their notation define the operator $P = \partial_t + (1/2)\partial_x (A \partial_x)$. We then have
\begin{equation*}
    P w = \frac{1}{2} \partial_x A \partial_x w - b \partial_x w,
\end{equation*}
so that $|Pw|\leq N| \partial_x w|$ for a constant $N$. As Lemma \ref{L:clear_function_soln} implies $|w| \leq C$ for some $C(h)$, the assumptions of \cite[Theorem 1.2]{MR3989959} are met with $\alpha = 1$ therein and we may conclude that if there exists any $t_0$ such that $w(t_0,\cdot)$ is identically zero, then it must be that $w(T,\cdot)$ is identically constant as well.  But, this is an obvious contradiction if $h_1 \neq h_2$.
\end{proof}



\section{Proofs from Section \ref{S:asympt}}\label{AS:asympt}

\begin{proof}[Proof of Proposition \ref{P:large_etaU}]

Throughout, $h$ is fixed.  Let us first assume there is a sequence $\eta_n \to \infty$ and a number $\wt{\kappa}$ such that
\begin{equation}\label{E:large_kappa_conv}
    \lim_{n\to\infty} \wt{\kappa}_n \dfn \frac{\wh{\kappa}(h,\eta_n)}{(\alpha_I + \alpha_U)\eta_n} = \wt{\kappa}.
\end{equation}
Given this, we claim
\begin{equation}\label{E:large_eta_zlim}
    \begin{split}
    \lim_{n\to\infty} z(x,h,\wh{\kappa}(h,\eta_n)) e^{\eta_n \wt{\kappa}_n} &= e^{Ae^{\wt{\kappa}}- B(x,h)};\qquad A = \frac{\omega_U}{\alpha_I+\alpha_N}, \ B(x,h) = \frac{\vartheta(x,h)}{\alpha_I + \alpha_N}.
    \end{split}
\end{equation}
First, \eqref{E:z_crra} implies
\begin{equation}\label{E:large_temp_z}
    z(x,h,\wh{\kappa}(h,\eta_n)) = \left(\frac{A}{\eta_n}\right)^{\eta_n} \prodlog{\frac{A}{\eta_n} e^{\frac{B(x,h)}{\eta_n}+\wt{\kappa}_n}}^{-\eta_n}.
\end{equation}
Next, for any $k$, $\prodlog{z}^{-k} = z^{-k}e^{k\prodlog{z}}$ and hence
\begin{equation}\label{E:large_temper}
    z(x,h,\wh{\kappa}(h,\eta_n))e^{\eta_n\wt{\kappa}_n} = e^{-B(x,h)  + \eta_n \prodlog{\frac{A}{\eta_n} e^{\frac{B(x,h)}{\eta_n} + \wt{\kappa}_n}}}.
\end{equation}
Using  $\lim_{y\to 0} \prodlog{y}/y = 1$ we obtain for any $A>0,B\in\reals$
\begin{equation}\label{E:PL_ident}
\lim_{y \to 0} \frac{\prodlog{A y e^{y B+ \wt{\kappa}}}}{y}  = \lim_{y \to 0} \frac{\prodlog{A y e^{y B + \wt{\kappa}}}}{Aye^{By + \wt{\kappa}}}Ae^{B y + \wt{\kappa}} =  A e^{\wt{\kappa}},
\end{equation}
Using this, and the assumption that $\wt{\kappa}_n \to \wt{\kappa}$ we verify \eqref{E:large_eta_zlim}. We now turn to identifying $\wt{\kappa}$.  To this end, re-writing \eqref{E:U_budget_new}, using \eqref{E:ell_T_def},  and setting $\wh{\kappa}_n = \wh{\kappa}(h,\eta_n)$, we have for each $n$
\begin{equation}\label{E:large_kappa_map}
    1 = \frac{\expvs{(\pi_0^U)'\Psi(X_T) \ell(T,X_T,h) z(X_T,h,\wh{\kappa}_n)}}{\expvs{\ell(T,X_T,h) z(X_T,h,\wh{\kappa}_n)^{1-\frac{1}{\eta_n}}}}.
\end{equation}
From \eqref{E:large_temp_z} we find
\begin{equation*}
    z(x,h,\wh{\kappa}(h,\eta_{n}))e^{\eta_n\wt{\kappa}_n} = \left(\frac{Ae^{\wt{\kappa}_n}}{\eta_n}\right)^{\eta_n}\prodlog{\frac{A}{\eta_n} e^{\frac{B(x,h)}{\eta_n}+\wt{\kappa}_n}}^{-\eta_n},
\end{equation*}
From \eqref{E:vartheta_to_V_ordering} we know (as $h$ is fixed) that $B(x,h)$ is bounded from below by some constant $\ul{B}(h)$. As $\prodlog{\cdot}$ is increasing this implies
\begin{equation*}
    \begin{split}
    z(x,h,\wh{\kappa}(h,\eta_{n}))e^{\eta_n\wt{\kappa}_n} &\leq  \left(\frac{Ae^{\wt{\kappa}_n}}{\eta_n}\right)^{\eta_n}\prodlog{\frac{A}{\eta_n} e^{\frac{\ul{B}(h)}{\eta_n}+\wt{\kappa}_n}}^{-\eta_n}= \left(\frac{\frac{A}{\eta_n}e^{\frac{\ul{B}(h)}{\eta_n} + \wt{\kappa}_n}}{\prodlog{\frac{A}{\eta_n} e^{\frac{\ul{B}(h)}{\eta_n}+\wt{\kappa}_n}}}\right)^{\eta_n}e^{-\ul{B}(h)};\\
    &= e^{-\ul{B(h)} + \eta_n \prodlog{\frac{A}{\eta_n} e^{\frac{\ul{B}(h)}{\eta_n}+\wt{\kappa}_n}}},
    \end{split}
\end{equation*}
where the last equality hold as $z/\prodlog{z} = e^{\prodlog{z}}$. $\wt{\kappa}_n \to \wt{\kappa}$ and \eqref{E:PL_ident} then imply that for $n$ large enough
\begin{equation}\label{E:large_temper_2}
    z(x,h,\wh{\kappa}(h,\eta_{n}))e^{\eta_n\wt{\kappa}_n} \leq  e^{-\ul{B}(h) + A e^{\wt{\kappa}} + 1},
\end{equation}
and hence $z(X_T,h,\wh{\kappa}(h,\eta_{n}))e^{\eta_n\wt{\kappa}_n}$ is almost surely bounded from above.  We may express \eqref{E:large_kappa_map} as
\begin{equation}\label{E:temper_5}
    \begin{split}
        1  = \frac{\expvs{(\pi_0^U)'\Psi(X_T) \ell(T,X_T,h) z(X_T,h,\wh{\kappa}_n)e^{\eta_n\wt{\kappa}_n}}}{e^{\wt{\kappa}_n}\expvs{\ell(T,X_T,h) \left(z(X_T,h,\wh{\kappa}_n)e^{\eta_n\wt{\kappa}_n}\right)^{1-\frac{1}{\eta_n}}}}.
    \end{split}
\end{equation}
The bounded convergence theorem and \eqref{E:large_eta_zlim} yield
\begin{equation*}
    \wt{\kappa} = \log\left(\frac{\expvs{(\pi_0^U)'\Psi(X_T) \ell(T,X_T,h) e^{-B(X_T,h)}}}{\expvs{\ell(T,X_T,h) e^{-B(X_T,h)}}}\right).
\end{equation*}
and hence for any $\eta_n \to \infty$ such that $\wh{\kappa}_n$ from \eqref{E:large_kappa_conv} converges, the limit is uniquely identified. Lastly, note that
\begin{equation*}
    S(t,x,h;\eta_n) = \frac{\condexpvs{\Psi(X_T) \ell(T,X_T,h) z(X_T,h,\wh{\kappa}_n)}{X_t = x}}{\condexpvs{\ell(T,X_T,h)z(X_T,h,\wh{\kappa}_n)}{X_t=x}}.
\end{equation*}
Using the exact same arguments as above for $(t,x,h)$ fixed (i.e. multiplying the numerator and denominator by $e^{\eta_n\wt{\kappa}_n}$ and using the bounded convergence theorem), we may pass the limit through the expectation operator to obtain \eqref{E:large_etaU_price}.

The final item to prove is that $\eta_n \to \infty$ implies $\wt{\kappa}_n$ from \eqref{E:large_kappa_conv} converges to a finite (unique) limit, for whatever $\wh{\kappa}(h,\eta_n)$ we choose that satisfies \eqref{E:U_budget_new}.  As we have already shown that every sub-sequence $\cbra{\wt{\kappa}_{n_k}}$ which converges to a finite limit, has the same limit point, it suffices to show $\wt{\kappa}_n \to \pm \infty$ cannot happen.

Assume $\wt{\kappa}_n \to -\infty$.  From \eqref{E:large_temper} and \eqref{E:large_kappa_map} we obtain
\begin{equation*}
    e^{\wt{\kappa}_n} = \frac{\expvs{(\pi_0^U)'\Psi(X_T) \ell(T,X_T,h) e^{-B(X_T,h) + \eta_n\prodlog{\frac{A}{\eta_n} e^{\frac{B(x,h)}{\eta_n}+\wt{\kappa}_n}}}}}{\expvs{\ell(T,X_T,h) e^{\left(1-\frac{1}{\eta_n}\right)\left(-B(X_T,h) + \eta_n\prodlog{\frac{A}{\eta_n} e^{\frac{B(x,h)}{\eta_n}+\wt{\kappa}_n}}\right)}}} \rdfn \frac{N_n}{D_n}.
\end{equation*}
Using that $\prodlog{\cdot}$ is increasing, \eqref{E:PL_ident} and Fatou's lemma, we deduce
\begin{equation*}
    \liminf_{n\to\infty} N_n \geq \expvs{(\pi_0^U)'\Psi(X_T) \ell(T,X_T,h) e^{-B(X_T,h)}}.
\end{equation*}
As for the denominator, let $\wt{\kappa} > 0$ be fixed and take $n$ large enough so that $\wt{\kappa}_n \leq -\wt{\kappa}$ for $n$ large enough. As $\prodlog{\cdot}$ is increasing and $\eta_n$ is large we have
\begin{equation*}
    \begin{split}
    D_n &\leq \expvs{\ell(T,X_T,h) e^{\left(1-\frac{1}{\eta_n}\right)\left(-B(X_T,h) + \eta_n\prodlog{\frac{A}{\eta_n} e^{\frac{B(x,h)}{\eta_n}-\wt{\kappa}}}\right)}},\\
    &= \expvs{\ell(T,X_T,h) \left(z(x,h,\wt{\kappa})e^{\eta_n\wt{\kappa}}\right)^{1-\frac{1}{\eta_n}}},\\
    &\leq e^{\left(1-\frac{1}{\eta_n}\right)\left(-\ul{B}(h) + Ae^{-\wt{\kappa}}+1\right)}\expvs{\ell(T,X_T,h)}.
    \end{split}
\end{equation*}
Above, the second equality follows from \eqref{E:large_temper} and the inequality from \eqref{E:large_temper_2}, which was valid for general constants $\wt{\kappa}_n \equiv \wt{\kappa}$ and does not require \eqref{E:large_kappa_map}.  As such
\begin{equation*}
    \limsup_{n} D_n \leq e^{\left(-\ul{B}(h) + Ae^{-\wt{\kappa}}+1\right)}\expvs{\ell(T,X_T,h)}.
\end{equation*}
As such, $\wt{\kappa}_n \to -\infty$ contradicts \eqref{E:large_kappa_map} in the limit $n\to\infty$.   Lastly, assume $\wt{\kappa}_n \to \infty$. Here, we will use a slightly different argument to reach a contradiction. First, using that $\prodlog{\cdot}$ is increasing and \eqref{E:PL_ident},  if $\wt{\kappa}_n \to \infty$ then for any constants $A>0,\check{B}$
\begin{equation}\label{E:larget_temper_4}
    \lim_{n\to\infty} \eta_n \prodlog{\frac{A}{\eta_n}e^{\frac{\check{B}}{\eta_n} + \wt{\kappa}_n}} = \infty.
\end{equation}
Next, from \eqref{E:bounds_1}, \eqref{E:temper_5}, and using the notation of \eqref{E:phi_def} we deduce that
\begin{equation*}
    1  \leq \frac{\expvs{(\pi^U_0)'\Psi(X_T)\ell(T,X_T,h)}}{\expvs{\ell(T,X_T,H)e^{-\phi(X_T,h)}}} \times \check{z}(h,\wh{\kappa}(h,\eta_n))^{\frac{1}{\eta_n}}.
\end{equation*}
Above, we have written $\check{z}(h,\wh{\kappa}(h,\eta_n))$ for the quantity in \eqref{E:bounds_1}, and used Assumption \ref{A:Psi} which implies that $\phi$ is non-negative.  Therefore, there is a quantity $C(h)$ depending only on $h$ (and not $\wt{\kappa}_n,\eta_n$) such that
\begin{equation*}
     \check{z}(h,\wh{\kappa}(h,\eta_n)) \geq  C(h)^{\eta_n}.
\end{equation*}
From \eqref{E:clear_no_x}, and using $I(y) = y^{-1/\eta_n}$ we have (see \eqref{E:z_crra})
\begin{equation*}
    \check{z}(h,\wh{\kappa}(h,\eta_n)) = \left(\frac{A}{\eta_n}\right)^{\eta_n} \prodlog{\frac{A}{\eta_n} e^{\frac{\check{B}(h)}{\eta_n}+\wt{\kappa}_n}}^{-\eta_n},
\end{equation*}
where $\check{B}(h) = -\frac{1}{2}V(h)'(P_I-P_U)V(h)$. Therefore,
\begin{equation*}
     \left(\frac{A}{\eta_n}\right)^{\eta_n} \prodlog{\frac{A}{\eta_n} e^{\frac{\check{B}(h)}{\eta_n}+\wt{\kappa}_n}}^{-\eta_n} \geq C(h)^{\eta_n} \quad \Leftrightarrow \quad \eta_n \prodlog{\frac{A}{\eta_n} e^{\frac{\check{B}(h)}{\eta_n}+\wt{\kappa}_n}} \leq \frac{A}{C(h)}.
\end{equation*}
But, as $n\to\infty$ this violates \eqref{E:larget_temper_4}, showing that $\wt{\kappa}_n\not\to\infty$ and finishing the proof.
\end{proof}

\begin{proof}[Proof of Proposition \ref{P:small_etaU}]

Throughout, $h$ is fixed. Also, we may assume $\eta_U < 1$, and hence $\wh{\kappa}(h) = \wh{\kappa}(h,\eta_U)$ from Theorem \ref{T:wU0_exists_gen} is the unique $\kappa$ enforcing \eqref{E:U_budget_new}.  Let us first assume there is a sequence $\eta_n \to 0$ and a number $\wt{\kappa}$ such that
\begin{equation*}
    \lim_{n\to\infty} \wt{\kappa}_n \dfn \wh{\kappa}(h,\eta_n) = \wt{\kappa}.
\end{equation*}
Given this, we first claim
\begin{equation}\label{E:small_eta_zlim}
    \begin{split}
    \lim_{n\to\infty} z(x,h,\wt{\kappa}_n) &= e^{\frac{1}{\alpha_I +\alpha_N}(\vartheta(x,h)+\wt{\kappa})^{-}};\quad \lim_{n\to\infty} z(x,h,\wt{\kappa}_n)^{1-\frac{1}{\eta_{n}}} = \frac{(\vartheta(x,h)+\wt{\kappa})^{+}}{\omega_U}.
    \end{split}
\end{equation}
Indeed, \eqref{E:z_crra} gives
\begin{equation*}
    z(x,h,\wt{\kappa}_n) = \left(\frac{A}{\eta_n}\right)^{\eta_n}\prodlog{\frac{A}{\eta_n} e^{\frac{B_n(x,h)}{\eta_n}}}^{-\eta_n},
\end{equation*}
where
\begin{equation*}
    A = \frac{\omega_U}{\alpha_I+\alpha_N};\qquad  B_n(x,h) = \frac{\vartheta(x,h)+\wt{\kappa}_n}{\alpha_I + \alpha_N}.
\end{equation*}
Next, for any $k$,  $\prodlog{z}^{-k} = z^{-k}e^{k\prodlog{z}}$ and hence
\begin{equation*}
    z(x,h,\wt{\kappa}_n) = e^{-B_n(x,h) + \eta_n \prodlog{\frac{A}{\eta_n} e^{\frac{B_n(x,h)}{\eta_n}}}}.
\end{equation*}
By assumption $B_n(x,h) \to B(x,h) = (\vartheta(x,h) + \wt{\kappa})/(\alpha_I + \alpha_N)$.  As l'Hopital's rule shows
\begin{equation*}
\lim_{y \to \infty} \frac{\prodlog{A y e^{y B}}}{y} = B^+;\qquad A>0,B\in\reals,
\end{equation*}
we  verify \eqref{E:small_eta_zlim} for $z$.  Next, the above identities also imply
\begin{equation}\label{E:small_eta_z_ident}
    z(x,h,\wt{\kappa}_n)^{1-\frac{1}{\eta_n}} = e^{-B_n(x,h) + \eta_n \prodlog{\frac{A}{\eta_n} e^{\frac{B_n(x,h)}{\eta_n}}}} \times \frac{\eta_n}{A} \prodlog{\frac{A}{\eta_n} e^{\frac{B_n(x,h)}{\eta_n}}},
\end{equation}
which in turn yields \eqref{E:small_eta_zlim} as
\begin{equation*}
    \lim_{n\to\infty}  z(x,h,\wt{\kappa}_n)^{1-\frac{1}{\eta_n}}  =  e^{\frac{1}{\alpha_I +\alpha_N}(\vartheta(x,h)+\wt{\kappa})^{-}} \frac{(\vartheta(x,h)+\wt{\kappa})^{+}}{\omega_U} = \frac{(\vartheta(x,h)+\wt{\kappa})^{+}}{\omega_U}.
\end{equation*}
Having obtained the limiting values in \eqref{E:small_eta_zlim} we now identify $\wt{\kappa}$.  To this end, re-writing \eqref{E:U_budget_new} and  using \eqref{E:ell_T_def} we have for each $n$
\begin{equation}\label{E:kappa_map}
    1 = \frac{\expvs{(\pi_0^U)'\Psi(X_T) \ell(T,X_T,h) z(X_T,h,\wt{\kappa}_n)}}{\expvs{\ell(T,X_T,h) z(X_T,h,\wt{\kappa}_n)^{1-\frac{1}{\eta_n}}}}.
\end{equation}
Using the explicit formula for $\ell(T,X_T,h)$ in \eqref{E:ell_T_def}, the upper bound for $z$ in Lemma \ref{L:clear_function_soln} and Assumption \ref{A:Psi}, it is clear that
\begin{equation*}
    \begin{split}
    &\lim_{n\to\infty} \expvs{(\pi_0^U)'\Psi(X_T) \ell(T,X_T,h) z(X_T,h,\wt{\kappa}_n)}\\
    &\qquad \qquad = \expvs{(\pi_0^U)'\Psi(X_T) \ell(T,X_T,h) e^{\frac{1}{\alpha_I +\alpha_N}(\vartheta(X_T,h)+\wt{\kappa})^{-}}} < \infty,
    \end{split}
\end{equation*}
where the inequality follows because, for $h$ fixed, $\vartheta(x,h)$ is bounded from below in $x$ (see \eqref{E:vartheta_to_V_ordering}).  As for the denominator we first note from \eqref{E:small_eta_z_ident} that
\begin{equation*}
    z(x,h,\wt{\kappa}_n)^{1-\frac{1}{\eta_n}} = z(x,h,\wt{\kappa}_n) \times \frac{\eta_n}{A} \prodlog{\frac{A}{\eta_n} e^{\frac{B_n(x,h)}{\eta_n}}}.
\end{equation*}
Using that $\prodlog{\cdot}$ is increasing, $\prodlog{z} \leq \log(z)$ for $z > e$ and $\prodlog{e} = 1$, we obtain
\begin{equation*}
    \frac{\eta_n}{A} \prodlog{\frac{A}{\eta_n} e^{\frac{B_n(x,h)}{\eta_n}}} \leq \frac{\eta_n}{A} 1_{\cbra{\frac{A}{\eta_n} e^{\frac{B^+_n}{\eta_n}} \leq e}} + \frac{\eta_n}{A}\left(\frac{B^+_n(x,h)}{\eta_n} + \log\left(\frac{A}{\eta_n}\right)\right)1_{\cbra{\frac{A}{\eta_n} e^{\frac{B^+_n}{\eta_n}} > e}}.
\end{equation*}
so that for $\eta_n < 1$
\begin{equation*}
    \frac{\eta_n}{A} \prodlog{\frac{A}{\eta_n} e^{\frac{B_n(x,h)}{\eta_n}}} \leq \frac{1}{A} + \frac{B^+_n(x,h)}{A}  + \left|\frac{\eta_n}{A}\log\left(\frac{\eta_n}{A}\right)\right|.
\end{equation*}
Assumption \ref{A:Psi}, the definition of $\ell(T,X_T,h)$ in \eqref{E:ell_T_def} and the upper bound for $z$ in Lemma \ref{L:clear_function_soln} allow us to invoke the dominated convergence theorem to obtain
\begin{equation*}
    \lim_{n\to\infty} \expvs{\ell(X_T) z(X_T,\wt{\kappa}_n)^{1-\frac{1}{\eta_n}}} = \frac{1}{\omega_U}\expvs{\ell(X_T)(\vartheta(X_T,h) + \wt{\kappa})^{+}}<\infty.
\end{equation*}
But, this implies
\begin{equation*}
    \frac{1}{\omega_U} = \frac{\expvs{(\pi_0^U)'\Psi(X_T) \ell(X_T) e^{\frac{1}{\alpha_I +\alpha_N}(\vartheta(x,h)+\wt{\kappa})^{-}}} }{\expvs{\ell(X_T)(\vartheta(X_T,h) + \wt{\kappa})^{+}}},
\end{equation*}
which is exactly \eqref{E:small_etaU_kappa_eqn} at $\wt{\kappa}$. However, as the right side in \eqref{E:small_etaU_kappa_eqn} is strictly decreasing in the argument $\kappa$, we see that $\wt{\kappa} = \wh{\kappa}(h,0)$ . Therefore, we have shown that for any $\eta_n \to 0$ such that $\wt{\kappa}_n = \wh{\kappa}(h,\eta_n)$ converges to some $\wt{\kappa}$, it must be that  $\wt{\kappa} = \wh{\kappa}(h,0)$.    Lastly, note that
\begin{equation*}
    S(t,x,h;\eta_n) = \frac{\condexpvs{\Psi(X_T) \ell(T,X_T,h) z(X_T,h,\wt{\kappa}_n)}{X_t = x}}{\condexpvs{\ell(T,X_T,h)z(X_T,h,\wt{\kappa}_n)}{X_t=x}}.
\end{equation*}
Using the exact same arguments as above for $(t,x,h)$ fixed, we may pass the limit through the expectation operator to obtain \eqref{E:small_eta_px_limit}.

The final item to prove is that $\eta_n \to 0$ implies $\wt{\kappa}_n \to \wh{\kappa}(h,0)$.  Now, we have already shown that every sub-sequence which converges to a finite limit, has the same limit point $\wh{\kappa}(h,0)$. It thus suffices show $\wt{\kappa}_n \to \pm \infty$ cannot happen.  To do this, we recall from the proof of Theorem \ref{T:wU0_exists_gen}, that for $\eta_U < 1$ the right hand side of \eqref{E:kappa_map} is strictly decreasing in $\kappa$.  Given this, assume that $\wt{\kappa}_n \to \infty$, and for a given $\kappa > 0$ assume $n$ is large enough so that $\wt{\kappa}_n \geq \kappa$. By monotonicity
\begin{equation*}
     1 \leq  \frac{\expvs{(\pi_0^U)'\Psi(X_T) \ell(T,X_T,h) z(X_T,h,\kappa)}}{\expvs{\ell(T,X_T,h) z(X_T,h,\kappa)^{1-\frac{1}{\eta_n}}}}.
\end{equation*}
With $\kappa$ fixed, we repeat the above analysis taking $\eta_n \to 0$ to obtain
\begin{equation*}
     1 \leq  \frac{\omega_U \expvs{(\pi_0^U)'\Psi(X_T) \ell(X_T) e^{\frac{1}{\alpha_I +\alpha_N}(\vartheta(x,h)+\kappa)^{-}}} }{\expvs{\ell(X_T)(\vartheta(X_T,h) + \kappa)^{+}}}.
\end{equation*}
As this holds for any $\kappa > 0$ we take $\kappa\to\infty$ and use monotone convergence to obtain $1\leq 0$, a contradiction. Therefore, $\wh{\kappa}_n \not\to \infty$.   Similarly, if $\wt{\kappa}_n \to -\infty$, fix $\kappa > 0$ and take $n$ large enough so that $\wt{\kappa}_n \leq -\kappa$. By monotonicity
\begin{equation*}
     1 \geq  \frac{\expvs{(\pi_0^U)'\Psi(X_T) \ell(T,X_T,h) z(X_T,h,-\kappa)}}{\expvs{\ell(T,X_T,h) z(X_T,h,-\kappa)^{1-\frac{1}{\eta_n}}}}.
\end{equation*}
Taking $\eta_n \to 0$ gives
\begin{equation*}
     1 \geq  \frac{\omega_U \expvs{(\pi_0^U)'\Psi(X_T) \ell(X_T) e^{\frac{1}{\alpha_I +\alpha_N}(\vartheta(x,h)-\kappa)^{-}}} }{\expvs{\ell(X_T)(\vartheta(X_T,h) -\kappa)^{+}}}.
\end{equation*}
Taking $\kappa\to\infty$ and use monotone convergence to obtain $1\geq \infty$, a contradiction. Therefore, $\wh{\kappa}_n \not\to -\infty$ and the proof is complete.
\end{proof}


\nada{


\jd{Under rational expectations, which is what we assume to construct the PCE, the agent knows prices, i.e., the SDF in all future states. So for each realization of X, Z is known. Otherwise we can't even write demand functions (we can't write $I(Z)$.}

\jd{
\begin{thm}
The PCE is a DNREE.
\end{thm}
\begin{proof}
By rational expectations, the state price density $Z$ is known contingent on each possible state of nature. It follows that $J(Z)=\omega_U I(Z)-\alpha \log(Z)=A(X,h)+\kappa(h)$ is known, for each possible contingency. But
\begin{align*}
    J(Z)=A(X,h)+\kappa(h)=&\Pi' \Psi(X) + \alpha\frac{1}{2}X'P_IX - \left(\left(\alpha_I+\alpha_N\tau_N\right)h + \alpha_N \mu_N\right)'P_IX\\\nonumber
    &-\frac{1}{2}\alpha \left(X'P_UX\right)+\alpha X'P_Uh - \frac{1}{2}h'P_Uh+\kappa(h)\\
    =&\Pi'\Psi(X) - \alpha_N \mu_N'P_I X + \alpha\frac{1}{2}X'PX - \left(\alpha_I+\alpha_N\tau_N\right)h'P_IX\\
    &+\alpha X'P_Uh - \frac{1}{2}h'P_Uh+\kappa(h)\\
    =&A_0(X)+ h'A_1X - A_2(h)+\kappa(h)
\end{align*}
and redefining $K(h)=- A_2(h)+\kappa(h)$, we obtain that $J(Z)-A_0(X)- K(h) = h'A_1X$ is known, where $K(h)$ satisfies the static budget constraint, hence is in the uninformed information set already at $t=0$. Consider factor realizations $X_0,X_1,...,X_d$ such that the matrix
\begin{equation}
    M(X_1,...,X_d)=A_1[X_1-X_0,...,X_d-X_0]
\end{equation}
is invertible. Also define the differences $\widehat{J}_k=\widehat{J}(Z_k,X_k)=J(Z_k)-A_0(X_k)-(J(Z_0)-A_0(X_0))$ and the row vector $\widehat{J}$ with elements $\widehat{J}_k$. It then follows that
\begin{equation}
    h'=\widehat{J}M(X_1,...,X_d)^{-1}.
\end{equation}
This ensures revelation for a rational uninformed agent.
\end{proof}
\begin{rem}
    Revelation of the noisy signal under rational expectations, does not imply that the initial vector of asset prices is sufficient. Direct computation shows that revelation through prices can be local, hence non-unique.
\end{rem}}
\jd{
Next, we examine revelation through the initial price vector. Since the Jacobian is given by
}
\jd{
\begin{align*}
    \partial_hS(t,x,h,w^U_0) =& \frac{\condexpvs{\Psi(X_T)\ell(T,X_T,h)z(X_T,h,\wh{\kappa}(h,w^U_0)(\partial_h\log(\ell)+\partial_h\log(z))}{X_t=x}}{\condexpvs{\ell(T,X_T,h) z(X_T,h,\wh{\kappa}(h,w^U_0)) }{X_t=x}}\\
    &-S(t,x,h,w^U_0)\frac{\condexpvs{\ell(T,X_T,h)z(X_T,h,\wh{\kappa}(h,w^U_0))(\partial_h\log(\ell)+\partial_h\log(z))}{X_t=x}}{\condexpvs{\ell(T,X_T,h) z(X_T,h,\wh{\kappa}(h,w^U_0)) }{X_t=x}},
\end{align*}
invertibility holds if the determinant of the Jacobian differs from zero in a neighborhood of $h$. In this instance a local inverse exists. In the one-dimensional case, global invertibility holds if the expression on the RHS is either strictly positive or strictly negative for all values of $h$.}

\jd{Consider the special case $U(z)=\log(z)$ and $X_T=x+B\sqrt{T-t}$ where $B$ has a multivariate standard normal distribution with density n(B). Also assume $\Psi(X)=\exp(X)$ where the exponential is understood as a vector of exponentials, each evaluated at the factor for the corresponding asset. In this instance, for $t=0$, and with the normalized price $S^n_t=S_t/w^u_t$
\begin{align}
    &\partial_h S^n_t =F(t)\left(Cov_t^{P^{z'}}[\Psi ,\Psi^m]M_1(t)+Cov_t^{P^{z'}}[\Psi ,\Bar{X}']\Sigma^{1/2}\Bar{A}_1'\right)+S^n_tM_1(t)\\ 
    &\partial_h k(t,h)=-\mathbb{E}_t^{P^{z'}}[\Pi'\Psi]M_1(t)-\mathbb{E}_t^{P^{z'}}[\Bar{X}']\Sigma^{1/2}A_1' 
\end{align}
where $\Psi^m=\Pi' \Psi$ and
\begin{align}
    \frac{dP^{z'}}{dP}=&\frac{n(\Bar{X})(-z'(\Bar{A}(\Bar{X},h)+k)}{\int_{-\infty}^{+\infty}n(\Bar{X})(-z'(\Bar{A}(\Bar{X},h)+k)d\Bar X}\\
    F(t)=&\frac{\int_{-\infty}^{+\infty}
    n(\Bar{X})z'\left(\Bar{A}(\Bar{X},h)+k\right)
    d\Bar{X}}{\int_{-\infty}^{+\infty}n\left(\Bar{X}\right)d\Bar{X}}\\
    \Bar{A}(\Bar{X},h)=&\Bar{A}_0(\Sigma^{1/2}\Bar{X})+\Pi'\Psi(\Sigma^{1/2}\Bar{X}+M(t,h))+h'\Bar{A}_1\Sigma^{1/2}\Bar{X}\\
    \Bar{A}_1=&\alpha (M_1(t)'+A_1P^{-1})P; \hspace{1cm}P=P_I-P_U\\
    \Bar{A}_0(\Sigma^{1/2}\Bar{X})=&A_0(\Sigma^{1/2}\Bar{X}) +\alpha (M_2(t)x)'P\Bar{X}\Sigma^{1/2}\\
    A_0(X)=&- \alpha_N \mu_N'P_I X + \alpha\frac{1}{2}X'PX,\hspace{1cm}\alpha=\alpha_I+\alpha_N\\
    A_1=&-\left(\alpha_I+\alpha_N\tau_N\right)P_I+\alpha P_U\\
    M_1(t)=&\Sigma(t)'P_U';\hspace{1cm}M_2(t)=\Sigma'\frac{x}{T-t}
\end{align}
and where $k(t,h)$ satisfies
\begin{align}
    (w^u_0)^{-1}=&\frac{\int_{-\infty}^{+\infty}n(\Bar{X})z(\Bar{A}(\Bar{X},h)+k)d\Bar {X}}{\int_{-\infty}^{+\infty}n(\Bar{X})d\Bar {X}}.
\end{align}
The first term in the derivative of $S^n$ is negative, the second is positive. As long as the derivative differs from $0$, the initial price is invertible with respect to $h$ at the given $(x,h,w^u_0)$. [NOTE: I think we might be able to make a statement that the price map is generically invertible, where genericity is with respect to Lebesgue measure on $\mathbb{R}^3$.]}

\jd{Note that the derivations above are also valid if we calculate $k(t,h)$ at an arbitrary time $t$. This corresponds to taking time $t$-wealth as reference. }
\jd{\begin{proof}
When $U(z)=\log(z)$ and $\Psi(X)=e^X$, we obtain
\begin{equation}
    \frac{1}{w^u_t}=\frac{\mathbb{E}_t[\ell z]}{\mathbb{E}_t[\ell ]};\hspace{1cm}\frac{S_t}{w^u_t}=\frac{\mathbb{E}_t[\ell z \Psi(X_T)]}{\mathbb{E}_t[\ell]}
\end{equation}
where expectations are conditional on B-information at time $t$ and $h$ is taken as constant. Expanding and simplifying gives
\begin{align*}
    &\frac{\int_{-\infty}^{+\infty}e^{-\frac{1}{2}(X-h)'P_U(X-h)}e^{-\frac{1}{2}\frac{(X-x)'I_d(X-x)}{T-t}}z\left(A(X,t)+\kappa\right)dX}{\int_{-\infty}^{+\infty}e^{-\frac{1}{2}(X-h)'P_U(X-h)}e^{-\frac{1}{2}\frac{(X-x)'I_d(X-x)}{T-t}}dX}\\
    &=\frac{\int_{-\infty}^{+\infty}e^{-\frac{1}{2}(X-M(t,h))'\Sigma^{-1}(X-M(t,h))}z\left(A(X,t)+\kappa\right)dX}{\int_{-\infty}^{+\infty}e^{-\frac{1}{2}(X-M(t,h))'\Sigma^{-1}(X-M(t,h)}dX}
\end{align*}
where, by market clearing,
\begin{align*}
    A(X,h)=&\Pi' \Psi(X) + \alpha\frac{1}{2}X'P_IX - \left(\left(\alpha_I+\alpha_N\tau_N\right)h + \alpha_N \mu_N\right)'P_IX\\\nonumber
    &-\frac{1}{2}\alpha \left(X'P_UX\right)+\alpha X'P_Uh - \frac{1}{2}h'P_Uh\\
    =&\Pi'\Psi(X) - \alpha_N \mu_N'P_I X + \alpha\frac{1}{2}X'PX - \left(\alpha_I+\alpha_N\tau_N\right)h'P_IX\\
    &+\alpha X'P_Uh - \frac{1}{2}h'P_Uh\\
    =&A_0(X)+ h'A_1X - A_2(h)
\end{align*}
where $A_0(X)$ and $A_2(h)$ are quadratic-exponential and quadratic forms, respectively, and $A_1$ is a matrix of constants. With the change of variables $\Bar{X}=\Sigma^{-1/2}\left(X-M(t,h)\right)$, we obtain
\begin{align*}
    A(X,h)=&A_0(\Sigma^{1/2}\Bar{X}) + \Pi'\Psi(\Sigma^{1/2}\Bar{X}+M(t,h)) - \alpha_N \mu_N'P_I M(t,h)\\
    & +\alpha M(t,h)'P\Sigma^{1/2}\Bar{X} + \frac{1}{2}\alpha M(t,h)'PM(t,h) - A_2(h)\\
    &+ h'A_1\left(\Sigma^{1/2}\Bar{X}+M(t,h)\right).
\end{align*}
Finally, recalling $M(t,h)=M_1(t)h+M_2(t)x$ for matrices $M_1,M_2$, gives
\begin{align*}
    A(X,h)=&A_0(\Sigma^{1/2}\Bar{X}) + \Pi'\Psi(\Sigma^{1/2}\Bar{X}+M(t,h)) - \alpha_N \mu_N'P_I (M_1(t)h+M_2(t)x)\\
    & +\alpha (M_1(t)h+M_2(t)x)'P\Sigma^{1/2}\Bar{X} + \frac{1}{2}\alpha (M_1(t)h+M_2(t)x)'P(M_1(t)h+M_2(t)x)\\
    &- A_2(h) + h'A_1\left(\Sigma^{1/2}\Bar{X}+M_1(t)h+M_2(t)x\right)\\
    =&A_0(\Sigma^{1/2}\Bar{X}) + \Pi'\Psi(\Sigma^{1/2}\Bar{X}+M(t,h))  + \alpha ((M_1(t)+(P')^{-1}A_1')h+M_2(t)x)'P\Sigma^{1/2}\Bar{X}\\
    &  + \frac{1}{2}\alpha (M_1(t)h+M_2(t)x)'P(M_1(t)h+M_2(t)x)\\
    &- A_2(h) + h'A_1\left(M_1(t)h+M_2(t)x\right)\\
    &- \alpha_N \mu_N'P_I(M_1(t)h+M_2(t)x)\\
    =&A_0(\Sigma^{1/2}\Bar{X}) + \Pi'\Psi(\Sigma^{1/2}\Bar{X}+M(t,h)) \\
    &+\alpha (M_2(t)x)'P\Bar{X}\Sigma^{1/2}+\alpha h'(M_1(t)'+A_1P^{-1})P\Sigma^{1/2}\Bar{X}+\Bar{A}_2(h,x)\\
    =&\Bar{A}_0(\Sigma^{1/2}\Bar{X})+ \Pi'\Psi(\Sigma^{1/2}\Bar{X}+M(t,h)) +h'\Bar{A}_1\Sigma^{1/2}\Bar{X}+\Bar{A}_2(h,x).
\end{align*}
Now defining $k=\kappa+\Bar{A}_2(h,x)$, we obtain
\begin{align}
    &A(X,h)+\kappa=\Bar{A}_0(\Sigma^{1/2}\Bar{X})+ \Pi'\Psi(\Sigma^{1/2}\Bar{X}+M(t,h))+h'\Bar{A}_1\Sigma^{1/2}\Bar{X}+k=\Bar{A}(\Bar{X},h)+k\\
    &\partial_h(\Bar{A}(\Bar{X},h)+k)=\Pi'\Psi(\Sigma^{1/2}\Bar{X}+M(t,h))M_1(t)+(\Bar{A}_1\Sigma^{1/2}\Bar{X})'+\partial_hk
\end{align}
and with $S^n=S/w^u$ and $\Psi=\Psi(\Sigma^{1/2}\Bar{X}+M(t,h))$,
\begin{align*}
    &S^n_t=\frac{\int_{-\infty}^{+\infty}\Psi(\Bar{X}\Sigma^{1/2}+M(t,h))n\left(\Bar{X}\right)z\left(\Bar{A}(\Bar{X},h)+k\right)d\Bar{X}}{\int_{-\infty}^{+\infty}n\left(\Bar{X}\right)d\Bar{X}}\\
    &\partial_hS^n_t=\frac{\int_{-\infty}^{+\infty}\Psi n\left(\Bar{X}\right)z'\left(\Bar{A}(\Bar{X},h)+k\right)\left(\Pi'\Psi M_1(t)+(\Bar{A}_1\Sigma^{1/2}\Bar{X})'+\partial_hk\right)d\Bar{X}}{\int_{-\infty}^{+\infty}n\left(\Bar{X}\right)d\Bar{X}}+S^n_t\otimes M_1(t)
\end{align*}
\begin{align*}
    (w^u_t)^{-1}=&\frac{\int_{-\infty}^{+\infty}n\left(\Bar{X}\right)z\left(\Bar{A}(\Bar{X},h)+k\right)d\Bar{X}}{\int_{-\infty}^{+\infty}n\left(\Bar{X}\right)d\Bar{X}}\\
    \partial_h(w^u_t)^{-1}=&\frac{\int_{-\infty}^{+\infty}n\left(\Bar{X}\right)z'\left(\Bar{A}(\Bar{X},h)+k\right)\left(\Pi'\Psi M_1(t)+(\Bar{A}_1\Sigma^{1/2}\Bar{X})'+\partial_hk\right)d\Bar{X}}{\int_{-\infty}^{+\infty}n\left(\Bar{X}\right)d\Bar{X}}=0.
\end{align*}
From the last equation, we obtain
\begin{align*}
    \partial_h k=&-\frac{\int_{-\infty}^{+\infty}n\left(\Bar{X}\right)z'\left(\Bar{A}(\Bar{X},h)+k\right)(\Pi'\Psi M_1(t)+(\Bar{A}_1\Sigma^{1/2}\Bar{X})')d\Bar{X}}{\int_{-\infty}^{+\infty}n(\Bar{X})z'\left(\Bar{A}(\Bar{X},h)+k\right)d\Bar{X}}   \\ =&-\mathbb{E}_t^{P^{z'}}[\Pi'\Psi]M_1(t)-\mathbb{E}_t^{P^{z'}}[\Bar{X}']\Sigma^{1/2}A_1'.
\end{align*}
From the second and third equations we obtain
\begin{align*}
    \partial_hS^n_t=\frac{\int_{-\infty}^{+\infty}\Psi n\left(\Bar{X}\right)z'\left(\Bar{A}(\Bar{X},h)+k\right)\left(\Pi' \Psi M_1(t)+(\Bar{A}_1\Sigma^{1/2}\Bar{X})'+\partial_hk\right)d\Bar{X}}{\int_{-\infty}^{+\infty}n\left(\Bar{X}\right)d\Bar{X}}+ S^n_t\otimes M_1(t).
\end{align*}
The first term in this expression becomes
\begin{align*}
    &\frac{\int_{-\infty}^{+\infty}
    \Psi n\left(\Bar{X}\right)z'\left(\Bar{A}(\Bar{X},h)+k\right)\Pi' \Psi d\Bar{X}}{\int_{-\infty}^{+\infty}n\left(\Bar{X}\right)d\Bar{X}}M_1(t)\\
    &+\frac{\int_{-\infty}^{+\infty}
    \Psi n\left(\Bar{X}\right)z'\left(\Bar{A}(\Bar{X},h)+k\right)\Bar{X}'d\Bar{X}}{\int_{-\infty}^{+\infty}n\left(\Bar{X}\right)d\Bar{X}}\Sigma^{1/2}\Bar{A}_1'\\
    &-\frac{\int_{-\infty}^{+\infty}
    \Psi n\left(\Bar{X}\right)z'\left(\Bar{A}(\Bar{X},h)+k\right)
    d\Bar{X}}{\int_{-\infty}^{+\infty}n\left(\Bar{X}\right)d\Bar{X}}\left(\mathbb{E}_t^{P^{z'}}[\Pi'\Psi]M_1(t)+\mathbb{E}_t^{P^{z'}}[\Bar{X}']\Sigma^{1/2}\Bar{A}_1'\right)\\
    =&F(t)\left(\mathbb{E}_t^{P^{z'}}[\Psi \Pi'\Psi]M_1(t)+\mathbb{E}_t^{P^{z'}}[\Psi \Bar{X}']\Sigma^{1/2}\Bar{A}_1'
    -\mathbb{E}_t^{P^{z'}}[\Psi]\left(\mathbb{E}_t^{P^{z'}}[\Pi'\Psi]M_1(t)+\mathbb{E}_t^{P^{z'}}[\Bar{X}']\Sigma^{1/2}\Bar{A}_1'\right)
    \right)\\
    =&F(t)\left(Cov_t^{P^{z'}}[\Psi ,\Psi^m]M_1(t)+Cov_t^{P^{z'}}[\Psi ,\Bar{X}']\Sigma^{1/2}\Bar{A}_1'\right).\\
\end{align*}
\end{proof}}

New derivations: Since $S^n_0=1/\pi^u$, we have

\begin{align}
    \partial S^n_0=&\int_{-\infty}^{+\infty}\psi z' (\partial \Bar{A}+\partial k) n(\Bar{X})d\Bar{X}+\frac{M_1}{\pi^u}=0\\
    \partial k=&\frac{1}{\int_{-\infty}^{+\infty}\psi z' n(\Bar{X})d\Bar{X}}\left [ \frac{M_1}{\pi^u} - \int_{-\infty}^{+\infty}\psi z'  \partial \Bar{A} n(\Bar{X})d\Bar{X}\right]\\
    \partial S_0=&\partial S^n_0 w^U_0 + S^n_0 \partial w^U_0 = S^n_0 \partial w^U_0=-\frac{ \int_{-\infty}^{+\infty}z' (\partial \Bar{A}+\partial k) n(\Bar{X})d\Bar{X}}{\pi^U\left(\int_{-\infty}^{+\infty}z n(\Bar{X})d\Bar{X}\right)^2}
\end{align}
where the numerator in the last expression simplifies to
\begin{align}
   \frac{1}{\mathbb{E}[\Psi z']}\left[ \mathbb{E}[z'\partial \Bar{A}]\mathbb{E}[\Psi z'] + \mathbb{E}[z']\left(\frac{M_1}{\pi^U}-\mathbb{E}[\Psi z'\partial \Bar{A}]\right)\right]
\end{align}
and using $\partial \Bar{A}= \Psi M_1 + F\Bar{X}$ now yields for the bracketed term in the numerator

\begin{align}
    \left(\mathbb{E}[\Psi z']\right)^2+\mathbb{E}[z']\left(\frac{1}{\pi^U}-\mathbb{E}\left([\Psi^2 z']\right)\right)M_1+ \left(\mathbb{E}[z'\Bar{X}]\mathbb{E}[\psi z']-\mathbb{E}[z']\mathbb{E}[\Psi z'\Bar{X}]\right)F.
\end{align}
The first term in the last expression is ambiguous. For the second term, note that it equals
\begin{align*}
    \mathbb{E}[z'\Bar{X}]\mathbb{E}[\psi z']-\mathbb{E}[z']\mathbb{E}[\Psi z'\Bar{X}]&=\left(\mathbb{E}[\psi z']-\mathbb{E}[z']\mathbb{E}[\Psi ]\right)\mathbb{E}[z'\Bar{X}]-\mathbb{E}[z']Cov(\Psi,z'\Bar{X})\\
    &=\left(\mathbb{E}[\psi z']-\mathbb{E}[z']\mathbb{E}[\Psi ]\right)Cov(z',\Bar{X})-\mathbb{E}[z']Cov(\Psi,z'\Bar{X})\\
    &=Cov(\Psi,z')Cov(z',\Bar{X})-\mathbb{E}[z']Cov(\Psi,z'\Bar{X})
\end{align*}
because $\mathbb{E}[\Bar{X}]=0$. The first term is positive because $z$ is decreasing convex. The sign of the second term is indeterminate. The sign clearly depends on the size of $\pi^U$. it is positive if and only if the condition

\begin{align}
    \frac{1}{\pi^U}<-\frac{\left(\mathbb{E}[\Psi z']\right)^2-\mathbb{E}[z']\mathbb{E}\left([\Psi^2 z']\right)M_1+ \left(\mathbb{E}[z'\Bar{X}]\mathbb{E}[\psi z']-\mathbb{E}[z']\mathbb{E}[\Psi z'\Bar{X}]\right)F}{\mathbb{E}[z']}\equiv C(\Bar{X},h)
\end{align}
holds. So in summary,

\begin{prop}Suppose $U(x)=log(x)$ and $\Psi(x)=e^x$. The DNREE exists if and only if $\pi^U \lessgtr C(\Bar{X},h)^{-1}$ for all $h\in \mathbb{R}$. The stock price is strictly increasing in $h$ if and only if $\pi^U>C(\Bar{X},h)^{-1}$ for all $h\in \mathbb{R}$.
\end{prop}

This proposition states that the DNREE exists if the condition stated holds uniformly over $h\in \mathbb{R}$. If the condition fails, i.e., if $\pi^U$ exceeds the threshold for some values of $h$, but falls below for other values, there are multiple signals that can be inferred from the initial price. In this instance, the uninformed is unable form expectations about the asset's payoff, hence to formulate a proper demand function.
}

\nada{
\jd{
\section{General Informed Preferences}
For general informed and noise trading preferences the candidate public signal is
\begin{align}
H=&\omega_II_I\left(\lambda_I\frac{Z_T}{p_{C_I}(G_I-X_T)}\right)+\omega_NI_N\left(\lambda_N\frac{Z_T}{p_{C_I}(G_N-X_T)}\right)
\end{align}
where $\lambda_I=\lambda_I(G_I)$ and $\lambda_N=\lambda_N(G_N)$ are Lagrange multipliers that depend on the initial signals received. Now if $Z$ can be factored out into a common factor, as in the case of CRRA with $R_I=R_N=R$, we obtain
\begin{align}
H=&\omega_II_I\left(\lambda_I\frac{1}{p_{C_I}(G_I-X_T)}\right)+\omega_NI_N\left(\lambda_N\frac{1}{p_{C_I}(G_N-X_T)}\right)=F(G_I,G_N,X_T).
\end{align}
If furthermore $X$ can be factored out, as in the case of exponential distribution, we have the candidate market signal
\begin{align}
H=F(G_I,G_N)
\end{align}
where the map $F$ depends on the multipliers. The challenge is now to show that a PCE exists if this signal is announced at the outset. The secondary challenge is to show the price reveals $H$ yielding a DNREE.
}
\begin{thm}\label{nonexistence}
Suppose the map $F(X+\epsilon_I,X+\epsilon_N,X)$ generically depends on $X$: for every $H\in \mathbb{R}^d$ there is no dense set $\mathcal{X}$ of $X$ values such that $\partial_1 F+\partial_2F+\partial_3 F = 0$ for $X \in \mathcal{X}$, where $\partial_k F$ is the matrix of derivatives with respect to argument $k\in\{1,2,3\}$. Then a PCE generically fails to exist.
\end{thm}
\begin{proof} Fix $H$ and consider the map $H=F(X+\epsilon_I,X+\epsilon_N,X)$. By the implicit function theorem $\partial_1 F\partial \epsilon_I+\partial_2 F\partial \epsilon_N=0$, where $\partial_k F,k=1,2$ are matrices of first partial derivatives. We can then write
\begin{equation}
    \frac{\partial \epsilon_I}{\partial \epsilon_N}=-(\partial_1 F)^{-1}\partial_2 F; \hspace{1cm}\frac{\partial \epsilon_I}{\partial X}=-(\partial_1 F)^{-1}(\partial_1 F+\partial_2F+\partial_3 F)
\end{equation}
along the $d$-dimensional manifold $\epsilon_I=g(\epsilon_N,X)$. Full revelation fails if and only if $\partial_3 F=0$ for a dense set of $X$ values. By assumption this does not hold, implying $\epsilon_I$ is revealed for every possible value of $H$. Hence, the PCE fails to exist for all $H\in \mathbb{R}^d$.
\end{proof}
Failure of PCE existence stems from the fact that the candidate market signal can be inverted to retrieve the informed signal by considering different contingencies. Theorem \ref{nonexistence} implies the candidate market signal cannot depend on the underlying factors $X$.
\begin{rem}
    A sufficient condition for failure of full revelation, hence for existence of the PCE, is that $G_N$ be unbiased. In this case $G_N-X=\epsilon_N$, and since $G_I-X=\epsilon_I$ the map $F$ does not depend on $X$. Another example is CARA informed and noise trading preferences. Here $F=-\frac{1}{2}\left(\alpha_I (X-G_I)'P_I(X-G_I)+\alpha_N (X-G_N)'P_N(X-G_N)\right)=-\frac{1}{2}(\alpha X'PX-X'(\alpha_I  + \alpha_N))$
\end{rem}
Information Quality and Options, Joel M. Vanden, Review of Financial Studies 21(6), 2008, 2635–2676.}

\bibliographystyle{siam}

\bibliography{master}

\def\polhk#1{\setbox0=\hbox{#1}{\ooalign{\hidewidth
  \lower1.5ex\hbox{`}\hidewidth\crcr\unhbox0}}}
\begin{thebibliography}{10}

\bibitem{Aksamit2017}
{\sc A.~Aksamit and M.~Jeanblanc}, {\em Initial Enlargement}, Springer
  International Publishing, Cham, 2017, pp.~69--100.

\bibitem{MR1775229}
{\sc J.~Amendinger}, {\em Martingale representation theorems for initially
  enlarged filtrations}, Stochastic Process. Appl., 89 (2000), pp.~101--116.

\bibitem{MR1632213}
{\sc J.~Amendinger, P.~Imkeller, and M.~Schweizer}, {\em Additional logarithmic
  utility of an insider}, Stochastic Process. Appl., 75 (1998), pp.~263--286.

\bibitem{arrow1954}
{\sc K.~J. Arrow and G.~Debreu}, {\em Existence of an equilibrium for a
  competitive economy}, Econometrica, 22 (1954), pp.~265--290.

\bibitem{black1986noise}
{\sc F.~Black}, {\em Noise}, Journal of Finance, 41 (1986), pp.~528--543.

\bibitem{breon2015existence}
{\sc B.~Breon-Drish}, {\em On existence and uniqueness of equilibrium in a
  class of noisy rational expectations models}, The Review of Economic Studies,
  82 (2015), pp.~868--921.

\bibitem{chabakauri2021}
{\sc G.~Chabakauri, K.~Yuan, and K.~E. Zachariadis}, {\em Multi-asset noisy
  rational expectations equilibrium with contingent claims}, Review of Economic
  Studies, forthcoming (2021).

\bibitem{MR1949437}
{\sc R.-A. Dana and M.~Jeanblanc}, {\em Financial markets in continuous time},
  Springer Finance, Springer-Verlag, Berlin, 2003.
\newblock Translated from the 1998 French original by Anna Kennedy.

\bibitem{MR4192554}
{\sc J.~Detemple, M.~Rindisbacher, and S.~Robertson}, {\em Dynamic noisy
  rational expectations equilibrium with insider information}, Econometrica, 88
  (2020), pp.~2697--2737.

\bibitem{MR3706797}
{\sc N.~Esmaeeli and P.~Imkeller}, {\em American options with asymmetric
  information and reflected {BSDE}}, Bernoulli, 24 (2018), pp.~1394--1426.

\bibitem{MR3758346}
{\sc C.~Fontana}, {\em The strong predictable representation property in
  initially enlarged filtrations under the density hypothesis}, Stochastic
  Process. Appl., 128 (2018), pp.~1007--1033.

\bibitem{MR2233544}
{\sc D.~Gasbarra, E.~Valkeila, and L.~Vostrikova}, {\em Enlargement of
  filtration and additional information in pricing models: {B}ayesian
  approach}, in From stochastic calculus to mathematical finance, Springer,
  Berlin, 2006, pp.~257--285.

\bibitem{grossman1976efficiency}
{\sc S.~Grossman}, {\em On the efficiency of competitive stock markets where
  trades have diverse information}, Journal of Finance, 31 (1976),
  pp.~573--585.

\bibitem{hellwig1980aggregation}
{\sc M.~F. Hellwig}, {\em On the aggregation of information in competitive
  markets}, Journal of Economic Theory, 22 (1980), pp.~477--498.

\bibitem{hugonnier2012endogenous}
{\sc J.~Hugonnier, S.~Malamud, and E.~Trubowitz}, {\em Endogenous completeness
  of diffusion driven equilibrium markets}, Econometrica, 80 (2012),
  pp.~1249--1270.

\bibitem{Jacod1985}
{\sc J.~Jacod}, {\em Grossissement initial, hypothese (H') et theoreme de
  Girsanov}, Springer Berlin Heidelberg, Berlin, Heidelberg, 1985, pp.~15--35.

\bibitem{MR1121940}
{\sc I.~Karatzas and S.~E. Shreve}, {\em Brownian motion and stochastic
  calculus}, vol.~113 of Graduate Texts in Mathematics, Springer-Verlag, New
  York, second~ed., 1991.

\bibitem{MR1640352}
\leavevmode\vrule height 2pt depth -1.6pt width 23pt, {\em Methods of
  mathematical finance}, vol.~39 of Applications of Mathematics (New York),
  Springer-Verlag, New York, 1998.

\bibitem{MR3583456}
{\sc D.~Kramkov}, {\em Existence and uniqueness of {A}rrow-{D}ebreu equilibria
  with consumptions in {$\rm{L}^0_+$}}, Theory Probab. Appl., 60 (2016),
  pp.~688--695.

\bibitem{MR3131287}
{\sc D.~Kramkov and S.~Predoiu}, {\em Integral representation of martingales
  motivated by the problem of endogenous completeness in financial economics},
  Stochastic Process. Appl., 124 (2014), pp.~81--100.

\bibitem{MR1722287}
{\sc D.~Kramkov and W.~Schachermayer}, {\em The asymptotic elasticity of
  utility functions and optimal investment in incomplete markets}, Ann. Appl.
  Probab., 9 (1999), pp.~904--950.

\bibitem{MR2023886}
\leavevmode\vrule height 2pt depth -1.6pt width 23pt, {\em Necessary and
  sufficient conditions in the problem of optimal investment in incomplete
  markets}, Ann. Appl. Probab., 13 (2003), pp.~1504--1516.

\bibitem{mas1995microeconomic}
{\sc A.~Mas-Colell, M.~D. Whinston, J.~R. Green, et~al.}, {\em Microeconomic
  theory}, vol.~1, Oxford university press New York, 1995.

\bibitem{mckenzie1959}
{\sc L.~W. McKenzie}, {\em On the existence of general equilibrium for a
  competitive economy}, Econometrica, 27 (1959), pp.~54--71.

\bibitem{MR1418248}
{\sc I.~Pikovsky and I.~Karatzas}, {\em Anticipative portfolio optimization},
  Adv. in Appl. Probab., 28 (1996), pp.~1095--1122.

\bibitem{MR2273672}
{\sc P.~E. Protter}, {\em Stochastic integration and differential equations},
  vol.~21 of Stochastic Modelling and Applied Probability, Springer-Verlag,
  Berlin, 2005.
\newblock Second edition. Version 2.1, Corrected third printing.

\bibitem{radner1979rational}
{\sc R.~Radner}, {\em Rational expectations equilibrium: Generic existence and
  the information revealed by prices}, Econometrica, 47 (1979), pp.~655--678.

\bibitem{MR2057928}
{\sc S.~E. Shreve}, {\em Stochastic calculus for finance. {II}}, Springer
  Finance, Springer-Verlag, New York, 2004.
\newblock Continuous-time models.

\bibitem{MR510530}
{\sc C.~Stricker and M.~Yor}, {\em Calcul stochastique d\'ependant d'un
  param\`etre}, Z. Wahrsch. Verw. Gebiete, 45 (1978), pp.~109--133.

\bibitem{MR0400425}
{\sc D.~W. Stroock and S.~R.~S. Varadhan}, {\em On the support of diffusion
  processes with applications to the strong maximum principle}, in Proceedings
  of the {S}ixth {B}erkeley {S}ymposium on {M}athematical {S}tatistics and
  {P}robability ({U}niv. {C}alifornia, {B}erkeley, {C}alif., 1970/1971), {V}ol.
  {III}: {P}robability theory, 1972, pp.~333--359.

\bibitem{MR3989959}
{\sc J.~Wu and L.~Zhang}, {\em Backward uniqueness for general parabolic
  operators in the whole space}, Calc. Var. Partial Differential Equations, 58
  (2019), pp.~Paper No. 155, 19.

\end{thebibliography}

\end{document}